\newtheorem{definition}{Definition}
\newtheorem{lemma}{Lemma}
\newtheorem{thm}{Theorem}
\newtheorem{prop}{Proposition}
\newtheorem{remark}{Remark}
\newtheorem{example}{Example}
\def\argmax{\mathop{\rm arg\,max}}
\newtheorem{assumption}{Assumption}
\newtheorem{algorithm}{Algorithm}
\newcommand{\E}{\mathcal E} 
\newcommand{\cS}{\mathcal S}
\newcommand{\bne}[1]{Q_{\theta,#1}^{\mathit{BNE}}(x)}
\newcommand{\bce}[1]{Q_{\theta,#1}^{\mathit{BCE}}(x)}
\newcommand{\game}[1]{\Gamma^x(\theta,#1)}
\newcommand{\es}[2]{S^{#2}(#1)}
\newcommand{\ex}{\succeq_{Exp}}
\newcommand{\is}{\succeq}
\newcommand{\est}[1]
\newcommand{\wnorm}[1]{\|#1\|_{W_{x}}}
\newcommand{\ball}{\mathbb B_{x}}
\newcommand{\wnormn}[1]{\|#1\|_{W_{n,x}}}
\newcommand{\balln}{\mathbb B_{n,x}}
\newcommand{\vn}{V_{n,x}(\theta)}
\newcommand{\sv}{\Psi}
\title{Testing Information Ordering for Strategic Agents\thanks{We thank Camilla Roncoroni for inputs on an earlier version of this manuscript. We appreciate Gaurab Aryal, Debopam Bhattacharya, Tim Christensen, Phil Haile, Quang Vuong, J\"orgen Weibull and participants at the Cowles Conference, the Stochastic Dominance and Quantile-Based Methods in Financial Econometrics Workshop 2025 honoring Professors Oliver Linton and Yoon-Jae Whang, the Cemmap Workshop on Econometrics, Models of Strategic Interactions, and SWERIE 2025 for useful feedback. Kaido gratefully acknowledges financial support from NSF grants SES-1824344 and SES-2018498.}}
\author{Sukjin Han \\ School of Economics \\ University of Bristol \\ \url{vincent.han@bristol.ac.uk} \and 
Hiroaki Kaido \\ Department of Economics \\ Boston University \\ \url{hkaido@bu.edu} \and 
Lorenzo Magnolfi\\ Department of Economics \\ University of Wisconsin \\ \url{magnolfi@wisc.edu} }
\date{January 2026}
\begin{document}

\maketitle
\begin{abstract}
Specifying the information structure in strategic environments is difficult for empirical researchers. We develop a test of \emph{information ordering} that examines whether the true information structure is at least as informative as a proposed baseline. Using Bayes Correlated Equilibrium (BCE), we translate the ordering of information structures into testable moment inequalities and establish uniform asymptotic validity for our testing procedure. In an application to U.S.\ airline markets, we test whether hub airlines have informational advantages beyond cost and demand benefits. We reject the privileged information hypothesis, with rejections concentrated in large, competitive markets.\looseness=-1

\vspace{0.2in}

\noindent{\small \textbf{Keywords:} Information Structure, Bayes Correlated Equilibria, Semiparametric Tests.}
\end{abstract}

\clearpage
\onehalfspacing
\section{Introduction}

Many important economic interactions are strategic in nature. When analyzing data generated in such contexts, researchers often bring to data models of games to estimate primitives and perform counterfactual simulations. Examples include the analysis of firms' entry decisions \citep[e.g.,][]{berry1992estimation, ciliberto2009market} and bidding behavior in auctions \citep[e.g.,][]{haile2003inference}. One such key primitive is the information structure, i.e., a full description of the information available to players as they interact and generate the data. Information plays an important role in shaping outcomes: for example, potential entrants may receive a common signal about the unobserved profitability of a market, or certain bidders may know more than others about the realized valuations of bidders in an auction. Therefore, the information structure may be either of independent economic interest---e.g., because asymmetries in information may suggest that some firms have political connections \citep{magnolfi2016political,baltrunaite2020political} or engage in corruption---or crucial in evaluating counterfactual policy, as misspecifying the information structure may lead to misleading predictions.

The information structure that prevails in the strategic interaction is seldom known to the researcher. Thus, the researcher may need to infer information based on data on observable outcomes. However, the estimation of information structure is typically challenging. First, the space of possible information structures is potentially very large, as it encompasses any form of signals, with different degrees of informativeness, that players may receive. Restrictive parametrizations of information structures are thus likely to result in misspecification. Treating the information structure as a nonparametric object would be an alternative, but the high dimensionality of this object makes it infeasible with standard data used in applied research.

To make progress on this problem, we focus on testing hypotheses on a specific ordering of information structures. We consider models of discrete games, and adopt the orderings of information structures defined in \cite{Bergemann:2016aa}. Their results allow us to compare information structures and establish a precise notion of the \emph{ordering} of information structures.  Crucially, their results also imply that different levels of information have distinct implications for the observables, thus providing a basis for testing.  We can thus formulate statistical hypotheses to test whether the information structure that prevails in the data exceeds a certain baseline. 

Consider players who have access to an unknown information structure $\es{x}{}$ in a game with observable characteristics $x$, where the structure governs the precision of signals each player receives about underlying payoff states. Players play a Bayes-Nash equilibrium under $\es{x}{}$, and the researcher observes only their equilibrium actions. Rather than attempting to specify $\es{x}{}$ directly---a task that raises dimensionality and misspecification concerns---the researcher formulates a \emph{baseline} information structure $S^r(x)$ representing a minimal level of information that players are believed to possess, and tests whether the true structure is at least as informative. \cite{Bergemann:2016aa} show that, under a specific notion of ordering called \emph{individual sufficiency}, the ordering of information structures translates into an ordering of equilibrium predictions. Their solution concept, \emph{Bayes Correlated Equilibrium (BCE)}, is central to our approach for two reasons. First, BCE encompasses BNE in terms of predictions: if data are generated from a BNE under $\es{x}{}$, the resulting \emph{conditional choice probability} belongs to the set of BCE conditional choice probabilities under baseline $S^r(x)$. Second, the set of BCE predictions is computationally tractable because it is convex, characterized by linear inequalities and equalities.

We use the properties above to construct a test statistic for the null hypothesis that $S(x)$ that generates the conditional choice probability is as informative as $S^r(x)$. Our test statistic is based on the distance from the empirical conditional choice probability to the BCE prediction under the baseline information structure. This construction exploits the isometry between convex sets and their support functions. The null hypothesis is rejected when the test statistic exceeds a bootstrap critical value. The critical value is calculated so that the test is asymptotically valid uniformly over a large class of data-generating processes. According to our Monte Carlo experiments, the proposed test properly controls its size and can detect violations of the restrictions imposed by the null hypothesis.  Specifically, in an incomplete information game, we tested the null hypothesis that some payoff shocks are known to all players. The test showed monotonically increasing power when the conditional choice probabilities deviate locally from the BCE prediction under the null hypothesis. We also extend our approach to testing a sequence of hypotheses, which can help refine the understanding of a game's information structure. This approach applies multiple testing methods to the specification test with a partially identified model. By introducing the ordering of information structures in the sequence of hypotheses, we attain a simple powerful procedure for valid multiple testing without relying on conservative correction.

We apply our test to the U.S.\ airline industry to examine whether hub status confers informational advantages in addition to market power. Following deregulation in the 1970s, the industry rapidly evolved toward a hub-and-spoke network structure, where major airlines concentrate operations at specific airports. The economics literature has established that hub operations confer significant cost and demand advantages \citep{Berry:1996aa}. However, whether hub presence also provides superior information about market conditions and about the potential profitability of competitors remains an open question. Using data from the Department of Transportation's Origin and Destination Survey (DB1B) for the first quarter of 2019, we analyze 1,810 hub markets---routes where at least one endpoint serves as a hub for a major airline. We aggregate airlines into three strategic players: a Hub Major (American, Delta, or United with hub presence), Non-Hub Majors (major airlines without hub presence), and Non-Majors (low-cost carriers and regional airlines).

We test the null hypothesis that the true information structure is at least as informative as a \emph{privileged information} baseline, under which the hub airline observes the complete vector of payoff shocks while competitors observe only their own payoff realizations. Our results provide strong evidence against the privileged information hypothesis. Under the conservative Bonferroni correction, we reject the null in 2 of 16 market types. Under the less conservative Holm procedure, we reject in 7 market types. The pattern of rejections is informative: all rejections occur in markets characterized by large size or high hub airline presence. These findings suggest that hub airlines' entry decisions are based on less precise information about competitors than the privileged structure assumes, consistent with an information structure closer to incomplete information where each player observes primarily their own payoff shock. Our results have implications for counterfactual policy analysis, suggesting that researchers should be cautious about assuming that hub airlines have superior information when simulating entry behavior under alternative market conditions.

Beyond our application, we highlight the generality and potentially applicability of our method in two areas. First, one of the main practical hurdles when performing estimation under weak assumptions on information is that counterfactual prediction may be uninformative. Our testing approach may help produce sharper counterfactuals: when our test rejects null hypotheses in favor of a more informative information structure, which can then be used in counterfactuals, the researcher obtains a tighter set of predictions. Second, there is a range of strategic empirical contexts where our test can be applied to learn itneresting aspects of the information structure. For instance, \cite{magnolfi2016political} study entry in the Italian supermarket industry, where one player has political connections that can both affect payoffs directly and provide information about rivals' costs. Similarly, \cite{baltrunaite2020political} studies procurement auctions in which firms can buy preferential treatment via donations to politicians, with information on competitors' valuations or bids serving as the channel. Our test can be applied to such environments to detect the presence of political connections or corruption.

\subsection*{Related Literature}

Our paper contributes to the literature on econometric analysis of games \citep[see][for surveys]{dePaula2013ARE,Aradillas-Lopez:2020aa}. Within this literature, there has been growing attention to the role of assumptions on information. Earlier work typically assumed a specific information structure---either complete information, where all players observe the full vector of payoff shocks \citep[e.g.,][]{bresnahan1991empirical,Tam03,ciliberto2009market}, or incomplete information, where each player observes only their own payoff type \citep[e.g.,][]{seim2006empirical,Bajari:2007aa,sweeting2009strategic}. Since the predictions of these models generally differ, the choice of information structure is consequential for estimation and counterfactual analysis.

More recent work has proposed frameworks that relax strong assumptions on information. \cite{Grieco:2014aa} introduces a flexible parametric information structure that nests complete and incomplete information as special cases, enabling estimation and testing within a unified framework. A related strand of literature adopts Bayes Correlated Equilibrium (BCE) as a solution concept, which summarizes the behavioral implications of all information structures that are expansions of a specified baseline \citep{Bergemann:2016aa}. \cite{magnolfi2023estimation} show that BCE provides a tractable framework for estimation of discrete games when the researcher is agnostic about the true information structure, and apply their approach to radio advertising. \cite{Syrgkanis:2017aa} employ BCE in the context of common-value and private-value auctions. \cite{gualdani2024identification} study identification in single-agent discrete choice problems under BCE. In recent work, \cite{koh2023stable} introduces \emph{Bayes stable equilibrium} (BSE), which refines BCE by requiring that players have no incentives to deviate after observing opponents' realized actions, and estimates an entry model under this assumption.\footnote{Other studies use BCE (or related notions) in empirical work not for informational robustness but for other purposes, such as counterfactual prediction \citep{canen2023synthetic} or capturing strategic play by AI agents \citep{lomys2024structuralAI}.} While the existing literature focuses on estimation of payoff parameters under weak informational assumptions, our paper instead focuses on testing hypotheses about the information structure itself without relying on a parametric approach.

Our approach leverages the robust implications of an incomplete model that admits multiple equilibria with unknown information structures. The implied restrictions exploit the convexity of the BCE prediction and take the form of moment inequalities. There have been important developments on ways to systematically derive sharp identifying restrictions in models with set-valued predictions without making assumptions on unknown equilibrium selection mechanisms \citep{Tam03,beresteanu2011sharp,galichon2011set}. We utilize the support function of the BCE prediction to construct a test statistic, following the approach in \cite{beresteanu2011sharp} for models with convex predictions. We propose a new way to studentize the sample moments used to calculate the statistic. A well-known issue with inference based on moment inequalities is that test statistics may have different asymptotic distributions depending on the configuration of the data-generating process \citep[see][for a summary]{Canay_Shaikh_2017}. We adopt a moment selection procedure \citep{Andrews:2010aa} to ensure that the test is asymptotically valid uniformly across a wide range of data-generating processes.

Our empirical application contributes to the literature on competition and market structure in the airline industry  \citep{berry1992estimation,ciliberto2009market,ciliberto2021market}. An airline's profitability depends crucially on which competitors choose to serve the same route. \cite{Berry:1996aa} documents that hub airlines benefit from both demand-side advantages (through improved service quality for time-sensitive travelers) and cost-side economies stemming from increased flight frequency and operational efficiencies. Our analysis abstracts from the full network structure of airline operations (as modeled in \cite{bontemps2023online} and \cite{yuan2024network}) to focus specifically on whether hub status confers informational advantages beyond these well-documented cost and demand benefits.

The remainder of the paper is organized as follows. Section \ref{sec:model} describes the model of an empirical discrete game, including payoff structure and information structure. Section \ref{sec:comparing_info} introduces methods for comparing information structures via expansion and individual sufficiency orderings. Section \ref{sec:test} develops our testing framework, including the test statistic, bootstrap procedure, and asymptotic properties, while Section \ref{sec:extensions} contains extensions to our procedure. Section \ref{sec:simulation} presents Monte Carlo evidence on the finite-sample performance of the test. Section \ref{sec:application} applies our methodology to the U.S.\ airline industry. Section \ref{sec:conclusion} concludes.

\onehalfspacing
\section{Model}\label{sec:model}
We describe a general model of an empirical discrete game, the environment where we develop our testing procedure. This model is similar to the one described in \cite{magnolfi2023estimation}. Games in the class we describe are indexed by realizations of covariates $x\in X$. Players are indexed in a finite set $N\equiv \{1,...,|N|\}$. Each player $i\in N$ chooses an action $y_{i}\in Y_{i}$, a discrete set. Both the actions' set $Y\equiv\times_{i\in N}Y_{i}$ and $N$ do not depend on $x$. All these aspects of the game are common knowledge among players and known to the researcher. The researcher jointly observes actions $y\in Y$ and covariates $x$. 
Next, we describe payoff structure and information structure of the game.

\subsection{Payoff Structure and Information Structure}

Player $i$ is characterized by a  payoff type $\varepsilon_{i}\in\E_{i}$. The vector of payoff types $\varepsilon\equiv(\varepsilon_{i})_{i\in N}$ is distributed according to the cumulative distribution function (CDF) $F\left(\cdot;\theta_{\varepsilon}\right)$, which is known to the researcher up to the finite-dimensional vector $\theta_{\varepsilon}$. Payoffs to player $i$, are denoted by $\pi_{i},$ and are realized according to the function $\pi_{i}(\cdot,\cdot;x,\theta_{\pi}):Y\times\E_{i}\rightarrow\mathbb{R}.$ We assume that payoff types $\varepsilon$ are independent of covariates $x.$ A realization of $x$ and a vector of parameters $\theta\equiv\left(\theta_{\pi},\theta_{\varepsilon}\right)\in\Theta$ fully characterize the payoff structure of the game.

We assume that every player $i$ knows the parameters and $x$. 
Each player also observes a private random signal $\tau_{i}^{x} \in \mathcal{T}_i$, which may carry information on the vector of payoff types $\varepsilon.$ Let $\mathcal T=\times_{i\in N}\mathcal T_i$ be the space of signals equipped with a $\sigma$-algebra $\mathcal F$.\footnote{In what follows, we tacitly assume the signals are defined on the measurable space $(\mathcal T,\mathcal F)$ but omit the underlying $\sigma$-algebra from the definition of information structure for notational simplicity.}

We define a (generic) \emph{information structure} $S$ as a mapping from covariate values to conditional distributions of signals:
\[
S:\; x \mapsto \Bigl(\mathcal T_x,\{P^{x}_{\tau\mid \varepsilon}:\varepsilon\in\E\}\Bigr),
\]
where $\{P^{x}_{\tau\mid \varepsilon}:\varepsilon\in\E\}$ is a probability kernel governing the distribution of the signal vector $\tau^{x}$ conditional on each realization of $\varepsilon$. The set $\mathcal T_x\subset\mathcal T$ denotes the support of these conditional distributions.

Allowing $S$ to depend on $x$ reflects the possibility that the informational environment varies with observable covariates. In what follows, we write $\game{S}$ for the game with covariates $x$, payoff parameters $\theta$, and information structure $S(x)$.\\

\begin{example}\rm
\label{ex:entry_game} Consider the entry game first proposed in  \cite{bresnahan1991empirical}, where players are two firms that are potential entrants in a market, and choose to either ``Enter'' or ``Not enter'', corresponding to $y_{i}=1$ and $y_{i}=0$, respectively. The researcher observes entry choices across a set of markets with covariates $x.$ Firm $i$'s profits are zero when not entering, and $\pi_{i}(y,\varepsilon_{i};x,\theta_{\pi})=x^{\prime}\beta + \Delta y_{-i}+\varepsilon_{i}$ upon entry.
The payoff types (e.g., cost shifters) $\varepsilon_{i}$ are unobservable to the researcher;  an information structure $S$ specifies the information that player $i$ has on its opponent's $\varepsilon_{-i}$. In addition to knowing its own payoff type, firm $i$ may have access to a noisy measurement  of the opponent firms' unobserved cost shifters $\varepsilon_{-i}\in \mathbb R^{|N|-1}$. For example, a firm with a larger market presence may have more information about market profitability and costs faced by other firms.  One may model the signal as a random vector $\tau^x_i\in\mathcal T^x=\mathbb R^{|N|}$ following an unknown conditional distribution $P^x_{\tau|\varepsilon}$.
\end{example}

\begin{table}[htbp]
\centering
\begin{threeparttable}
\caption{Examples of Information Structure}
\label{tab:ex_info_structure}
\small
\begin{tabular}{lll}
\hline\hline
Structure & Notation & Decription \\
\hline
Null information &
$S_{\text{Null}}$ &
$\tau_i \perp \varepsilon$ for all $i$ \\[0.6em]

Incomplete information &
$S_I$ &
$\tau_i$ reveals $\varepsilon_i$; partially informative on $\varepsilon_{-i}$ \\[0.6em]

Complete information &
$S_C$ &
$\tau_i$ fully reveals $\varepsilon$ \\[0.6em]

Privileged information &
$S_P$ &
$\tau_1$ reveals $(\varepsilon_1,\varepsilon_2)$; $\tau_2$ reveals $\varepsilon_2$ only \\[0.6em]

Public signals &
$S_{\text{Pub}}$ &
$\tau_i$ reveals $\nu_{-i}$ for all $i$, where $\varepsilon_i=\nu_i+\epsilon_i$ \\
\hline\hline
\end{tabular}

\begin{tablenotes}[flushleft]
\small
\item \textit{Notes:} Dependence of signal kernels on covariates $x$ is omitted for notational simplicity. The privileged information structure is defined for a two-player game.
\end{tablenotes}
\end{threeparttable}
\end{table}

We introduce here some special cases of information structure that are useful in what follows (summarized in Table \ref{tab:ex_info_structure}).  We call the \textit{null information} structure $S_{Null}$ the one characterized by fully uninformative signals, so that each $\tau_i$ does not affect players' beliefs on the realization of $\varepsilon$. In terms of the conditional law of the signal, $P^x_{\tau|\varepsilon}$ corresponds to the setting where $\tau$ is independent of $\varepsilon$ for all $x$.
We define the \textit{incomplete information} structure $S_I$ as the one characterized by signals $\tau_i$ that reveal the realization of $\varepsilon_i,$ but are only partially informative on  $\varepsilon_{-i}$. Coupled with the assumption of independent types, this assumption is adopted in seminal work on social interaction and econometrics of games \citep[e.g., ][]{brock2001discrete,seim2006empirical}. 
Conversely, in a game with the \textit{complete information} structure $S_C$, the signal $\tau_i$ is fully informative on the vector $\varepsilon$ \citep[as in e.g.,][]{bresnahan1991empirical,Tam03}. Next, consider the \textit{privileged information} structure $S_P$. To simplify the discussion, we define this information structure in the context of the two players game of Example 1; extensions to more general games are immediate. In $S_P$ player 1 has more information than player 2. In this case,  $\tau_2$ is only informative on  $\varepsilon_{2}$, whereas $\tau_1$ fully reveals  $\varepsilon=(\varepsilon_{1},\varepsilon_{2})$. Finally, suppose that payoff types for each player $i$ are $\varepsilon_i = \nu_i + \epsilon_i$. We define the \textit{public signals} information structure $S_{Pub}$ as one where signals $\tau_i$ reveal the opponent's shock $\nu_{-i}$ symmetrically for each player.

Notice that, for notational simplicity, in describing the information structures above we omit the dependence on $x$ of signal kernels and support sets. An information structure $S(x)$ could feature, for different values of $x$, any of the special information structures described above.

\subsection{Equilibrium and Predictions}

 We define a notion of equilibrium in our model, namely the Bayes-Nash equilibrium (BNE), that links the primitives of game $\Gamma^{x}(\theta,S)$ to players' actions. Unlike in most of the previous literature where the researcher specifies $S$ to utilize the BNE, this paper's approach is to test for $S$ assuming that actions are generated from the BNE.
 
\begin{definition}[Bayes Nash Equilibrium] Let $S$ be an information structure. A strategy profile $\sigma=\left(\sigma_{1},\dots,\sigma_{|N|}\right)$ is a Bayes Nash Equilibrium (BNE) of the game $\Gamma^{x}(\theta,S)$ if for every $i\in N,$ $\varepsilon_{i}\in\E_{i}$ and $\tau_{i}\in \mathcal{T}_{i}^{x}$ we have that, whenever for some $y_{i}\in Y_{i}$ the corresponding $\sigma_{i}\left(y_{i}\mid\varepsilon_{i},t_{i}\right)>0,$ then :
\begin{align*}
E_{\sigma_{-i}}\left[\pi_{i}\left(y_{i},y_{-i},\varepsilon_{i};x,\theta_{\pi}\right)\mid\varepsilon_{i},\tau_{i}\right] & \geq E_{\sigma_{-i}}\left[\pi_{i}\left(y_{i}^{\prime},y_{-i},\varepsilon_{i};x,\theta_{\pi}\right)\mid\varepsilon_{i},\tau_{i}\right],\quad\forall\:y_{i}^{\prime}\in Y_{i},
\end{align*}
where the expectation of $y_{-i}$ is taken with respect to the distribution of equilibrium play $\sigma_{-i}\left(y_{-i}\mid\varepsilon_{j},\tau_{j}\right)=\prod_{j\neq i}\sigma_{j}\left(y_{j}\mid\varepsilon_{j},\tau_{j}\right).$
\end{definition}
Let the set $\mathit{BNE^{x}}(\theta,S)$ be the set of Bayes-Nash equilibria for game $\game{S}$. The following set collects all conditional outcome distributions compatible with some BNE in the game. 
\begin{definition}[BNE Predictions] For a game $\game{S},$
the set of \emph{BNE predictions} is 
\[
\bne{S}=\Big\{q(\cdot|x)\in\Delta^{|Y|}: q\left(y|x\right)=E[\sigma(y|\varepsilon,\tau)|x],\sigma\in\mathit{BNE^{x}}(\theta,S)\Big\},
\]
where expectation $E[\cdot|x]$ is with respect to the conditional distribution of $(\varepsilon,\tau)$ determined by the signal distribution 
 $P^x_{\tau|\varepsilon}$ and prior $F(\cdot;\theta_\varepsilon)$.
\end{definition}
We assume that players play a Bayesian Nash equilibrium under an unknown information structure $S$ (see Assumption~\ref{as:Model} below). Precisely characterizing the resulting BNE predictions is challenging for two reasons. First, it requires specifying the underlying information structure $S$. Second, even given $S$, computing $\bne{S}$ entails characterizing all fixed points of the best-response correspondence. These requirements make it difficult to test restrictions on $S$ directly using $\bne{S}$.

To address these challenges, below in Section \ref{sec:test}, we introduce an alternative set of predictions based on the notion of \emph{Bayes correlated equilibrium}. This concept serves as a useful device for two reasons. First, it is compatible with any BNE outcome generated under an information structure that is at least as informative as a given baseline. Second, the resulting prediction set admits a computationally tractable characterization.

\section{Comparing Information Structures}\label{sec:comparing_info}
To make progress towards our goal of using the data to test hypotheses on the information structure of the game, we need to be able to have a rigorous way of comparing information structures. This is a complex task. Fix a value of $x$ and consider, for instance, two information structures for the same collection of games:
\begin{eqnarray*}
\es{x}{1} & = & \bigl(\mathcal{T}^{1,x},\{ P_{\tau|\varepsilon}^{1,x}:\:\varepsilon\in\E\} \bigr)\\
\es{x}{2} & = & \bigl(\mathcal{T}^{2,x},\{ P_{\tau|\varepsilon}^{2,x}:\:\varepsilon\in\E\} \bigr).
\end{eqnarray*}
Each information structure can be a complex, high-dimensional object, so that it is not immediate to compare $\es{x}{1}$ and $\es{x}{2}$. More precisely, we would want to form an \emph{ordering} among information structures, as to give a rigorous meaning to the statement ``$\es{x}{1}$ is more informative than $\es{x}{2}$.'' We introduce two interrelated notions of ordering in this subsection.

The first notion of ordering starts from a simple consideration: if there is an information structure that contains all the information present in $\es{x}{1}$ and $\es{x}{2}$, then this new combined information structure is clearly at least as informative as either $\es{x}{1}$ or $\es{x}{2}$. We thus follow \cite{Bergemann:2016aa} and define combinations of information structures. For this, we assume the space $\mathcal T$ is finite.
\begin{definition}
[Combination] The information structure at $x$
\[
\es{x}{*}=\bigl(\mathcal{T}^{*,x},\left\{ P_{\tau^*|\varepsilon}^{x}:\:\varepsilon\in\E\right\} \bigr)
\]
is a combination of $\es{x}{1}$ and $\es{x}{2}$ if 
\begin{eqnarray*}
\mathcal{T}^{*,x}&=& \prod_{i=1}^{|N|} \mathcal{T}_{i}^{*,x}\\
\mathcal{T}_{i}^{*,x} & = & \mathcal{T}_{i}^{1,x}\times \mathcal{T}_{i}^{2,x},\quad i \in N.\\
\sum_{\tau^{2}}P_{\tau^{*}|\varepsilon}^{x}(\tau^{1},\tau^{2}|\varepsilon) & = & P_{\tau^{1}|\varepsilon}^{x}(\tau^{1}|\varepsilon)\quad\text{for each \ensuremath{\tau^{1}}},\\
\sum_{\tau^{1}}P_{\tau^{*}|\varepsilon}^{x}(\tau^{1},\tau^{2}|\varepsilon) & = & P_{\tau^{2}|\varepsilon}^{x}(\tau^{2}|\varepsilon)\quad\text{for each \ensuremath{\tau^{2}}}.
\end{eqnarray*}
\end{definition}

Intuitively, the combined information structure $\es{x}{*}$ gives players access to both $\es{x}{1}$ and $\es{x}{2}$. Based on the definition of combination, we can also define, for any information structure $\es{x}{1},$ its \emph{expansions}. 
\begin{definition}[Expansion]
An information structure $\es{x}{*}$ is an expansion of $\es{x}{1}$ if it is a combination of $\es{x}{1}$ and $\es{x}{2}$ for some $\es{x}{2}$. We write $S^{*}\ex S^{1}$ if $S^{*}(x)$ is an expansion for $S^{1}(x)$ for all $x\in X$.
\end{definition}
Clearly, if $\es{x}{*}$ is an expansion of $\es{x}{1},$ it is more informative than $\es{x}{1}$ for the players that observe its signals. 
We consider settings in which players of a game are believed to have access to at least a baseline information structure. For this purpose, we define a set of information structures that are at least as informative as some baseline.
\begin{definition}[Set of Expansions]
For any $S_B$, $\cS(S_B)$ is the set of all information structures that are expansions of $S_B$.   
\end{definition}

When constructing $\cS(S_B)$, we refer to $S_B$ as the \textit{baseline} information structure for the set---in fact, any information structure in $\cS(S_B)$ will be at least as informative as $S_B$ in the sense of expansions. Moreover, notice that in our context, every information structure is an expansion of the null information $S_{\mathrm{Null}}$. Thus, we define $\cS = \cS(S_{\mathrm{Null}})$ as the universe of information structures that we consider. In the context of specification testing, we consider strict subsets of $\cS$ as potential baseline information structures.

We also consider a second notion of ordering of information structures to make further progress. 
\cite{Bergemann:2016aa} present the following intuitive way to (partially) order information structures:
\begin{definition}
[Individual Sufficiency] $\es{x}{1}$ is individually sufficient for $\es{x}{2}$ if there exist a combined information structure $\es{x}{*}$ such that, for each $i$ and measurable set $A\subset \mathcal T_i$,
\[
P^{*,x}(\tau_{i}^{2}\in A|\tau_{i}^{1},\tau_{-i}^{1},\varepsilon)=P^{*,x}(\tau_{i}^{2}\in A|\tau_{i}^{1}),~a.s.
\]
where the probability of $P^{*,x}(\tau_{i}^{2}|\tau_{i}^{1},\tau_{-i}^{1},\varepsilon)$ is computed using the combined kernel $P_{\tau^{*}|\varepsilon}^{*,x}$. 
\end{definition}
In short, $\es{x}{1}$ is individually sufficient for $\es{x}{2}$ if there is a combined information structure of the two, in which, for each $i$,
\begin{align}
    \tau_{i}^{2} \perp (\tau_{-i}^{1},\varepsilon)|\tau_{i}^{1}.
\end{align}
Hence, a player, given his signal $\tau_{i}^{1}$, and given he knows the information structures $\es{x}{1}$ and $\es{x}{2},$ is able to compute what signal he would have received according to the information structure $\es{x}{2}.$ The individual sufficiency property reduces to \emph{Blackwell sufficiency} in the one-player case. Clearly, the ordering of information structures based on individual sufficiency is a partial one. Let us introduce the ordering of information structures based on the individual sufficiency. 
\begin{definition}[]
An information structure $S^{1}$ is individually sufficient for $S^{2}$ if  $S^{1}(x)$ is individually sufficient for $S^{2}(x)$ for all $x\in X$. We write $S^{1}\is S^{2}$ whenever $S^{1}$ is individually sufficient for $S^{2}$.
\end{definition}

\begin{remark}\rm
 The individual sufficiency ordering above is defined by comparing $S^1(x)$ and $S^2(x)$ for all $x \in X$. 
If instead one is interested in ordering information structures over a specific subset $\tilde X \subset X$, 
one can define the restricted ordering $S^1 \is_{\tilde X} S^2$ by requiring that $S^1(x)$ be individually sufficient for $S^2(x)$ for every $x \in \tilde X$.
\end{remark}

\noindent\textbf{Example 2:} Recall the \textit{public signals} information structure $S_{Pub}$ where signals $\tau_i$ reveal the opponent's shock $\nu_{-i}$ symmetrically for each player in payoff $\varepsilon_i = \nu_i + \epsilon_i$. Clearly, this information structure represents an expansion of $S_I$, and is individually sufficient for $S_I$ so that we can write $S_{Pub} \is S_I$. \\

There is a tight relation between the partial orders of information structures induced by expansion and individual sufficiency. In fact, these two concepts are interchangeable if we consider as the same all the information structures that have the same canonical representation \citep{mertens1985formulation}, or induce the same beliefs about the state. This assumption, which refines the space of all information structures, is natural in our context. Ultimately, we are interested in testing hypotheses on information using data, and it is not possible to distinguish empirically two information structures that induce the same beliefs, and hence the same actions. Hence, if we restrict our attention to information structures with different canonical representation, Claim 1 in \cite{Bergemann:2016aa} implies:
\begin{lemma}\label{lem:IS_exp}
For any baseline information structure $S_B$, we let the set of information structures $\cS^{*}(S_B)$ be the subset of $\cS(S_B)$ such that any two $S, S^{\prime}\in \cS^{*}(S_B)$ have a different canonical representation. Then, for any $S,S'\in \cS^*(S_B)$, $S\is S^{\prime}$ if and only if $S \ex S^{\prime}$. 
\end{lemma}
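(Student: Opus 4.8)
The plan is to prove the two implications separately and to observe that restricting attention to $\cS^{*}(S_B)$---a single representative per canonical-representation class---is precisely what turns the ``equivalence up to canonical representation'' of \cite{Bergemann:2016aa}, Claim 1, into a genuine biconditional. The direction $S \ex S' \Rightarrow S \is S'$ requires no appeal to canonical representations, so I would dispatch it first. If $S \ex S'$, then by definition $S(x)$ is a combination of $S'(x)$ and some $S^2(x)$, so each player's signal space factors as $\mathcal T_i^{S,x} = \mathcal T_i^{S',x}\times \mathcal T_i^{2,x}$ and the signal decomposes as $\tau_i^{S}=(\tau_i^{S'},\tau_i^{2})$ with the $S'$-marginal preserved by the combination identities. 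To witness individual sufficiency I would take the combined structure to be the diagonal coupling in which the $S'$-signal equals the $S'$-coordinate already contained in $\tau^{S}$; its marginals match $P^{S,x}$ and $P^{S',x}$, so it is an admissible combination. Since in this coupling $\tau_i^{S'}$ is a coordinate projection of $\tau_i^{S}$, hence $\sigma(\tau_i^{S})$-measurable, both sides of the defining identity collapse to the indicator $\mathbbm 1\{\tau_i^{S'}\in A\}$, so the required conditional independence $\tau_i^{S'}\perp(\tau_{-i}^{S},\varepsilon)\mid\tau_i^{S}$ holds trivially and $S\is S'$ follows.

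For the converse $S \is S' \Rightarrow S \ex S'$, which is where Claim 1 of \cite{Bergemann:2016aa} does the real work, I would start from a combination $S^{*}$ witnessing individual sufficiency, so that $\tau_i^{S'}\perp(\tau_{-i}^{S},\varepsilon)\mid\tau_i^{S}$ for every $i$. This says each player can regenerate their $S'$-signal from their own $S$-signal through a local, state-independent garbling kernel $g_i(\cdot\mid\tau_i^{S})=P^{*,x}(\tau_i^{S'}\in\cdot\mid\tau_i^{S})$. Using these kernels I would construct an auxiliary structure $S^2$ and an information structure $\tilde S$ that is \emph{literally} a combination of $S'$ and $S^2$---hence an honest expansion of $S'$---and then verify that $\tilde S$ induces the same belief hierarchy, i.e.\ the same canonical representation, as $S$. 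This reduction is exactly the content of Claim 1 in \cite{Bergemann:2016aa} specialized to our finite-signal, per-player product setting, so the only genuinely new verification is that our definitions of combination, expansion, and individual sufficiency coincide with theirs, allowing their result to be invoked verbatim.

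With $\tilde S$ in hand I would close the argument inside $\cS^{*}(S_B)$. Since $\tilde S$ expands $S'$ and $S'$ expands $S_B$, transitivity of the expansion operation (a combination of a combination is again a combination, with the $S_B$-marginal preserved) gives $\tilde S\in\cS(S_B)$; moreover $\tilde S$ shares a canonical representation with $S\in\cS^{*}(S_B)$, so the two lie in the same class. Because $\cS^{*}(S_B)$ retains a single representative per class and $S$ is that representative, the expansion relation $\tilde S\ex S'$ is inherited by $S$, yielding $S\ex S'$ as a statement about representatives. The hard part will be the canonical-representation check in the previous paragraph: one must translate the almost-sure, player-by-player conditional-independence statement into an equality of the joint laws of (state, full belief hierarchy), keeping in mind that individual sufficiency constrains only each $\tau_i^{S'}$ relative to $\tau_i^{S}$ and says nothing directly about the joint law of the $S'$-signals across players. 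Managing this cleanly is exactly why I would lean on \cite{Bergemann:2016aa}, Claim 1, rather than reprove it; the added value of the lemma is only that passing to $\cS^{*}(S_B)$ removes the ``up-to-canonical-representation'' qualifier and delivers the stated equivalence.
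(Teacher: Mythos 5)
Your proposal is correct and takes essentially the same route as the paper, which offers no separate proof and derives the lemma directly from Claim 1 of \cite{Bergemann:2016aa} once information structures with the same canonical representation are identified. Your additional detail---the explicit diagonal-coupling argument for $S \ex S' \Rightarrow S \is S'$ and the bookkeeping that transfers $\tilde S \ex S'$ to the representative $S$ within $\cS^{*}(S_B)$---simply makes explicit what the paper leaves implicit in invoking that claim.
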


\noindent Lemma \ref{lem:IS_exp} defines the set $\cS^{*}(S_B)$ as the collection of all possible information structures $S$ that have a different canonical representation and are expansions of the baseline $S_B$. Within such sets $\cS^{*}(S_B)$, the partial orders implied by expansion and individual sufficiency are equivalent. We take $\cS^*\equiv\cS^*(S_{\mathrm{Null}})$ as the parameter space for the unknown information structure.\\

\section{Tests of Information Ordering}\label{sec:test}

We now ready to introduce the framework of hypothesis testing. Although specifying the game's true information structure $S\in \cS^*$ is difficult, the researcher may want to test that it is at least as informative as some \emph{baseline information structure} $S^r$. Formally, we consider testing
\begin{align}
H_{0}:  S\succeq S^{r}~~~\text{vs}~~~
H_{1}:  S\nsucceq S^{r}.\label{eq:h0h1} 
\end{align}
Note that \eqref{eq:h0h1} is a one-sided hypothesis regarding the unknown information structure $S$.

For example, one may test whether one of the players has privileged information in specific markets by choosing $S^r$ properly, as in Example 3. An important aspect of our environment is its generality, which allows for the possibility of accommodating and testing information structures that vary across markets. As $S$ specifies an information structure for every game $x$, the distribution of signals $P_{\tau|\varepsilon}^{x}$ may vary across markets $x$ for a fixed $S$ in the DGP. Consider now a restriction $S^{r}$, testing a statement such as $S\is S^{r}$: this restricts the heterogeneity of information structures across markets, as we are implicitly imposing $S(x)\succeq S^{r}$ for all $x$. In this case, the restriction in $S^{r}$ is the same across all markets as in previous examples (e.g., complete information for every $x$). However, $S^{r}$ may also be heterogeneous across markets, in which case we test the null hypothesis $H_{0,x}:  S(x)\succeq S^{r}(x)$ for given $x\in X$; see Section \ref{subsec:CS_markets} for this extension. Examples 3 and 4 below consider this type of null hypothesis.
\\

\noindent \textbf{Example 3:}  Consider the empirical setting of entry in airline markets similar to  \cite{ciliberto2009market}, where each market $m$ is a city-pair, and airlines are players in a binary entry game. As in our application of Section \ref{sec:application}, we focus on markets where at least one endpoint is a hub for a major airline, and we consider three separate players in each market: the \emph{Hub Major} player represents major airlines with a hub at least one endpoint,  the \emph{Non-Hub Major}, and \emph{Non-Major} player (which includes low-cost carriers). 
Moreover, we include in the vector $x_{m}$ airline-specific shifters of profitability such as measures of market presence and market size. Although $S(x)$ may depend on  $x_{m}$, we specify a restriction on information $S^r$ that does not depend on covariates. Specifically, for each market $m$, $S^r$ prescribes that the hub airline player has complete information, whereas the other two players have incomplete information, receiving uninformative signals. In essence, this specification of $S^r$ corresponds to the privileged information structure $S_{P}$ defined in Section 2.1, and allows the researcher to test whether hubs confer informational advantages over competitors' cost and profit. 

\noindent\textbf{Example 4:} Consider the empirical setting of entry in grocery markets similar to  \cite{magnolfi2016political}, where supermarket firms are potential entrants in each geographic market $m$. Because firms may have political connections that affect their entry costs, the vector of market and firm-specific covariates includes $x_{i,m}^{Conn},$ a measure of the intensity of the connection. In addition, political connections may also affect the information structure of the game, e.g., by providing connected players with superior information about rivals' costs. More concretely, we could test a restriction on information $S^r(x)$ such that players whose $x_{i,m}^{Conn}$ exceeds a certain threshold have complete information, whereas their competitors with low $x_{i,m}^{Conn}$ have incomplete information. In this framework, the identity of connected players may vary across markets according to  $x_{i,m}^{Conn}$, and more than one player may be connected in each market.\\

We note that our formulation of null and alternative in Equation \eqref{eq:h0h1} is well suited for some purposes (e.g., it makes sense to test the null hypothesis of a restriction on information if a researcher wants to analyze counterfactuals based on that restriction), but may not be appropriate for others. For instance, in Example 4 above, we would not impose the restriction of corruption or political connections as the null if the results are to be used to support an investigation into criminal behavior. Other formulations of the null are possible, and our framework may be extended to accommodate them.

\subsection{Equilibrium Concept and Predictions Robust to Information Specification}
Let us investigate the empirical contents of the null hypothesis. Given the hypothesis \eqref{eq:h0h1}, is it then possible to link the ordering of information structures to predictions in a way that enables researchers to test it? The answer is affirmative. For this,
we employ a solution concept whose predictions respect the ordering of information structures. This alternative solution concept is called \emph{Bayes correlated equilibrium} (BCE).\footnote{This definition is slightly different than the one in \cite{Bergemann:2016aa}. They define a decision rule to be a mapping
\[
\sigma:\mathcal{T}\times{\cal E}\rightarrow\Delta(Y)
\]
 So a decision rule is a kernel that describes the probability of every action profile for a given vector of payoff types/signals $\left(\varepsilon,\tau\right)$.
Hence, in \cite{Bergemann:2016aa}, a BCE is a decision rule that satisfies obedience. To define obedience, they combine the decision rule with the common prior defined on payoff types $\varepsilon$ and the prior distribution over signals $\tau$. In their definition, \cite{Bergemann:2016aa} do not need to explicitly impose consistency as an equilibrium requirement (it is implicit in their setting). Instead, we define a BCE to be a distribution over actions, payoff types and types, and require consistency to hold.} This equilibrium concept describes the implications for observables while staying agnostic about the underlying information structure. This equilibrium concept is the basis of our test statistic construction. Working with the BCE prediction is attractive because it respects the ordering of the information structures via set inclusion relationships. This is not necessarily the case if we were to work directly with the BNE prediction. In what follows, we let $\mathcal P_{W}$ denote the space of probability distributions defined on the underlying space $W$. For example,  $\mathcal{P}_{Y\times {\cal E}\times \mathcal{T}}$ represents the set of joint distributions of $(y,\epsilon,\tau)$.
\begin{definition}[Bayes Correlated Equilibrium] A Bayes Correlated Equilibrium $\nu^x\in\mathcal{P}_{Y\times {\cal E}\times \mathcal{T}}$ for the game $\game{S}$ is a probability measure
$\nu$ over action profiles, payoff types, and signals, which is
\begin{enumerate}
 \item \emph{Consistent with the prior};
\begin{align*}
\int_{A}\int_{Y}\nu^x(dy,d\varepsilon,dt)=\int_AP^x_{\tau|\varepsilon}(dt|\varepsilon)F(d\varepsilon;\theta_\varepsilon),
\end{align*}
 for any measurable $A\subseteq \mathcal E\times \mathcal T$; and
\item \emph{Incentive compatible}; for all $i,\varepsilon_{i},\tau_{i},y_{i}\text{ such that }\nu^{x}\left(y_{i}\mid\varepsilon_{i},\tau_{i}\right)>0,$
\begin{equation*}
E_{\nu}\left[\pi_{i}\left(y_{i},y_{-i},\varepsilon_{i};x,\theta_{\pi}\right)\mid y_{i},\varepsilon_{i},\tau_{i}\right]\geq E_{\nu}\left[\pi_{i}\left(y'_{i},y_{-i},\varepsilon_{i};x,\theta_{\pi}\right)\mid y_{i},\varepsilon_{i},\tau_{i}\right],\qquad\forall y_{i}^{\prime}\in Y_{i},
\end{equation*}
where the expectation operator $E_{\nu}\left[\cdot\mid y_{i},\varepsilon_{i},\tau_{i}\right]$
is taken with respect to the conditional equilibrium distribution
$\nu^x\left(y_{-i},\varepsilon_{-i},\tau_{-i}\mid y_{i},\varepsilon_{i},\tau_{i}\right).$
\end{enumerate}
\end{definition}
We interpret a Bayes correlated equilibrium as follows. Individuals are assumed to play a BNE under an unknown information structure $S$. From the analyst’s perspective, the resulting behavior admits the following equivalent description.
\begin{enumerate}
    \item There is a baseline information structure $S^{r}$. Players may possess additional information beyond that contained in $S^{r}$.
    \item A \emph{mediator} observes $\varepsilon \sim F(\cdot;\theta_{\pi})$ and a signal realization $\tau \sim P^{x}_{\tau\mid\varepsilon}$ generated under $S^{r}$.
    \item Conditional on $(\tau,\varepsilon)$, the mediator draws an action profile $y \sim \nu^{x}(\cdot\mid\tau,\varepsilon)$ and privately recommends action $y_i$ to each player $i$.
    \item Players \emph{obey} the mediator’s recommendations.
\end{enumerate}
This representation is convenient because the analyst need not specify the precise form of the true information structure $S$ as long as $S$ is at least as informative as the baseline structure $S^{r}$. In particular, as shown in Lemma 2 below, for every BNE outcome induced by some information structure $S \succeq S^{r}$, there exists a corresponding BCE outcome constructed under $S^{r}$ that generates the same distribution of observables.

Let $\mathit{BCE^{x}}(\theta,S)\subset \mathcal{P}_{Y,{\cal E},\mathcal{T}}$ denote the set of all BCE distributions for the game $\game{S}.$ A BCE distribution is a complex object describing the joint distribution of actions and payoff types. For any such distribution $\nu$ we define a compatible  \emph{BCE prediction} $\bce{S}$. This set collects the distributions of the observables that are compatible with BCE \citep{magnolfi2023estimation}. Formally:
\begin{definition}[BCE Predictions] For a game $\game{S},$
the set of \emph{BCE predictions} is:
\[
\bce{S}=\left\{q\in \Delta^{|Y|} :q(y)=\int_{\mathcal E\times\mathcal T} \nu^x\left(y,d\varepsilon,d\tau\right),\nu^x\in\mathit{BCE^{x}}(\theta,S)\right\} .
\]
\end{definition}
An advantage of working with the BCE prediction is that $\bce{S}$ is a compact convex set because the incentive compatibility and consistency with prior impose linear restrictions on $\nu$. We will use this feature to construct a computationally tractable test statistic.
The robust prediction result of \cite{Bergemann:2016aa}, which we state here for ease of reference, links the BCE and BNE predictions of the game as follows:
\begin{prop}\label{lem:bce_bne}
For all $\theta\in\Theta$ and $x\in X$, 
\begin{enumerate}
\item [(i)]If $q\in Q_{\theta, S^{r}}^{\mathit{BCE}}\left(x\right),$ then $q\in \bne{S}$
for some $S\in\cS(S^{r}).$
\item [(ii)]Conversely, for all $S\in\cS(S^{r})$, $\bne{S}\subseteq Q_{\theta, S^{r}}^{\mathit{BCE}}\left(x\right).$
\end{enumerate}
\end{prop}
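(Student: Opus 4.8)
The plan is to establish the two inclusions by exploiting the fact that the obedience conditions defining a BCE coincide with the best-response conditions defining a BNE once the correlated-equilibrium recommendation is reinterpreted as an additional signal. This is exactly the robust-predictions argument of \cite{Bergemann:2016aa}, specialized to the present environment; the structural simplification that makes the bookkeeping manageable is that each player's payoff $\pi_i(y,\varepsilon_i;x,\theta_\pi)$ depends on the type vector only through $\varepsilon_i$, so beliefs about $\varepsilon_{-i}$ never enter the payoff comparison. I would prove part (ii) first, since it is the more delicate direction, and then part (i) by reversing the construction.

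For part (ii), fix $S\in\cS(S^{r})$. By the definition of expansion, $S(x)=\es{x}{*}$ is a combination of $S^{r}(x)$ and some $\es{x}{2}$, so the signal space factors as $\mathcal T^{*,x}_i=\mathcal T^{r,x}_i\times\mathcal T^{2,x}_i$ and the combination identities give $\sum_{\tau^{2}}P^{x}_{\tau^{*}|\varepsilon}(\tau^{r},\tau^{2}|\varepsilon)=P^{r,x}_{\tau|\varepsilon}(\tau^{r}|\varepsilon)$. Given a BNE $\sigma$ of $\game{S}$, I would form the joint law $\mu$ of $(y,\varepsilon,\tau^{r},\tau^{2})$ by drawing $\varepsilon\sim F$, $(\tau^{r},\tau^{2})\sim P^{x}_{\tau^{*}|\varepsilon}$, and $y\sim\sigma(\cdot\mid\varepsilon,\tau^{*})$, and define the candidate $\nu^{x}$ as the marginal of $\mu$ on $(y,\varepsilon,\tau^{r})$, i.e.\ integrate out the extra component $\tau^{2}$. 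Consistency with the prior under $S^{r}$ is then immediate from the combination marginal identity above. Obedience is the substantive step: for each $(\varepsilon_i,\tau^{*}_i)$ with $\sigma_i(y_i\mid\varepsilon_i,\tau^{*}_i)>0$ the BNE inequality holds pointwise, and since $y_i$ is drawn from $\sigma_i(\cdot\mid\varepsilon_i,\tau^{*}_i)$ it is conditionally independent of $(\varepsilon_{-i},\tau^{*}_{-i})$ given $(\varepsilon_i,\tau^{*}_i)$, so conditioning additionally on the recommendation $y_i$ does not alter beliefs about opponents. Applying the tower property to average the pointwise BNE inequalities against the posterior of $\tau^{2}_i$ given $(y_i,\varepsilon_i,\tau^{r}_i)$ then delivers the BCE obedience inequality. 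Hence $\nu^x\in\mathit{BCE^{x}}(\theta,S^{r})$, and its $y$-marginal equals $q(\cdot|x)=E[\sigma(y|\varepsilon,\tau)|x]$, giving $\bne{S}\subseteq\bce{S^{r}}$.

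For part (i), I would reverse this construction. Given $q\in\bce{S^{r}}$, pick a BCE $\nu^{x}$ under $S^{r}$ whose $y$-marginal is $q$, and build an expansion $S$ of $S^{r}$ by appending each player's recommended action $y_i$ as an extra signal: set $\mathcal T^{*,x}_i=\mathcal T^{r,x}_i\times Y_i$ and let the combined kernel be the conditional law of $(\tau^{r},y)$ given $\varepsilon$ induced by $\nu^{x}$. Consistency of $\nu^{x}$ guarantees that the $\tau^{r}$-marginal of this kernel is $P^{r,x}_{\tau|\varepsilon}$, so $S$ is genuinely a combination of $S^{r}$ with the recommendation structure, whence $S\in\cS(S^{r})$. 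Letting each player use the honest strategy $\sigma_i(\cdot\mid\varepsilon_i,\tau^{r}_i,y_i)=\delta_{y_i}$, the law of $(\varepsilon,\tau^{r},y)$ under $\sigma$ is exactly $\nu^{x}$, so the conditional distribution of $y_{-i}$ faced by player $i$ given $(\varepsilon_i,\tau^{r}_i,y_i)$ is $\nu^{x}(y_{-i}\mid y_i,\varepsilon_i,\tau^{r}_i)$. The BNE best-response inequality for $\sigma$ then coincides verbatim with the BCE obedience inequality for $\nu^{x}$, so obedience makes $\sigma$ a BNE of $\game{S}$; since players obey, the induced outcome is the $y$-marginal of $\nu^{x}$, namely $q$, giving $q\in\bne{S}$ with $S\in\cS(S^{r})$.

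The main obstacle is the measure-theoretic justification of the obedience step in part (ii): I must verify that (a) conditioning on the recommendation $y_i$ in addition to $(\varepsilon_i,\tau^{r}_i)$ leaves the conditional distribution of opponents' $(\varepsilon_{-i},\tau^{*}_{-i})$ unchanged relative to the private information alone, and (b) averaging the family of pointwise BNE inequalities against the posterior weights on $\tau^{2}_i$ preserves the inequality direction. Both reduce to the tower property once the conditional-independence structure of $\mu$ is made explicit, and the dependence of $\pi_i$ on $\varepsilon_i$ alone removes any need to track beliefs about $\varepsilon_{-i}$ in the payoff comparison. A secondary point is to check carefully that the objects built in each direction satisfy the combination requirements (matching support sets and marginal kernels), so that the expansion relation $S\ex S^{r}$, equivalently $S\is S^{r}$ by Lemma \ref{lem:IS_exp}, indeed holds.
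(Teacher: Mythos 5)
Your proof is correct and is essentially the paper's own approach: the paper states this proposition without proof, deferring to Theorem 1 of \cite{Bergemann:2016aa}, and your argument is precisely their canonical one---for part (ii), marginalizing the extra signal component of an expansion out of the BNE-induced joint law and using the tower property (with the posterior over $\tau^2_i$ supported on signals where $y_i$ is played) to aggregate the pointwise best-response inequalities into obedience; for part (i), appending the mediator's recommendation as an additional signal whose kernel is the conditional law of $(\tau^r,y)$ given $\varepsilon$, and checking that obedient play is a BNE. Your explicit verification of consistency with the prior is the right adaptation to this paper's formulation, since (as its footnote notes) the paper defines a BCE as a joint distribution with consistency imposed as an equilibrium requirement rather than building it in as \cite{Bergemann:2016aa} do.
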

Additionally, there is a property of the ordering induced by individual sufficiency, which has implications for the observables:
\begin{prop}\label{prop:bce_bce}
$\es{x}{1}$ is individually sufficient for $\es{x}{2}$  if and only if $Q_{\theta,S^{1}}^{\mathit{BCE}}\left(x\right)\subseteq Q_{\theta,S^{2}}^{\mathit{BCE}}\left(x\right)$.
\end{prop}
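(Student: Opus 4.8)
The plan is to prove the two implications separately, treating the finite spaces $Y$, $\E$, and $\mathcal T$ as fixed so that all information kernels are finite matrices and every expectation is a finite sum. The forward implication (individual sufficiency $\Rightarrow$ set inclusion) I would establish by an explicit construction that transports a BCE under $\es{x}{1}$ to a BCE under $\es{x}{2}$ with the same action marginal; this is a self-contained verification that needs neither Lemma~\ref{lem:IS_exp} nor Proposition~\ref{lem:bce_bne}. The reverse implication is the genuinely hard part and, as in Proposition~\ref{lem:bce_bne}, is naturally read with the right-hand side quantified over all payoff parameters $\theta$; I flag it below as the main obstacle.

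For the forward direction, suppose $\es{x}{1}\is\es{x}{2}$ and let $\es{x}{*}$ be a combined structure witnessing individual sufficiency, so that $P^{*,x}(\tau^{2}_{i}\mid\tau^{1}_{i},\tau^{1}_{-i},\varepsilon)=P^{*,x}(\tau^{2}_{i}\mid\tau^{1}_{i})$ for each $i$. Fix $q\in\bce{S^1}$ and a BCE $\nu^{1}$ over $(y,\varepsilon,\tau^{1})$ whose action marginal is $q$. I would define a candidate $\nu^{2}$ over $(y,\varepsilon,\tau^{2})$ by composing the recommendation kernel of $\nu^{1}$ with the combination and marginalizing out $\tau^{1}$:
\[
\nu^{2}(y,\varepsilon,\tau^{2})\ =\ \sum_{\tau^{1}}\nu^{1}(y\mid\varepsilon,\tau^{1})\,P^{*,x}(\tau^{1},\tau^{2}\mid\varepsilon)\,F(\varepsilon;\theta_{\varepsilon}).
\]
Three checks then deliver the claim. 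First, consistency with the prior under $\es{x}{2}$ follows by summing out $y$ and $\tau^{1}$ and invoking the combination marginal condition, which returns $P^{2,x}(\tau^{2}\mid\varepsilon)F(\varepsilon;\theta_{\varepsilon})$. Second, the action marginal is preserved: summing out $(\varepsilon,\tau^{2})$ collapses the combination kernel back to $P^{1,x}(\tau^{1}\mid\varepsilon)$ and recovers $q$. Third---the crux of the construction---obedience of $\nu^{2}$ at any $(y_{i},\varepsilon_{i},\tau^{2}_{i})$ reduces, after writing the inequality in its unnormalized linear form and summing first over $\tau^{2}_{-i}$, to
\[
\sum_{\tau^{1}_{i}}P^{*,x}(\tau^{2}_{i}\mid\tau^{1}_{i})\,\Big[\sum_{\tau^{1}_{-i},y_{-i},\varepsilon_{-i}}\nu^{1}(y\mid\varepsilon,\tau^{1})\,P^{1,x}(\tau^{1}\mid\varepsilon)\,F(\varepsilon;\theta_{\varepsilon})\,\Delta_{i}(y,\varepsilon)\Big]\ \ge\ 0,
\]
where $\Delta_{i}(y,\varepsilon)=\pi_{i}(y_{i},y_{-i},\varepsilon_{i};x,\theta_{\pi})-\pi_{i}(y_{i}',y_{-i},\varepsilon_{i};x,\theta_{\pi})$. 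Here the factorization $P^{*,x}(\tau^{1},\tau^{2}_{i}\mid\varepsilon)=P^{*,x}(\tau^{2}_{i}\mid\tau^{1}_{i})\,P^{1,x}(\tau^{1}\mid\varepsilon)$, which is exactly the individual-sufficiency conditional independence detaching $\tau^{2}_{i}$ from $(\tau^{1}_{-i},\varepsilon)$, makes the bracketed term the unnormalized obedience expression of $\nu^{1}$ at $(y_{i},\varepsilon_{i},\tau^{1}_{i})$; this is nonnegative because $\nu^{1}$ is a BCE, and the outer weights $P^{*,x}(\tau^{2}_{i}\mid\tau^{1}_{i})$ are nonnegative. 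Hence $\nu^{2}$ is a BCE under $\es{x}{2}$ with action marginal $q$, so $q\in\bce{S^2}$.

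For the reverse direction I would argue by contraposition: if $\es{x}{1}$ is not individually sufficient for $\es{x}{2}$, then no combination can render $\tau^{2}_{i}$ conditionally independent of $(\tau^{1}_{-i},\varepsilon)$ given $\tau^{1}_{i}$ for every $i$, and the goal is to exhibit a payoff parameter $\theta$ together with an element of $\bce{S^1}$ that is not the action marginal of any BCE under $\es{x}{2}$. This is where the main difficulty lies. Since the sets of obedient, prior-consistent distributions are convex polytopes (the constraints on $\nu$ are linear), the natural tool is a separating-hyperplane / duality argument: the failure of conditional independence must be converted into a payoff direction $\theta$ that strictly rewards acting on the residual correlation that $\es{x}{1}$ retains but $\es{x}{2}$ cannot reconstruct, producing a BCE outcome under $\es{x}{1}$ outside $\bce{S^2}$. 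Constructing this game explicitly, and checking that the separating functional is realizable by admissible payoffs, is the hard step; it mirrors the equivalence in \cite{Bergemann:2016aa}, on which I would rely. I note finally that the testing procedure uses only the forward implication---together with Proposition~\ref{lem:bce_bne}---to conclude that under $H_{0}$ the data-generating choice probability lies in $\bce{S^r}$; the reverse implication serves only to certify that BCE set inclusion is a sharp characterization of the ordering.
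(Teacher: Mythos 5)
Your proposal is correct where it is complete, and it is worth noting that the paper itself contains no proof of this proposition at all: the authors simply remark that it is ``an extension of Theorem 2 in \cite{Bergemann:2016aa}'' and delegate both directions to that reference. Your forward direction therefore supplies detail the paper omits. The transport construction $\nu^{2}(y,\varepsilon,\tau^{2})=\sum_{\tau^{1}}\nu^{1}(y\mid\varepsilon,\tau^{1})P^{*,x}(\tau^{1},\tau^{2}\mid\varepsilon)F(\varepsilon;\theta_{\varepsilon})$ is exactly the mechanism underlying the Bergemann--Morris argument, and your three checks go through: the combination marginal conditions give prior consistency under $\es{x}{2}$ and preservation of the action marginal $q$, and the factorization $P^{*,x}(\tau^{1},\tau^{2}_{i}\mid\varepsilon)=P^{*,x}(\tau^{2}_{i}\mid\tau^{1}_{i})P^{1,x}(\tau^{1}\mid\varepsilon)$---which is precisely the individual-sufficiency conditional independence combined with the marginal condition of the combination---reduces obedience of $\nu^{2}$ to a nonnegative mixture of the unnormalized obedience expressions of $\nu^{1}$. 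The passage to the unnormalized linear form is legitimate here because whenever $\nu^{1}(y_{i}\mid\varepsilon_{i},\tau^{1}_{i})=0$ the corresponding bracketed sum vanishes identically, so the inequality holds for all $(y_{i},\varepsilon_{i},\tau^{1}_{i})$ and not merely on the support. For the reverse direction you give only a sketch and ultimately lean on \cite{Bergemann:2016aa}, which is the same reliance the paper exhibits, so there is no divergence in substance there.

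Two caveats on the converse, both of which you partially anticipate. First, as you correctly flag, the ``only if'' direction cannot hold at a single fixed $\theta$ (take payoffs for which every action distribution is obedient, so both BCE sets are the full simplex regardless of information); the statement must be read with the inclusion quantified over payoffs, even though the paper's display leaves $\theta$ free, in contrast to the explicit quantifier in Proposition~\ref{lem:bce_bne}. Second---and this sharpens the difficulty you identify as ``the hard step''---the converse in \cite{Bergemann:2016aa} quantifies over \emph{all} basic games, whereas the paper's payoff functions form a parametric family $\pi_{i}(\cdot;x,\theta_{\pi})$ with $\theta$ ranging over a compact $\Theta$. Your separating-hyperplane plan requires the separating functional to be realizable by some admissible $\theta$, and nothing in the paper's assumptions guarantees the parametric class is rich enough for this; so the converse, as literally stated within the paper's framework, needs either an additional richness condition on $(\pi,\Theta)$ or must be understood as inherited from the nonparametric statement in \cite{Bergemann:2016aa}. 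Since the testing methodology uses only the forward implication (through Lemma~2 and the inclusion $P_{y|x}\in\bne{S}\subseteq\bce{S^r}$), this gap is inert for the paper's results, but it is the right thing to have flagged.
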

This proposition is an extension of Theorem 2 in \cite{Bergemann:2016aa}, and spells out the consequences of individual sufficiency for players' behavior: if an information structure $\es{x}{1}$ is more informative (in the sense of individual sufficiency) of an information structure $\es{x}{2}$, the set of predictions corresponding to $\es{x}{1}$ is nested in the set of predictions implied by $\es{x}{2}$.

Propositions \ref{lem:bce_bne} and \ref{prop:bce_bce} allow comparisons of information structures based entirely on the sets of BCE and BNE predictions. We discuss more in depth the implications of our results for testing in the next section.

\subsection{Testable Implications}\label{ssec:test}

We now develop a test statistic for the hypothesis \eqref{eq:h0h1}. We make the following assumptions on the parameter space and data generating process.

\begin{assumption}\label{as:primitives} (i) $X$ is a finite set;
(ii) $\Theta\subset\mathbb{R}^{d_{\theta}}$ is a compact set; (iii)  $\cS^*$ is equipped with the partial order $\is$.
 \end{assumption}
Compactness of $\Theta$ is standard.  The parameter space $\cS^*$ for the information structure is partially ordered by the individual sufficiency ordering $\is$. For simplicity, we assume that the covariates are discrete (or discretized).\footnote{Allowing continuous covariates yields a continuum of conditional moment restrictions \citep[see e.g.][]{Chernozhukov:2013aa}. We leave this extension elsewhere to keep our tight focus on testing information structures.} 

The next assumption states, in each market, the data are generated from a Bayes-Nash equilibrium under an unknown information structure $S$.
\begin{assumption}\label{as:Model} The conditional choice probability $P_{y|x}$ satisfies
\begin{align}
P_{y|x}\in \bne{S},
\end{align}
for some $S\in \cS^*$ and $\theta\in\Theta.$ 
\end{assumption}
 A common empirical practice is to specify the true information structure $S$ and estimate $\theta$ using the restriction $P_{y|x}\in \bne{S}$.\footnote{Existing literature typically adopts as $S$  either the complete information structure \citep[e.g.,][]{bresnahan1991empirical,Tam03,ciliberto2009market}, or the incomplete information structure \citep[e.g.,][]{seim2006empirical,Bajari:2007aa,sweeting2009strategic}. } Instead, we specify the baseline information structure $S^r\in\cS^*$ and test \eqref{eq:h0h1}.
 
Let $P_x$ be the marginal law of $x$ and
let $P$ be the joint law of $(y,x)$.
For each $x$, we identify $P_{y|x}$ with a vector in the $|Y|-1$-dimensional simplex. We assume a sample is drawn from $P$ independently across markets.

\begin{assumption}\label{as:sampling}  $(y^{n},x^{n})=(y_\ell,x_\ell)_{\ell=1}^n$ is a
random sample from the law $P\in\mathcal P_{Y\times X}$. \end{assumption}
 If $H_0$ is true, the the following testable implication holds by the assumptions above and Proposition \ref{lem:bce_bne}(ii).
 \begin{lemma}
Suppose Assumptions \ref{as:primitives}-\ref{as:Model} hold. Suppose $S\is S^r$. Then, 
\begin{align}
	P_{y|x}\in\bne{S}\subseteq \bce{S^r},~\forall x \in X.\label{eq:test_implication}
\end{align}
\end{lemma}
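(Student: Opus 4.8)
The plan is to establish the two relations in \eqref{eq:test_implication} separately and then chain them. The membership $P_{y|x}\in\bne{S}$ is immediate: it is exactly the content of Assumption \ref{as:Model}, which posits that the observed conditional choice probability is a BNE prediction under the (unknown) true information structure $S$ and parameter $\theta$, for every $x\in X$. So the only substantive work is to prove the set inclusion $\bne{S}\subseteq\bce{S^r}$, and for this the whole task is to convert the hypothesis into a form that the results of Section \ref{sec:test} can consume.

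For the inclusion, the key is to translate the individual-sufficiency hypothesis $S\is S^r$ into the expansion relation required by Proposition \ref{lem:bce_bne}(ii). First I would note that Assumption \ref{as:primitives} places both $S$ and $S^r$ in the refined parameter space $\cS^*=\cS^*(S_{\mathrm{Null}})$, on which any two distinct structures have different canonical representations and on which $\is$ is a well-defined partial order. On this space Lemma \ref{lem:IS_exp} gives the equivalence $S\is S^r\iff S\ex S^r$; hence the hypothesis $S\is S^r$ delivers $S\ex S^r$, i.e.\ $S$ is an expansion of $S^r$, which is precisely the statement $S\in\cS(S^r)$. Applying Proposition \ref{lem:bce_bne}(ii) with baseline $S^r$ then yields $\bne{S}\subseteq\bce{S^r}$ at each fixed $x$, and combining with $P_{y|x}\in\bne{S}$ produces \eqref{eq:test_implication} for all $x\in X$.

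The main point requiring care is the invocation of Lemma \ref{lem:IS_exp}: its equivalence between individual sufficiency and expansion holds only after restricting to structures with distinct canonical representations, so I must verify that the hypotheses place $S$ and $S^r$ inside a common set $\cS^*(S_B)$ on which the lemma applies — here $S_B=S_{\mathrm{Null}}$, which works because every structure is an expansion of the null information. I would also confirm that the quantifier over $x$ is handled consistently: both $\is$ and $\ex$ are defined to hold for all $x\in X$, and $S\in\cS(S^r)$ likewise encodes expansion at every market, so Proposition \ref{lem:bce_bne}(ii), which is stated pointwise in $x$, can be invoked at each $x$ and the conclusion holds uniformly.

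As a cross-check, there is an alternative route that bypasses Lemma \ref{lem:IS_exp} for the final inclusion: since every structure is (up to canonical representation) an expansion of itself, $S\in\cS(S)$, so Proposition \ref{lem:bce_bne}(ii) gives $\bne{S}\subseteq\bce{S}$; and Proposition \ref{prop:bce_bce} applied to $S\is S^r$ gives $\bce{S}\subseteq\bce{S^r}$. Chaining these reproduces $\bne{S}\subseteq\bce{S^r}$ directly from individual sufficiency. Either route is short once the relevant results are assembled; the only genuine subtlety is the canonical-representation bookkeeping noted above.
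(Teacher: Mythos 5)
Your proposal is correct and matches the paper's own (very brief) justification: the paper derives \eqref{eq:test_implication} exactly by combining Assumption \ref{as:Model} for the membership $P_{y|x}\in\bne{S}$ with Lemma \ref{lem:IS_exp} (to pass from $S\is S^r$ to $S\in\cS(S^r)$ within $\cS^*$) and Proposition \ref{lem:bce_bne}(ii) for the inclusion $\bne{S}\subseteq\bce{S^r}$. Your alternative route via Proposition \ref{prop:bce_bce} is also valid but is not the one the paper invokes.
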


Recall that $\bne{S}$ is a mixture over Bayesian Nash equilibria. When the true information structure $S$ is unknown, it is generally difficult to assess whether $S$ is more informative than $S^{r}$ by directly characterizing $\bne{S}$. Equation~\eqref{eq:test_implication} shows that this difficulty can be bypassed: it suffices to verify whether the conditional choice probability $P_{y\mid x}$ lies in a convex set $\bce{S^{r}}$.

The conditional law $P_{y\mid x}$ is directly identified from the data, and membership in the BCE prediction set $\bce{S^{r}}$ can be checked using computationally tractable procedures. Accordingly, the testable implication in \eqref{eq:test_implication} implies that the null hypothesis $H_{0}$ can be rejected whenever the conditional choice probability of some outcome falls outside the BCE prediction under $S^{r}$.

\setcounter{example}{0}
\begin{example}[revisited]\rm
Consider a simplified version of Example \ref{ex:entry_game} with two players, where the payoff function is given by $\pi_{i}\left(y,\varepsilon\right)=y_{i}\left(-\Delta_j y_{-i}+\varepsilon_{i}\right)$ for $(\Delta_1,\Delta_2)\in \Theta=(0,1]^2,j=1,2$, and $\varepsilon_i\stackrel{iid}{\sim} U[-1,1]$. 
Consider $S^r=S_C$, the complete information structure, as the baseline. 
According to the BCE prediction, the sharp lower bound for the probability of $Y=(1,0)$ is 
\begin{align}
LB_\Delta=\frac{1}{4}(1+\Delta_2(1-\Delta_1))\ge 0.25,~~~ \forall (\Delta_1,\Delta_2)\in \Theta.	
\end{align}

 Suppose the true information structure is $S_I$, where $\tau_i$ only reveals $\varepsilon_i$ but does not provide information about $\varepsilon_{-i}$.  A BNE under this information structure under the true interaction parameter $\Delta^*=(\Delta^*_1,\Delta^*_2)$ induces the following CCP:
\begin{align}
P(Y=(1,0))=\frac{1+\Delta^*_2}{(2+\Delta^*_1)(2+\Delta^*_2)}.	
\end{align}
For instance, if $\Delta^*_1=\Delta^*_2=0.5$, then $P(Y=(1,0))=0.24$, which is strictly below $LB_\Delta$  for any $\Delta$ in the parameter space.
Hence, we can detect the violation of $H_0$ by comparing the BCE prediction (i.e., $LB_\Delta$) and  the CCP. 
\end{example}

One may wonder why violations of the null hypothesis are detectable in the first place. The key intuition is that, under the postulated complete-information baseline, there exist values of $\varepsilon$ for which players know that playing $(1,0)$ is optimal. The BCE prediction therefore delivers a lower bound on the probability with which this outcome can occur.

By contrast, under incomplete information, supporting the action profile $(1,0)$ requires $\varepsilon_1$ to be sufficiently high (or $\varepsilon_2$ to be sufficiently low), since each player must optimally respond based solely on their own private signal $\varepsilon_i$. The resulting comparison depends on the threshold structure induced by $\Delta^{*}$. In the example above, the true conditional choice probability lies strictly outside the BCE prediction. For other choices of $\Delta^{*}$—for instance, $(\Delta^{*}_1,\Delta^{*}_2)=(0.2,0.8)$—the same conditional choice probability may instead fall within the BCE prediction for this outcome. Accordingly, our test detects a violation of the null hypothesis only when such discrepancies arise for at least one outcome and covariate value $x$.

\subsection{Test Statistics and Asymptotic Properties}
To fully exploit the testable implications of the model, we leverage all linear combinations of the conditional choice probabilities. 

For any $b\in\mathbb R^{|Y|}$ and a closed convex set $A$, let $h(b,A)=\sup_{a\in A}b^{\top}a$ be the \emph{support function} of $A$. 
We may restate \eqref{eq:test_implication} as follows: 
\begin{align}
    P_{y|x}\in \bce{S^r}~~\Leftrightarrow~~ b^{\top}P_{y|x}\le h(b,\bce{S^r}),~\forall b \in \ball ~\mathrm{and}~\forall x \in X,
\end{align}
where $\ball=\{b\in\mathbb R^{|Y|}:\wnorm{b}\le 1\}$ is the unit ball for a norm $\wnorm{b}=(b^{\top}W_{x}b)^{1/2}$, and $W_x$ is a positive definite matrix. 
Observe that we translate the hypothesis on the ordering of information structures into an ordering of functions, namely
$(b,x)\;\mapsto\; b^{\top} P_{y\mid x}
\quad \text{and} \quad
(b,x)\;\mapsto\; h\!\left(b,\bce{S^{r}}\right).$

For each $\theta$, let
\begin{align}
    T(\theta)\equiv\sup_{x\in X}\sup_{b\in \ball}\{b^{\top}P_{y|x}-h(b,Q^{BCE}_{\theta,S^r}(x))\}.\label{eq:defT}
\end{align}
Since $\ball$ contains the origin, $T(\theta)$ is nonnegative by construction. Moreover, under the null hypothesis, $T(\theta)=0$ for all $\theta$.\footnote{Indeed, if $b^{\top}P_{y\mid x}-h\bigl(b,Q^{BCE}_{\theta,S^{r}}(x)\bigr)\le 0$ for all $b\in\ball$, the supremum is attained at $b=0$.}

Our test is based on a sample analog of $T(\theta)$. Statistics built on support functions have been widely used in partially identified models \citep{beresteanu2008asymptotic,beresteanu2011sharp,bontemps2012set,kaido2014asymptotically}. While the construction of our test statistic is most closely related to \cite{beresteanu2011sharp}, it incorporates an important distinction: by introducing a weighted norm in the definition of $\ball$, we effectively studentize the moments used to form the test statistic.

To see this, let $\hat P_{n,x}$ denote the empirical conditional choice probability whose $j$-th component is $\hat P^{(j)}_{n,x}= \frac{1}{n_x}\sum_{\ell=1}^n1\{y_\ell=y^{(j)},x_\ell=x\},$
where $n_x=n^{-1}\sum_{\ell=1}^n1\{x_\ell=x\}$.  Let $W_{n,x}\equiv n\widehat{var}(\hat{P}_{n,x})$ be the sample covariance matrix of $\hat P_{n,x}$, and assume $W_{n,x}$ is positive definite. Let 
\begin{align}
    \balln\equiv\{b\in\mathbb R^{|Y|}:\wnormn{b}\le 1\},~\wnormn{b}\equiv(b^{\top}W_{n,x}b)^{1/2}.
\end{align}
Define the sample counterpart of $T(\theta)$:
\begin{align}
	T_{n}(\theta)&\equiv \sup_{x\in X}\sup_{b\in \balln}\sqrt n\{b^{\top}\hat P_{n,x}-h(b,\bce{S^r})\}.\label{eq:Tn1}
\end{align}     
This statistic can also be expressed as follows (see Lemma \ref{lem:studentization} in Appendix):
\begin{align}
 \sup_{b\in \balln}\sqrt n\{b^{\top}\hat P_{n,x}-h(b,\bce{S^r})\}=   \sup_{b\in \mathbb R^{|Y|}\setminus\{0\}}	\sqrt n\Big\{ \frac{b^{\top}\hat{P}_{n,x}-h(b,\bce{S^r})}{\sqrt{nb^{\top}W_{n,x}b}}\Big\}_+.\label{eq:studentization}
\end{align}
Hence, for each $b$, the quantity $\sqrt n\{b^{\top}\hat P_{n,x}-h(b,\bce{S^r})$ is divided by its standard error. This is desirable because $T_n$ constructed this way is less sensitive to imprecisely estimated moments than a statistic without any studentization. 

When $\hat P_{n,x}$ is outside the BCE prediction, 
the statistic measures the maximum deviation of the empirical distribution from the BCE prediction for each value of $\theta$. Using techniques in \cite{magnolfi2023estimation}, this statistic can be calculated by solving a convex program. We provide details on computational aspects in Section \ref{ssec:computation}.

The behavior of the test statistic $T_n$ depends on the location of $P_{y\mid x}$ relative to the BCE prediction. To make this dependence explicit, we rewrite $T_n$ as
\begin{align}
T_n(\theta)
=
\sup_{x\in X}\sup_{b\in \balln}
\bigl\{
\mathbb G_n(b,x)+\eta_{\theta}(b,x)
\bigr\},
\label{eq:Tn2}
\end{align}
where $\mathbb G_n(b,x)\equiv \sqrt{n}\, b^{\top}(\hat P_{n,x}-P_{y\mid x})$ and
\[
\eta_{\theta}(b,x)
\equiv
\sqrt{n}\bigl(b^{\top}P_{y\mid x}-h(b,\bce{S^{r}})\bigr).
\]

The values of $(b,x)$ that are relevant for the supremum in \eqref{eq:Tn2} are those for which $\eta_{\theta}(b,x)$ is close to zero. A key challenge is that $\eta_{\theta}(b,x)$ cannot be consistently estimated uniformly over data-generating processes. To address this issue, we combine bootstrap methods with a moment selection procedure to ensure uniform validity. 

Let
\[
\hat \eta_{n,\theta}(b,x)
\equiv
\sqrt{n}\bigl(b^{\top}\hat P_{n,x}-h(b,\bce{S^{r}})\bigr),
\]
and let $\{\tau_n\}$ be a slowly diverging sequence, such as $\tau_n=\sqrt{\log n}$. Define
\begin{align}
\Psi_{n,\theta}
\;\equiv\;
\bigl\{
(b,x)\in \mathbb{B}_{xn}\times X
:\;
\hat \eta_{n,\theta}(b,x)\ge -\tau_n
\bigr\}.
\end{align}
The set $\Psi_{n,\theta}$ collects those $(b,x)$ pairs for which the sample moment restrictions are nearly binding.
The intuition is as follows. When a moment restriction is strictly slack in the population, the corresponding statistic $\hat \eta_{n,\theta}(b,x)$ diverges to $-\infty$ at rate $\sqrt{n}$. Excluding such moments improves the accuracy of the bootstrap approximation. At the same time, to avoid mistakenly discarding moments that are close to binding, we adopt a conservative rule that retains moments whose slackness does not exceed the slowly diverging threshold $\tau_n$. This construction follows the insights of generalized moment selection procedures \citep[see e.g.][]{Andrews:2010aa,Chernozhukov:2013aa}.

Let $\{(y^{*n}_\ell,x^{*n}_\ell)\}_{\ell=1}^n$ be a bootstrap sample drawn from the empirical distribution $\hat P_n$,\footnote{The empirical distribution is $\hat P_n=\frac{1}{n}\sum_{\ell=1}^n \delta_{(y_\ell,x_\ell)}$, where $\delta_{(y,x)}$ denotes a point mass at $(y,x)$.} and let $\hat P^*_{n,x}\in\mathbb R^{|Y|}$ denote the vector of empirical conditional frequencies of outcomes in the bootstrap sample given $x^{*}=x$. Define
$\mathbb G^*_n(b,x)
\equiv
\sqrt{n}\, b^{\top}(\hat P^*_{n,x}-\hat P_{n,x}).$
A bootstrap analog of $T_n$ is then given by
\begin{align}
T^*_{n}(\theta)
\;\equiv\;
\sup_{(b,x)\in \Psi_{n,\theta}}
\bigl\{
\mathbb G^*_n(b,x)
\bigr\}.
\end{align}

For each $\theta\in\Theta$, let the bootstrap $p$-value be
\begin{align}\label{eq:p_value}
	p_n(\theta)\equiv P^*(T^*_{n}(\theta)>T_{n}(\theta)|y^n,x^n),~
\end{align}
where $P^*$ is the  distribution of $T^*_{n}$ conditional on the  sample $(y^n,x^n)$. 
We reject $H_0$ if
\begin{align}
    p_n(\theta)\le \alpha\quad\text{for all } \theta\in\Theta.
\end{align}
Otherwise, there exists a parameter value $\theta\in\Theta$ for which an information structure $S\is S^r$ is consistent with data. We may then collect such parameter values to define a confidence set:
\begin{align}
    CS_{n}\equiv\{\theta\in\Theta:p_n(\theta)>\alpha\}.
\end{align}
This set covers $\theta$ with probability $1-\alpha$ asymptotically. 
Theorem \ref{thm:size_control} below ensures that the test is asymptotically valid over a large class of data-generating processes. 

For any $M$-by-$M$ matrix $A$, let $\|A\|_{op}=\inf\{c\ge 0:\|Az\|\le c\|z\|,~\forall z\in \mathbb R^M\}$ be its operator norm. We impose the following regularity conditions.
\begin{assumption}
	\label{as:Wn}
	 There exists $W_x\in\mathbb R^{|Y|\times |Y|}$ with $x\in X$ such that, for any $\epsilon>0$,   $$P(\sup_{x\in X}\|W_{n,x}-W_x\|_{op}>\epsilon)\le \epsilon,~\forall n\ge N_\epsilon,$$
for some $N_\epsilon\in\mathbb N$;
(ii) $\underline{\kappa}<\underline{\lambda}(W_x)$ and $\overline{\lambda}(W_x)<\overline{\kappa}$ for  uniform constants $0<\underline{\kappa}\le \overline{\kappa}<\infty$ for all $x\in X$, where $\underline{\lambda}(\cdot)$ and $\overline{\lambda}(\cdot)$ are the smallest and largest eigenvalues, respectively.
\end{assumption}

\begin{assumption}
	\label{as:px}
	There is $ \underline{\zeta}>0$ such that $P_{y|x}(y_\ell=y|x)\ge \underline{\zeta}$  and $P(x_\ell=x)\ge \underline{\zeta}$ for all $y\in Y$ and $x\in X$.
\end{assumption}
Assumption \ref{as:Wn} states the sample weighting matrix converges uniformly to its limiting counterpart $W_x$. We also assume $W_x$'s eigenvalues are uniformly bounded away from 0 and from above. Assumption \ref{as:px} requires $y_i$'s conditional probability and $x_i$'s probability mass function are uniformly bounded away from 0.

For each $\theta\in\Theta$, we define our null model $\mathcal P_\theta$ as follows: 
\begin{align}
	\mathcal P_\theta\equiv\Big\{P\in\mathcal P_{Y\times X}: \text{ Assumptions \ref{as:Model}-\ref{as:px} hold with respect to } P \Big\}.
\end{align}
Then, by Assumption \ref{as:Model}, $(y^{n},x^{n})$ is generated from $P\in\mathcal P_\theta\subseteq\mathcal P_{Y\times X}$. Let $\mathcal P\equiv\{P:P\in\mathcal P_\theta,~\theta\in\Theta\}$ be the set of all distributions compatible with our restrictions.

\begin{thm}\label{thm:size_control}
Suppose Assumptions \ref{as:primitives}-\ref{as:px}
hold. Suppose $\tau_n>0$ for all $n$, $\tau_n\to \infty$ and $\tau_n=o(n^{1/2})$. Let $\alpha\in(0,1/2)$.
Let $\phi(y^n,x^n)=1\big\{\sup_{\theta\in\Theta}p_{n}(\theta)\le \alpha\big\}.$  Then, 
\begin{align}
\limsup_{n\to\infty}\sup_{P\in\mathcal P}E_{P}[\phi]\le\alpha.
\end{align}
\end{thm}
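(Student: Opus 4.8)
The plan is to establish uniform asymptotic size control by showing that the bootstrap critical value dominates the limiting null distribution of $T_n(\theta)$ uniformly over $\mathcal P$. Since rejection requires $p_n(\theta)\le\alpha$ for \emph{all} $\theta$, while size is controlled by the true $\theta$ that generates the data (which exists by Assumption \ref{as:Model}), it suffices to bound $E_P[\phi]\le E_P[1\{p_n(\theta_0)\le\alpha\}]$ where $\theta_0$ is the true parameter, and then show the latter is asymptotically no larger than $\alpha$. Because the argument must hold uniformly, I would work along an arbitrary sequence $P_n\in\mathcal P$ that (after passing to a subsequence) approaches the worst case, and verify $\limsup_n P_n(p_n(\theta_0)\le\alpha)\le\alpha$; this subsequencing device is the standard way to convert a pointwise distributional argument into a uniform one.

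\medskip

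\textbf{The core steps} are as follows. First I would establish a functional central limit theorem for the empirical process $\mathbb G_n(b,x)=\sqrt n\, b^\top(\hat P_{n,x}-P_{y\mid x})$, viewed as a process indexed by $(b,x)\in\ball\times X$. Since $X$ is finite (Assumption \ref{as:primitives}) and $b$ ranges over a compact ball, and since Assumptions \ref{as:Wn}--\ref{as:px} guarantee uniformly nondegenerate, well-conditioned covariances, $\mathbb G_n$ converges to a tight Gaussian process $\mathbb G$ with covariance determined by the multinomial structure of $P_{y\mid x}$ scaled by $P(x)^{-1}$; crucially, along the subsequence $P_n$ the covariances converge (using Assumption \ref{as:Wn}(i) and the uniform lower bounds) so the limit is well-defined. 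Second, I would analyze the drift term $\eta_{\theta_0}(b,x)=\sqrt n(b^\top P_{y\mid x}-h(b,\bce{S^r}))$: under $H_0$ the testable implication \eqref{eq:test_implication} forces $\eta_{\theta_0}(b,x)\le 0$ for every $(b,x)$, and it partitions into coordinates where the slackness is $O(1/\sqrt n)$ (nearly binding) versus those diverging to $-\infty$. Third, I would show the moment-selection set $\Psi_{n,\theta_0}$ asymptotically recovers exactly the nearly-binding set: because $\hat\eta_{n,\theta_0}(b,x)=\eta_{\theta_0}(b,x)+\mathbb G_n(b,x)$ and $\tau_n\to\infty$ with $\tau_n=o(\sqrt n)$, any strictly slack $(b,x)$ has $\hat\eta_{n,\theta_0}<-\tau_n$ eventually and is dropped, while no binding coordinate is erroneously excluded. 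Consequently the bootstrap statistic $T_n^*(\theta_0)=\sup_{(b,x)\in\Psi_{n,\theta_0}}\mathbb G_n^*(b,x)$ converges (conditionally, in probability) to $\sup$ of the Gaussian limit over the contact set, which stochastically dominates the limit of $T_n(\theta_0)=\sup\{\mathbb G_n+\eta_{\theta_0}\}$ since adding the nonpositive drift $\eta_{\theta_0}$ can only decrease the supremum.

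\medskip

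\textbf{The main obstacle} is the third step: controlling the moment-selection set \emph{uniformly} over $\mathcal P$, because the contact set (the $(b,x)$ where the drift binds) can vary discontinuously with $P$, which is precisely the source of non-uniformity flagged in the text after \eqref{eq:Tn2}. The danger is a sequence $P_n$ along which some coordinate drifts to zero at an intermediate rate so that it is sometimes selected and sometimes not, making the bootstrap distribution unstable. The generalized-moment-selection threshold $\tau_n$ is designed to handle exactly this: I would show that the set $\Psi_{n,\theta_0}$ is, with probability approaching one, sandwiched between the population contact set and a $\tau_n/\sqrt n$-enlargement of it, and that by the Lipschitz continuity of the support function $h(\cdot,\bce{S^r})$ in $b$ together with the uniform conditioning of $W_x$, this enlargement's contribution to the supremum vanishes. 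I would then invoke an anti-concentration / continuous-mapping argument to transfer convergence of the selected-index set into convergence of the conditional bootstrap law, and finally use the subsequencing principle to conclude that the worst-case limit over $\mathcal P$ of $P(p_n(\theta_0)\le\alpha)$ does not exceed $\alpha$. A minor additional subtlety is that $p_n$ is defined via a strict inequality $T_n^*>T_n$, so I would need a continuity (no-atom) property of the limiting distribution at its $(1-\alpha)$-quantile, which holds because the Gaussian supremum has a continuous distribution function away from zero; the restriction $\alpha<1/2$ ensures the relevant quantile lies in this continuity region.
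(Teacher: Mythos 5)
Your proposal is correct and follows essentially the same route as the paper's proof: reduce to the pointwise test at the true $\theta$, pass to drifting subsequences \`a la Andrews--Guggenberger, establish a functional CLT and bootstrap consistency for the $(b,x)$-indexed empirical process, and show via the $\tau_n$-threshold that the bootstrap critical value asymptotically dominates the $(1-\alpha)$-quantile of the drifted limit $\sup\{\mathbb G+\eta\}_+$ because the nonpositive drift satisfies $\eta\le\pi^*$ on the selected set. The only points treated more lightly than in the paper are the estimated ball $\balln$, whose uniform Hausdorff consistency to $\ball$ the paper proves separately (its Lemma 4), and the selection step, where the paper does not show the selected set exactly recovers the contact set but only needs the one-sided comparison $\varphi^*\le\varphi$ together with a chaining maximal inequality guaranteeing that binding indices are retained uniformly---both refinements consistent with your sketch.
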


\begin{remark}\rm
Under $H_0$, the population criterion $T(\theta)$ in \eqref{eq:defT} equals zero for all $\theta\in\Theta$. We therefore construct a level-$\alpha$ test by testing $T(\theta)=0$ separately for each $\theta\in\Theta$ and rejecting $H_0$ only if all such tests reject. This procedure controls size by construction, though it may be conservative. When $H_0$ is not rejected, it yields a confidence set for $\theta$, and its computational cost can be mitigated by combining it with an appropriate optimization algorithm, as discussed in the next section.\footnote{An alternative is to use $\inf_{\theta\in\Theta} T_n(\theta)$ as a test statistic. While this approach may improve power, it requires resampling a suitable analog of $\inf_{\theta\in\Theta} T_n(\theta)$ and does not naturally yield a confidence region for $\theta$.}
\end{remark}

\subsection{Computational Aspects}\label{ssec:computation}
We discuss ways to simplify the computation of the statistic, implementation of the test, and construction of confidence intervals. Let
\begin{align}
    \vn& \equiv \sup_{b\in \balln}\sqrt n\{b^{\top}\hat P_{n,x}-h(b,\bce{S^r})\}\notag\\
    &=\sup_{b\in \balln}\inf_{q\in\bce{S^r}}\left[b^{\top}\hat P_{n,x}-b^{\top}q\right],  \quad(P0),\notag
\end{align}
and note that $T_n(\theta)=\sup_{x\in X}\vn.$
Following \cite{magnolfi2023estimation}, we may recast $(P0)$ as a convex quadratic program:
\begin{eqnarray}
\vn=\max_{lb\leq w\leq ub} & -\gamma^\top w & \label{eq:def_Vnx}\\
s.t. & w^\top \Gamma_1 w & \leq1\notag\\
 & \Gamma_2w & =0_{|Y|}\notag\\
 & \Gamma_3w & \le 0_{d_{\nu}},\notag
\end{eqnarray}
for some vector $\gamma$ and matrices $\Gamma_l$ for $l=1,2,3$; we provide details on how to construct these objects in Appendix \ref{app:comput_V}. The vector $w=(b^{\top},\lambda_{eq}^\top,\lambda_{ineq}^\top)^\top$ stacks $b\in\mathbb R^{|Y|-1}$ together with Lagrange multipliers associated with the constraints in the original problem. This reformulation simplifies the computation of the test statistic and, thus, the $p$-value function. 
Specifically, we solve the convex program for each $(\theta,x)$ and optimize the $p$-value function only once. This saves the number of times we need to solve a global non-convex optimization problem.\footnote{In contrast, directly applying a bootstrap procedure to the statistic $\inf_{\theta\in\Theta}T_n(\theta)$ is computationally demanding. It requires finding the global minimum of a non-linear (and typically non-convex) function $T_n(\theta)$ for a large number of times (essentially the number of bootstrap replications).}

Second, testing $H_0$ can be implemented by combining the $p$-value process $\theta \mapsto p_n(\theta)$ with a suitable optimization algorithm. For the purpose of testing $H_0$, it suffices to determine whether
\[
\sup_{\theta\in\Theta} p_n(\theta)
\]
exceeds the significance level $\alpha$.

When the parameter space $\Theta$ is low-dimensional, this supremum can be approximated by evaluating $p_n(\theta)$ over a sufficiently fine grid. More generally, when $\Theta$ has moderate dimension, one can rely on global optimization algorithms designed for black-box objective functions. Importantly, the optimization can be terminated as soon as a value of $\theta$ is found for which $p_n(\theta)>\alpha$, since this is sufficient to conclude that $H_0$ is not rejected.
While $p_n(\theta)$ does not admit a closed-form expression in general, a variety of global optimization methods are available for such settings. In particular, response surface methods are well suited for globally optimizing expensive black-box functions while keeping the number of function evaluations limited \citep{jones1998efficient}.

Finally, the response surface approach can also be used to compute confidence intervals—i.e., coordinate projections of $CS_n$—for individual components of $\theta$. When $\Theta$ is low-dimensional, such projections can be obtained using a grid over the parameter space. More generally, for moderate-dimensional $\Theta$, one can instead rely on a global optimization algorithm.
Specifically, the endpoints of a confidence interval for the $j$th component of $\theta$ can be obtained by solving
\begin{align*}
    \max/\min_{\theta\in\Theta} \;& e_j^{\top}\theta \\
    \text{s.t.} \;& p_n(\theta)\ge \alpha,
\end{align*}
where $e_j$ denotes the vector whose $j$-th component equals one and whose remaining components are zero. The constraint involves the black-box function $p_n(\theta)$, which generally lacks a closed-form expression. Response surface methods for solving such constrained optimization problems are developed in \cite{kaido2017calibrated}.

\section{Extensions}\label{sec:extensions}

\subsection{Testing Multiple Hypotheses with a Sequence of Baselines}\label{subsec:seq_baselines}

When the information structure is of independent economic interest,
the analyst may consider testing a sequence of hypotheses to refine
her understanding of the game's information structure. To motivate
this, let us revisit Example 3. Recall that $H_{0}:S\succeq S^{r}$
in this example is formed with $S^{r}=S_{P}$ to investigate whether
the hub airline player has informational advantage over the others. Yet,
one may argue that, even if $H_{0}$ is not rejected, it does not
definitively conclude that the hub player has informational privilege,
as the scenario where all the players participate in an entry game of
complete information remains plausible. This is because $S_{C}\succeq S_{P}$. Instead, suppose the analyst opts to
test two nulls of the form: $H_{0,j}:S\succeq S_{j}^{r}$ for $j=1,2$
where $S_{1}^{r}=S_{C}$ and $S_{2}^{r}=S_{P}$. If $H_{0,2}$ is
not rejected (as before) while $H_{0,1}$ is, then such test results
may offer stronger evidence towards the hub's privileged information.
To test each $H_{0,j}$, one can directly employ a $p$-value calculated
in Section \ref{ssec:test}. Given the test of multiple hypotheses, a remaining crucial task is to control the accumulated error of false rejections, namely,
the family-wise error rate (FWER).

Consider a sequence of null hypotheses, $H_{0,1},...,H_{0,J}$ where
\begin{align}
H_{0,j} & :S\succeq S_{j}^{r}.\label{eq:multi_hyp}
\end{align}
Motivated by the example above, we assume the following:
\begin{assumption}\label{as:info_ordering}
 $S_{1}^{r}\succeq\cdots\succeq S_{J}^{r}$.
\end{assumption}
Assumption \ref{as:info_ordering} establishes a logical relationship among the nulls; for
instance, if $H_{0,1}$ holds true, $H_{0,2}$ is automatically true.
As shown below, this assumption provides additional structure in multiple
testing, thereby improving the performance of the test. Instead of
sequentially testing $H_{0,1}$ and $H_{0,2}$, one may wish to test
a hypothesis of the form $S_{1}^{r}\succeq S\succeq S_{2}^{r}$. Although
this form of hypothesis cannot be accommodated within our framework,
we do not necessarily view this as a limitation of our framework.
Rather, we view this as reflecting the robustness property of the
framework, which can be seen, for instance, from Lemma \ref{lem:bce_bne}.

To control the FWER, we propose the following step-wise procedure.
This procedure uses a common cutoff to be compared to $p$-values
for all nulls but does not accumulate Type I errors by exploiting
logical dependence (Assumption \ref{as:info_ordering}). This leads to a test that is valid
but more powerful than standard \cite{holm1979simple}'s-type procedures that contol
FWER. We enjoy this benefit thanks to the ordering of the information structures.

Similar to Section \ref{ssec:test}, $H_{0,j}:S\succeq S_{J}^{r}$ only if $H_{0,j}:T_{n,j}(\theta)=0,~\forall\theta$
where $T_{n,j}$ is the test statistic analogous to \eqref{eq:Tn1} for $H_{0,j}$.
Let $p_{j}\equiv p_{n,j}\equiv\sup_{\theta\in\Theta}p_{n,j}(\theta)$
where $p_{n,j}(\theta)$ is the $p$-value analogous to \eqref{eq:p_value} that corresponds to $H_{0,j}$. Let $\alpha$ be the overall test's nominal level.

\begin{algorithm}[Level-$\alpha$ sequential test]\label{alg:seq_test}
Proceed as follows.

\noindent Step 1: Test $H_{0,1}$ at level $\alpha$. If $p_{1}>\alpha$, then
stop (and reject no nulls). Otherwise, reject $H_{0,1}$ and go to Step
2.

$\vdots$

\noindent Step $j$: Test $H_{0,j}$ at level $\alpha$. If $p_{j}>\alpha$, then
stop. Otherwise, reject $H_{0,j}$ and go to Step $j+1$.

$\vdots$

\noindent Step $J$: Test $H_{0,J}$ at level $\alpha$. If $p_{J}>\alpha$, then
stop. Otherwise, reject $H_{0,J}$ and stop.
\end{algorithm}

Under Assumption \ref{as:info_ordering}, there exists an index $j^{*}\in\{1,...,J\}$ such
that $H_{0,1},\dots,H_{0,j^{*}-1}$ are false nulls and $H_{0,j^{*}},...,H_{0,J}$
are true nulls. Then, according to Algorithm \ref{alg:seq_test}, $H_{0,j^{*}}$ is the
first true null that is being tested, in which case all the previous
nulls are correctly rejected. Therefore, there is no accumulation
of false rejection errors up to that step, and controlling the error
below $\alpha$ for $H_{0,j^{*}}$ is sufficient to control for the
overall false rejection rate. Formally,
\begin{equation}\label{eq:FWER}
FWER\equiv P\left(\text{reject at least one true }H_{0,j}\right)=P\left(\text{reject }H_{0,j^{*}}\right)\le\alpha.    
\end{equation}
The last inequality holds because a level-$\alpha$ test is conducted
in each step and by Theorem \ref{thm:size_control}. The second equality holds by the following
argument. First, to reject any true null, we must first reach a true
null with the smallest $j$, namely, $H_{0,j^{*}}$. Reaching any $j>j^{*}$
implies that $H_{0,j^{*}}$ has already been rejected. Conversely, if
you do not reject $H_{0,j^{*}}$, you stop immediately, and no other
true null is ever tested. Hence rejecting at least one true null is
equivalent to rejecting $H_{0,j^{*}}$. Note that we do not need correction
as in \cite{holm1979simple}; testing each $H_{0,j}$ at level $\alpha$ and stopping
on the first non-rejection suffices. This aspect is appealing especially when one considers a fine grid of benchmark information structures (i.e., large $J$). A similar
approach appears in \cite{goeman2010sequential} and \cite{hansen2011model}. Let $\phi^{MH}(y^{n},x^{n})=1\{\text{Algorithm \ref{alg:seq_test} rejects some nulls}\}$.
The following theorem shows that the proposed test controls size.

\begin{thm}\label{thm:size_control_multi}
 Suppose Assumptions 1-\ref{as:info_ordering} hold. The modified Holm's procedure
satisfies
\begin{align*}
\limsup_{n\rightarrow\infty}\sup_{P\in\mathcal{P}}E_{P}[\phi^{MH}] & \le\alpha.
\end{align*}
\end{thm}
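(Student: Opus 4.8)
The plan is to establish that Algorithm~\ref{alg:seq_test} controls the FWER asymptotically by reducing the multiple-testing problem to a single application of Theorem~\ref{thm:size_control}. The key structural fact, already noted in the text via \eqref{eq:FWER}, is that Assumption~\ref{as:info_ordering} together with Proposition~\ref{prop:bce_bce} imposes a nested ordering on the prediction sets: since $S_1^r \succeq \cdots \succeq S_J^r$, Proposition~\ref{prop:bce_bce} gives $Q^{BCE}_{\theta,S_1^r}(x)\subseteq\cdots\subseteq Q^{BCE}_{\theta,S_J^r}(x)$ for every $\theta$ and $x$. This monotonicity is what makes the true nulls form a terminal segment $\{j^*,\dots,J\}$ of the index set whenever the data-generating information structure $S$ satisfies $S\succeq S^r_{j^*-1}$ but $S\nsucceq S^r_{j^*}$. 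Consequently the event of falsely rejecting some true null coincides exactly with the event of rejecting $H_{0,j^*}$, as argued informally in the paragraph preceding the theorem.

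First I would fix an arbitrary $P\in\mathcal P$ and let $S,\theta$ be the information structure and parameter whose existence is guaranteed by Assumption~\ref{as:Model}. I would define $j^*\equiv\min\{j:H_{0,j}\text{ is true under }P\}$ (with the convention that $j^*=J+1$ if all nulls are false, in which case there is nothing to control). By Assumption~\ref{as:info_ordering} and the nesting of prediction sets, the truth of $H_{0,j^*}$ forces the truth of $H_{0,j}$ for all $j\ge j^*$, so the set of true nulls is precisely $\{j^*,\dots,J\}$. I would then formalize the equivalence of events: Algorithm~\ref{alg:seq_test} tests hypotheses in increasing order of $j$ and halts at the first non-rejection, so the algorithm reaches and tests $H_{0,j^*}$ if and only if it has rejected every $H_{0,j}$ with $j<j^*$; and it rejects a true null if and only if it rejects $H_{0,j^*}$ itself. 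This gives the central event identity
\begin{align}
\{\text{reject some true }H_{0,j}\}=\{\text{reach and reject }H_{0,j^*}\}\subseteq\{p_{j^*}\le\alpha\}.\notag
\end{align}

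Next I would bound $P(p_{j^*}\le\alpha)$ using Theorem~\ref{thm:size_control}. Since $H_{0,j^*}$ is a true null, $P$ lies in the null model $\mathcal P$ associated with the baseline $S^r_{j^*}$ (the same $P$, reinterpreted against that particular baseline), and the decision rule $1\{p_{j^*}\le\alpha\}=1\{\sup_{\theta}p_{n,j^*}(\theta)\le\alpha\}$ is exactly the single-hypothesis test $\phi$ from Theorem~\ref{thm:size_control} applied with $S^r=S^r_{j^*}$. Therefore $\limsup_n\sup_{P\in\mathcal P}P(p_{j^*}\le\alpha)\le\alpha$. Combining this with the event inclusion yields $\limsup_n\sup_{P\in\mathcal P}E_P[\phi^{MH}]\le\alpha$, which is the claim. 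The main obstacle, and the step requiring the most care, is the uniformity: the cutoff index $j^*$ depends on $P$, so I must argue that the $\limsup$ bound survives taking the supremum over $P$. The clean way to handle this is to note that for \emph{every} fixed index $j$, the event $\{\text{reach and reject }H_{0,j}\text{ while }H_{0,j}\text{ is true under }P\}$ is contained in $\{p_j\le\alpha,\,H_{0,j}\text{ true}\}$, and on this latter event $P$ belongs to the null model for baseline $S^r_j$; hence Theorem~\ref{thm:size_control} applied to each baseline gives a uniform-in-$P$ bound $\alpha$, and since at most the single index $j^*(P)$ contributes a false rejection, no union bound or Holm-type correction is needed. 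I would take care to state that $\tau_n$ and $\alpha$ satisfy the hypotheses of Theorem~\ref{thm:size_control} for each baseline simultaneously, which is immediate because those conditions do not depend on the choice of $S^r$.
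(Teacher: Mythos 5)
Your proposal is correct and follows essentially the same route as the paper, whose argument is the one sketched around \eqref{eq:FWER}: Assumption~\ref{as:info_ordering} makes the true nulls a terminal segment $\{j^*,\dots,J\}$, the sequential structure of Algorithm~\ref{alg:seq_test} identifies the event of rejecting any true null with the event of rejecting $H_{0,j^*}$, and Theorem~\ref{thm:size_control} applied with baseline $S^r_{j^*}$ bounds that probability by $\alpha$ with no multiplicity correction. Your additional care about uniformity over $P$ (partitioning by the $P$-dependent index $j^*(P)$, applying the uniform bound of Theorem~\ref{thm:size_control} baseline-by-baseline, and using finiteness of $J$) is a point the paper leaves implicit, and it is handled correctly.
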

Although the theory allows for a general $J$, we note that $J=2$
case can already yield more convincing and interpretable conclusions
from test outcomes compared to the $J=1$ case, as the example above
suggests.
\begin{remark} Continuing the example above, one may wish to alternatively
set $S_{1}^{r}=S_{P_{1}}$ and $S_{2}^{r}=S_{P_{2}}$. Again, if $H_{0,1}$
is rejected but  $H_{0,2}$ is not, we can at least rule out the possibility
of not rejecting $H_{0,2}$ because of complete information of all
players. However, these nulls do not satisfy Assumption 6 and thus
the test cannot exploit any logical relationship. We can still apply
the original Holm's procedure, modified for the specification test
with a partially identified model.
\end{remark}

\subsection{Confidence Sets for Markets with a Specified Baseline}\label{subsec:CS_markets}

So far, we have considered testing if $S(x)\is S^r(x)$ across all $x\in X$ (or a known subset of $X$), which leads to a binary decision. In some applications, one may want to explore the following question: Given a baseline information structure $S^r(x),x\in X$, what is the set of markets where the true information structure is at least as informative as $S^r(x)$?
Specifically, for each $x\in X$, consider the hypothesis
\begin{align}
H_{0,x}: S(x)\succeq S^r(x).
\end{align}
Let $X_0$ be the set of $x$'s for which $H_{0,x}$ is true. Our objective here is to construct a confidence region for $X_0$.

Letting $T_x(\theta)\equiv\sup_{b\in \ball}\{b^{\top}P_{y|x}-h(b,Q^{BCE}_{\theta,S^r}(x))\}$ be the one-sided population measure of the distance between the CCP and BCE prediction, we can recast the problem as testing a family of hypotheses $H_{0,x}: T_x(\theta)=0$ across $x\in X$.
Let 
\begin{align}
	T_{n,x}(\theta)&=\sup_{b\in \balln}\sqrt n\{b^{\top}\hat P_{n,x}-h(b,\bce{S^r})\}\\
	T^*_{n,x}(\theta)&=\sup_{b\in \Psi_{n,x,\theta}}\{\mathbb G^*_n(b,x)\}
\end{align}
 be the sample and bootstrap analogs of $T_x(\theta)$, where $\Psi_{n,x,\theta}\equiv\{b:\hat \eta_{n,\theta}(b,x)\ge -\tau_n,b\in\balln\}.$ These quantities resemble $T_n$ and $T^*_n$ but do not aggregate the statistics across $X$.
  One can then define tests for individual hypotheses and combine them to construct a confidence region. Specifically, define the p-value: $p_n(\theta,x)=P^*(T^*_{n,x}(\theta)>T_{n,x}(\theta)|y^n,x^n)$. Then, $p_n(x)=\sup_{\theta\in\Theta}p_n(\theta,x)\le \alpha$ is an asymptototically valid level-$\alpha$ test for each $H_{0,x}$.
 
We define a confidence set as follows:
\begin{align}
	\text{CS}_n\equiv\big\{x\in X:p_n(x)> \alpha_x\big\},
\end{align}
where $\alpha_x$ is chosen to control the FWER accounting for multiple comparisons.
For example, common choices are the Bonferroni ($\alpha_x=\frac{\alpha}{|X|}$) and Holm ($\alpha_{x_{(k)}}=\frac{\alpha}{|X|-k+1}$) procedures.
For the Holm procedure, the $x$-values are ordered according to the corresponding p-values: $p_n(x_{(1)})\le p_n(x_{(2)})\le \dots\le p_n(x_{(|X|)})$. This procedure starts from the smallest p-value, rejects $H_{0,x}$ if $p_n(x_{(k)})\le \alpha_{x_{(k)}}$, and  stops at the step in which one does not reject for the ﬁrst time. We then keep all remaining $x$ values in $\text{CS}_n$. The standard argument ensures that these confidence sets have the following coverage property:
\begin{align}
	\liminf_{n\to\infty}\inf_{P\in \mathcal P}P(X_0\in \text{CS}_n)\ge 1-\alpha.
\end{align}

\section{Monte Carlo Experiments}\label{sec:simulation}

\subsection{Simulation Design}
To illustrate our method, we construct a class of information structures where players receive a private signal in addition to knowing their own payoff types. In the limit, the private signal is perfectly informative about a portion of the opponent's payoff type, making it public information as in $S_{Pub}$ in Example 2. This limit information structure is adopted by \cite{aguirregabiria2007sequential} in the context of dynamic entry games.

To fix ideas, consider the two-player entry game of Example~1. Firm $i$ earns zero profits when not entering, and earns
$\pi_i(y,\varepsilon_i;x,\theta_\pi)
=
x\beta+\Delta y_{-i}+\varepsilon_i$
upon entry, where $x$ is a random variable supported on $\{-M,M\}$. Suppose that the payoff type $\varepsilon_i$ admits the decomposition $\varepsilon_i=\nu_i+\epsilon_i$  where $\nu_i$ is a discrete component for which a player can receive the opponent's signal. Let $\nu_i$ take values in $\{-\eta,\eta\}$ for some $\eta>0$, corresponding to low and high states, and suppose
\[
P(\nu_i=\eta)=\mu\in[0,1].
\]
The component $\nu_i$ is i.i.d.\ across players. The opponent receives a signal $t_i$ about this component.
The remaining component $\epsilon_i$ is a standard normal random variable, also independent across $i$. The structure of the game is common knowledge among the players.

Consider an information structure in which, in addition to knowing their own payoff type, each player observes an informative signal $t_i$ with support $\{\eta,-\eta\}$ that conveys information about the realization of $\nu_{-i}$. In particular, the signal reveals $\nu_{-i}$ with probability $\xi$, that is,
\[
P(t_i=\nu \mid \nu_{-i}=\nu)=\xi, \qquad \nu\in\{\eta,-\eta\}.
\]
Thus, upon observing $t_i=\eta$, player $i$ forms the posterior belief
\begin{align*}
P(\nu_{-i}=\eta \mid t_i=\eta)
&=
\frac{\mu \xi}{\mu \xi+(1-\mu)(1-\xi)}
\;\equiv\;
\rho_\eta(\mu,\xi)\\
P(\nu_{-i}=-\eta \mid t_i=-\eta)
&=
\frac{(1-\mu)\xi}{(1-\mu)\xi+\mu(1-\xi)}
\;\equiv\;
\rho_{-\eta}(\mu,\xi).
\end{align*}
We refer to this information structure as $S_\xi$. Note that as $\xi\to \tfrac12$, $\rho_\eta(\mu,\xi)\to \mu$. Hence, as $\xi$ approaches $1/2$, $S_\xi$ converges to the incomplete information structure $S_I$, under which each player only knows their own type.

For this information structure, Bayesian Nash equilibria are characterized by threshold strategies of the form
\begin{align}
y_i
=
\mathbbm 1\{\varepsilon_i \ge \tau_i(\nu_i,t_i)\},
\qquad i=1,2,
\label{eq:BNE_threshold}
\end{align}
where $\tau_i(\nu_i,t_i)$ is defined in Appendix~\ref{sec:Appendix_MC}. Let $\sv(\tau)\equiv P(\varepsilon_i\ge\tau)$.
The equilibrium probability of entry for a player conditional on the payoff state $\nu_i$ is given by
\begin{align*}
P_{\nu_i=\eta,x}
&=
\xi\,\sv(\tau(\eta,\eta))+(1-\xi)\,\sv(\tau(\eta,-\eta)),\\
P_{\nu_i=-\eta,x}
&=
\xi\,\sv(\tau(-\eta,-\eta))+(1-\xi)\,\sv(\tau(-\eta,\eta)).
\end{align*}

We simulate a Bayesian Nash equilibrium outcome using the conditional distribution
\begin{align}
P_{y\mid x}(y\mid x)
=
\int
\prod_{i=1}^2
\bigl(P_{\nu_i=\eta,x}\bigr)^{y_i}
\bigl(1-P_{\nu_i=\eta,x}\bigr)^{1-y_i}
\, dP(\nu),
\qquad
y=(y_1,y_2)^\top\in Y .
\end{align}
This probability depends on the informativeness of the signal, measured by $\xi\in[0.5,1]$. Figure~\ref{fig:ccp_and_bce} (top panel) illustrates how the equilibrium CCP varies with $\xi$. As signal quality increases, the entire CCP vector moves from the CCPs associated with low-quality signals (blue dots) toward those associated with high-quality signals (yellow dots).

We test the null hypothesis that $S \succeq S_{Pub}$, that is, that the true information structure $S$ is at least as informative as the public information structure $S_{Pub}$. Under $S_{Pub}$, the payoff components $\nu_i$ are publicly observed. This benchmark corresponds to the case $\xi=1$.

Figure~\ref{fig:ccp_and_bce} also plots the BCE prediction for a fixed value of $\theta$. As expected, the BCE prediction contains the CCP corresponding to $\xi=1$. It also contains some CCPs generated under relatively informative private signals about $\nu_{-i}$. Consequently, no test can be expected to have power against alternatives that are close to the null in terms of signal quality.

By contrast, meaningful power can be attained when the equilibrium CCP lies outside the BCE prediction under $S^{r}$, which occurs when $\xi$ departs sufficiently from $1$ toward $0.5$. Intuitively, as signals become less informative, players rely more heavily on their priors about opponents’ types, leading to systematic shifts in entry behavior that are no longer compatible with the BCE prediction under the public-information benchmark.

Below, we examine the size and power of the test when the data are generated from a Bayesian Nash equilibrium with $\xi$ varying over $[0.5,1]$.

\begin{figure}[htbp]
\centering
\caption{Equilibrium CCPs and the BCE Prediction}
\label{fig:ccp_and_bce}
\vspace{0.5em}
\begin{tabular}{cc}
\textbf{(a)} Full simplex view & \textbf{(b)} Zoomed view \\[0.5em]
\includegraphics[width=0.48\textwidth]{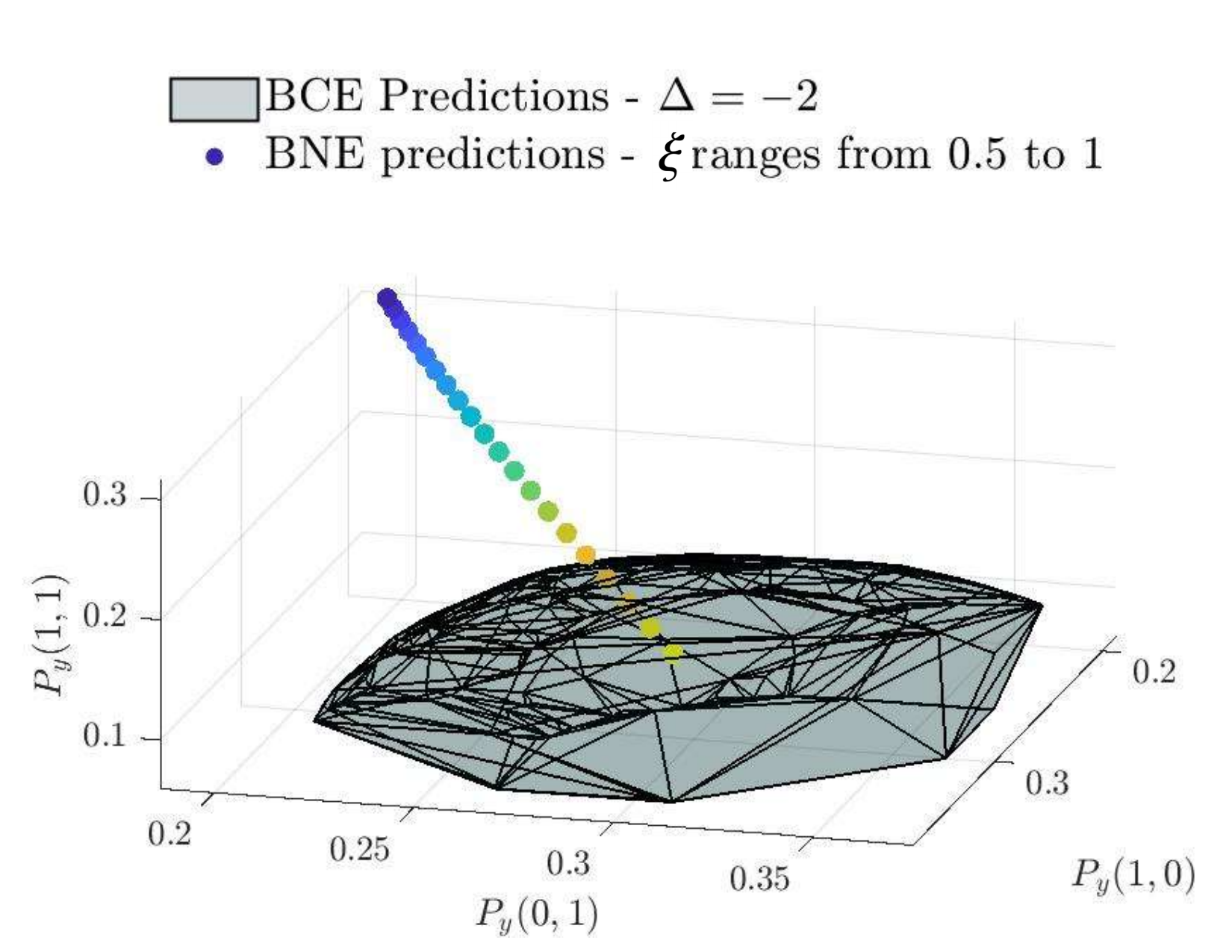} &
\includegraphics[width=0.48\textwidth]{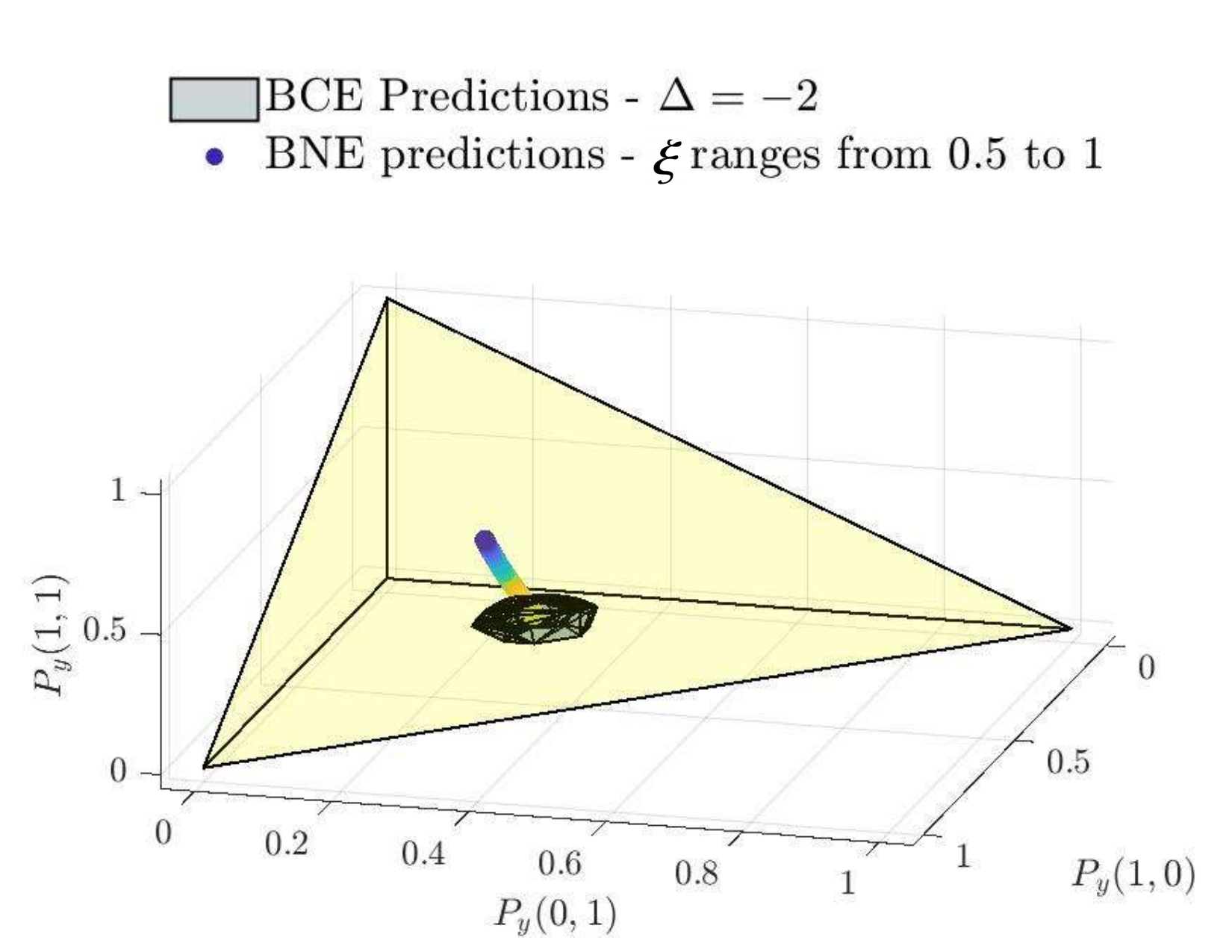}
\end{tabular}

\vspace{0.5em}
\begin{minipage}{0.95\textwidth}
\small
\textit{Notes:} The shaded region represents the BCE prediction set for $\Delta = -2$. Dots represent BNE conditional choice probabilities as signal quality $\xi$ varies from 0.5 (blue) to 1 (yellow). Panel (a) shows the BCE prediction within the probability simplex; panel (b) provides a zoomed view near the boundary. As $\xi$ increases toward 1, the CCP moves from outside the BCE prediction into the set.
\end{minipage}
\end{figure}

\subsection{Simulation Results}

Figure \ref{fig:power_M3} reports the power of the proposed test for a sample of size of 1,000 with $M=1,2,$ and 3. The null hypothesis corresponds to $\xi=1$, and the significance level is set to $0.05$. The test controls size uniformly across all values of $M$, although it is conservative at $\xi=1$, as expected.

Recall from Figure~\ref{fig:ccp_and_bce} that the conditional choice probabilities remain inside the BCE prediction for large values of $\xi$. 
Consistent with this observation, the rejection probability begins to increase only once $\xi$ becomes sufficiently small. It then rises rapidly as $\xi$ decreases further, indicating that the test is able to detect local deviations from the boundary of the BCE prediction.

Figure~\ref{fig:power_M3} also reveals a nonmonotonic interaction between signal informativeness $\xi$ and the support of the covariates. When $\xi$ is either sufficiently high or sufficiently low (specifically, $\xi$ above $0.9$ or below $0.75$), a larger support ($M=3$) yields slightly higher rejection probabilities than the other designs. In contrast, for intermediate values of $\xi$—roughly between $0.775$ and $0.875$—the test achieves higher power when $M=1$. Interestingly, for almost all values of $\xi$, the test exhibits the lowest power when $M=2$.

\begin{figure}[htbp]
\centering
\caption{Rejection Probability of the Test}
\label{fig:power_M3}
\vspace{0.5em}
\includegraphics[trim={0 3.83in 0 3.85in},clip,width=\textwidth]{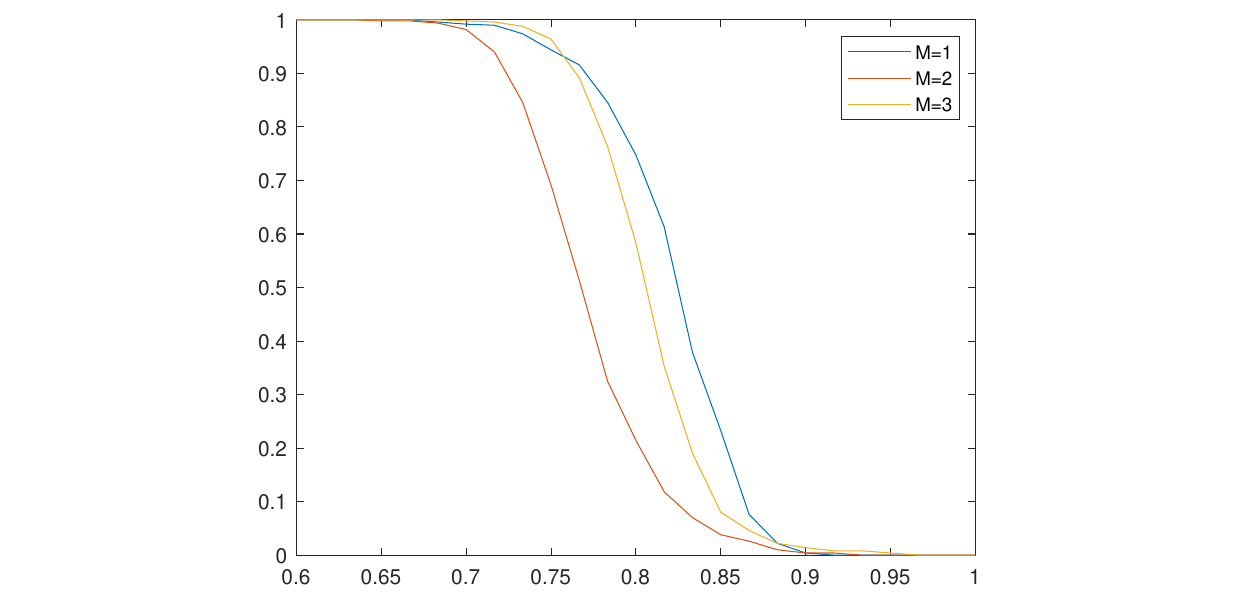}

\vspace{0.5em}
\begin{minipage}{0.95\textwidth}
\small
\textit{Notes:} The horizontal axis is signal quality $\xi$, ranging from 0.5 (incomplete information) to 1 (public information, the null hypothesis). The vertical axis is the rejection probability. Sample size is $n = 1{,}000$ and the nominal level is 5\%. The three lines correspond to different supports for the covariate $x \in \{-M, M\}$.
\end{minipage}
\end{figure}
\section{Application: Information in Airline Markets}\label{sec:application}

\subsection{Economic Environment and Research Question}

The US airline industry provides a compelling setting to examine information asymmetries in strategic entry decisions. Following deregulation in the 1970s, the industry rapidly evolved toward a hub-and-spoke network structure, where major airlines concentrate operations at specific airports. This market structure raises an important question: do hub airlines possess informational advantages beyond their well-documented cost and demand benefits?

The economics literature has established that hub operations confer significant advantages. \cite{Berry:1996aa} show that hub airlines benefit from both demand-side advantages (through improved service quality for time-sensitive travelers) and cost-side economies. These advantages stem from increased flight frequency, more convenient connections, and operational efficiencies at hub airports. However, whether hub presence also provides superior information about market conditions remains an open question.

Entry decisions in airline markets are inherently strategic, as emphasized in the seminal work of \cite{berry1992estimation} and further developed by \cite{ciliberto2009market} and \cite{ciliberto2021market}. An airline's profitability depends crucially on which competitors choose to serve the same route. The discrete nature of the entry decision and the observable market structure align well with our theoretical framework for testing information ordering. This environment thus represents an ideal application for our method.  

Our analysis abstracts from the full network structure of airline operations to focus specifically on information asymmetries. While network effects induce important correlations in firms' decisions across markets—as modeled in  \cite{bontemps2023online} and \cite{yuan2024network}—we treat each market independently to isolate the role of information in entry decisions.

\subsection{Data and Descriptive Evidence}\label{ssec:data}

We analyze airline entry decisions using data from the Department of Transportation's Origin and Destination Survey (DB1B) and Domestic Segment (T-100) database for the first quarter of 2019. Following the empirical literature on airline competition \citep[e.g.,][]{berry1992estimation,ciliberto2009market}, we define a market as a directional city-pair route and consider an airline as actively serving a market if it carries at least 90 passengers in the quarter.

To maintain a parsimonious specification while capturing the key economic forces, we aggregate airlines into three strategic players in each market. The \emph{Hub Major} player represents a major airline (American, Delta, or United) for which at least one endpoint of the route is a hub. The \emph{Non-Hub Major} player aggregates major airlines without hub presence in the market. The \emph{Non-Major} player comprises low-cost carriers, JetBlue, Alaska and Southwest Airlines. Our analysis focuses exclusively on hub markets---routes where at least one endpoint serves as a hub for a major airline---yielding 1,810 market observations. This restriction ensures that every market features a potential hub player, allowing us to test hypotheses about informational advantages conferred by hub status.

For observable covariates, we follow \cite{berry1992estimation} and \cite{ciliberto2009market} and include market size, measured as the geometric mean of endpoint MSA populations, as a market-level characteristic. For player-specific variables, we use airport presence, defined as the fraction of markets an airline serves from an airport. All continuous variables are discretized at their medians to create binary indicators, yielding $2^4 = 16$ distinct market types based on the following characteristics: market size, hub airline presence, non-hub major presence, and non-major presence.

Entry rates differ markedly across player types. Non-Major carriers exhibit the highest entry rate at 79.8\%, reflecting the prevalence of low-cost carrier service in hub markets. Hub Major airlines enter 42.5\% of markets where they have hub presence, while Non-Hub Major airlines enter only 15.2\% of markets, consistent with the difficulty of competing against an established hub carrier.

Table \ref{tab:market_structure_new} displays the distribution of market structures across all markets and by key covariates. The most common market structure involves only the Non-Major player operating (47.0\% of markets), followed by duopoly competition between the Hub Major and Non-Major (23.3\%). Markets with no entry account for only 3.9\% of observations, while markets with all three player types present account for 4.9\%.

The distribution of market structures varies substantially with observable characteristics. In small markets, Non-Major monopoly is particularly prevalent (60.8\%), while larger markets exhibit more competitive structures with higher rates of multi-player entry. Hub airline presence strongly predicts market structure: in markets where the hub airline has high presence, Hub Major entry (alone or with others) occurs in 72.3\% of cases, compared to only 13.0\% when hub presence is low. Conversely, Non-Major monopoly drops from 69.0\% in low hub-presence markets to 24.8\% in high hub-presence markets.

\begin{table}[htbp]
\begin{threeparttable}
\caption{Distribution of Market Structures}
\label{tab:market_structure_new}
\centering
\small
\begin{tabular}{lccccccccc}
\hline\hline
 & \multicolumn{8}{c}{Market Structure (Hub, Non-Hub Major, Non-Major)} & \\
\cline{2-9}
 & (0,0,0) & (0,0,1) & (0,1,0) & (1,0,0) & (0,1,1) & (1,0,1) & (1,1,0) & (1,1,1) & $N$ \\
\hline
All Markets & 3.9 & 47.0 & 2.0 & 10.6 & 4.6 & 23.3 & 3.7 & 4.9 & 1,810 \\[1ex]
\multicolumn{10}{l}{\textit{By Market Size}} \\
\quad Small & 6.9 & 60.8 & 0.6 & 8.6 & 2.0 & 19.3 & 0.8 & 1.1 & 905 \\
\quad Large & 0.9 & 33.3 & 3.5 & 12.6 & 7.2 & 27.2 & 6.6 & 8.7 & 905 \\[1ex]
\multicolumn{10}{l}{\textit{By Hub Major Presence}} \\
\quad Low & 6.8 & 69.0 & 2.9 & 0.7 & 8.4 & 6.4 & 1.4 & 4.5 & 910 \\
\quad High & 0.9 & 24.8 & 1.2 & 20.7 & 0.8 & 40.3 & 6.0 & 5.3 & 900 \\
\hline\hline
\end{tabular}
\begin{tablenotes}[flushleft]
\small
\item \textit{Notes:} Sample includes 1,810 hub markets from DOT DB1B and T-100 for Q1 2019. Entries are percentages. Market structure is denoted (Hub, Non-Hub Major, Non-Major), where 1 indicates entry and 0 indicates no entry. For example, (1,0,1) indicates that the Hub Major and Non-Major players entered while the Non-Hub Major did not. Market size and hub presence are binary indicators split at the median.
\end{tablenotes}
\end{threeparttable}
\end{table}

Several patterns in the descriptive evidence merit discussion. First, Non-Major carriers are nearly ubiquitous, entering nearly 80\% of hub markets. This reflects the competitive pressure that low-cost carriers exert even in markets dominated by hub airlines. Second, the strong correlation between hub presence and hub entry suggests that hub airlines' entry decisions are closely tied to their existing network structure. Third, the relatively low rate of Non-Hub Major entry (15.2\%) indicates that competing against an established hub airline is challenging, consistent with the cost and demand advantages documented in the literature.

These patterns motivate our test of information ordering. If hub airlines possess informational advantages beyond their cost and demand benefits, we would expect this to manifest in systematic differences in entry behavior that cannot be rationalized under the baseline information structure. The variation in market structures across the 16 market types provides the identifying variation for our test.

\subsection{Testing Information Ordering}\label{ssec:test_application}

We now apply the testing framework developed in Section \ref{ssec:test} to examine whether hub airlines have informational advantages in entry decisions. The null hypothesis posits that the true information structure $S$ is at least as informative as a baseline structure $S^r$ featuring privileged information for hub airlines:
\begin{align}
H_0: S \succeq S^r \quad \text{versus} \quad H_1: S \nsucceq S^r,
\end{align}
where $S^r = S_P$ denotes the privileged information structure defined in Section 2.1. Under $S_P$, the hub airline observes the complete vector of payoff shocks $\varepsilon = (\varepsilon_1, \varepsilon_2, \varepsilon_3)$, while non-hub players observe only their own payoff realizations. This specification captures the hypothesis that hub presence confers informational advantages---beyond the cost and demand benefits documented in the literature---by providing better knowledge of competitors' profitability.

Under the null hypothesis, the observed conditional choice probabilities must satisfy $P_{y|x} \in \bce{S_P}$ for some parameter vector $\theta \in \Theta$ and all market types $x \in X$. As established in Section \ref{ssec:test}, we construct the test statistic $T_n(\theta)$ defined in \eqref{eq:Tn1}, which measures the maximum deviation of the empirical conditional choice probabilities from the BCE prediction set. The null hypothesis is rejected when the bootstrap $p$-value $p_n(\theta) \leq \alpha$ for all $\theta \in \Theta$.

Our empirical setting features 16 distinct market types defined by the binary covariates described in Section \ref{ssec:data}. We implement two complementary approaches. First, we test the joint null hypothesis that $S(x) \succeq S_P(x)$ for all market types simultaneously. Second, following the framework developed for confidence sets in Section \ref{subsec:CS_markets}, we construct confidence sets for the markets where the privileged information assumption holds, applying both the Bonferroni and Holm procedures to control the family-wise error rate at level $\alpha = 0.05$.

\subsection{Test Results}

Table \ref{tab:test_results_new} reports the test results for each of the 16 market types. The columns display market characteristics, sample size, the $p$-value from the market-specific test, and the decisions under both the Bonferroni and Holm procedures.

\begin{table}[htbp]
\begin{threeparttable}
\caption{Test Results for Privileged Information Structure}
\label{tab:test_results_new}
\centering
\small
\begin{tabular}{cccccrccc}
\hline\hline
 & \multicolumn{4}{c}{Market Characteristics} & & & \multicolumn{2}{c}{Decision} \\
\cline{2-5} \cline{8-9}
Market & Size & Hub Pres. & NHM Pres. & NM Pres. & $n$ & $p$-value & Bonferroni & Holm \\
\hline
1  & Small & Low  & Low  & Low  &  82 & 0.602 & Accept & Accept \\
2  & Small & Low  & Low  & High & 233 & 0.177 & Accept & Accept \\
3  & Small & Low  & High & Low  &  66 & 0.054 & Accept & Accept \\
4  & Small & Low  & High & High & 128 & 0.495 & Accept & Accept \\
5  & Small & High & Low  & Low  & 105 & 0.019 & Accept & Accept \\
6  & Small & High & Low  & High & 144 & 0.018 & Accept & Accept \\
7  & Small & High & High & Low  &  88 & 0.347 & Accept & Accept \\
8  & Small & High & High & High &  59 & $<$0.001 & Reject & Reject \\[0.5ex]
9  & Large & Low  & Low  & Low  &  54 & 0.516 & Accept & Accept \\
10 & Large & Low  & Low  & High &  95 & 0.017 & Accept & Accept \\
11 & Large & Low  & High & Low  & 161 & $<$0.001 & Reject & Reject \\
12 & Large & Low  & High & High &  91 & $<$0.001 & Reject & Reject \\
13 & Large & High & Low  & Low  &  98 & $<$0.001 & Reject & Reject \\
14 & Large & High & Low  & High &  94 & $<$0.001 & Reject & Reject \\
15 & Large & High & High & Low  & 252 & $<$0.001 & Reject & Reject \\
16 & Large & High & High & High &  60 & $<$0.001 & Reject & Reject \\
\hline\hline
\end{tabular}
\begin{tablenotes}[flushleft]
\small
\item \textit{Notes:} The null hypothesis is $H_0: S(x) \succeq S_P(x)$, where $S_P$ denotes the privileged information structure under which the hub airline observes all payoff shocks while other players observe only their own. The test is conducted at the 5\% significance level. The $p$-value column reports $\sup_{\theta \in \Theta} p_n(\theta, x)$ for each market type $x$. For $p$-values reported as $<$0.001, the numerical value was below the precision threshold. For $p$-values exceeding 0.05, the reported values are lower bounds, as the optimization terminates once $p_n(\theta) > \alpha$. The Bonferroni procedure uses critical value $\alpha/16 = 0.003125$; the Holm procedure applies the sequential algorithm described in Section \ref{subsec:CS_markets}.
\end{tablenotes}
\end{threeparttable}
\end{table}

The results provide strong evidence against the privileged information hypothesis. Under both Bonferroni and Holm precedures, we reject the null hypothesis in seven markets: 8 and 11--16. 
The pattern of rejections reveals systematic heterogeneity across market types. All rejections occur in markets characterized by either large size or high hub airline presence, and typically both. Specifically, among large markets, six out of eight reject under the Holm procedure---the exceptions being markets 9 and 10, which feature low hub presence and low non-hub major presence. The only small-market rejection (market 8) occurs in the only small market where all players have high presence. These patterns suggest that the privileged information assumption is most strongly violated in competitive environments where multiple players are well-positioned to serve the market.

To interpret these findings, recall that rejection of $H_0: S(x) \succeq S_P(x)$ implies that the observed conditional choice probabilities lie outside the BCE prediction set $\bce{S_P}$. This can occur for two distinct reasons. First, the true information structure may be \emph{less} informative than $S_P$---for instance, if hub airlines do not actually observe competitors' payoff shocks. Second, the true information structure may be incomparable to $S_P$ under the partial order $\succeq$, which would arise if information is distributed differently than the privileged structure assumes.

The concentration of rejections in large, competitive markets points toward the first interpretation. In these markets, we observe entry patterns (particularly the frequency of simultaneous entry by hub and non-major carriers) that are difficult to reconcile with the hub airline having complete information about competitors' profitability. If hub airlines knew that entry by non-major carriers would be profitable, the strategic response under complete information would perhaps differ from the observed patterns. This suggests that hub airlines' entry decisions are based on less precise information about competitors than the privileged structure assumes, consistent with an information structure closer to incomplete information where each player observes primarily their own payoff shock.

These results have implications for counterfactual policy analysis. Researchers seeking to simulate entry behavior under alternative market conditions should be cautious about assuming that hub airlines have superior information. Our findings suggest that a more symmetric information structure, where hub status confers cost and demand advantages but not substantial informational advantages, may better approximate the true strategic environment in airline markets.

\section{Conclusion}\label{sec:conclusion}
The information available to players is a key primitive of any strategic environment. While it is difficult to specify the information structure precisely, data can provide guidance on whether the true information structure exceeds a proposed baseline. This paper develops a formal testing framework for information ordering in discrete games, leveraging the robust prediction properties of Bayes Correlated Equilibrium to translate hypotheses about information structures into testable moment inequalities. Our bootstrap-based testing procedure achieves uniform asymptotic validity across a broad class of data-generating processes, and we extend the framework to accommodate sequential testing of multiple hypotheses and the construction of confidence sets for markets satisfying a given baseline. Monte Carlo experiments demonstrate that the test properly controls size and exhibits meaningful power against local alternatives.

Our application to the U.S. airline industry illustrates the practical value of the methodology. Testing whether hub airlines have informational advantages beyond their well-documented cost and demand benefits, we find strong evidence against the privileged information hypothesis, with rejections concentrated in large, competitive markets. These results suggest that researchers conducting counterfactual policy analysis in airline markets should be cautious about assuming that hub status confers superior information about competitors' profitability. More broadly, our framework provides a tool for detecting asymmetric information in a range of strategic settings without imposing restrictive parametric assumptions on the underlying information structure.

\bibliography{bceref,bceref2}

\clearpage
\appendix

\section{Computation}

\subsection{Computing $V_{n,x}(\theta)$}\label{app:comput_V} 

The object $V_{n,x}(\theta)$ is the value of the following
maxmin program 
\begin{align*}
V_{n,x}(\theta)=\max_{b\in \balln}\min_{q\in \bce{S^r}}\left[b^{\top}\hat P_{n,x}-b^\top q\right], & \quad(P0)
\end{align*}
which can be computed for every value of $x$ and $\theta$.

\textbf{Step 1 - Discretization:} To make $(P0)$ feasible we approximate
the infinite dimensional object $\nu$ by discretizing the set $\mathcal{E}$.
Let $\mathcal{E}^{r}\subset\mathcal{E}$ be the discretized set, with
$|\mathcal{E}^{r}|=r$. We obtain $\mathcal{E}^{r}$ as the product
space of $\mathcal{E}_{i}^{r},$ where every set $\mathcal{E}_{i}^{r}$
contains $r_{i}=\frac{r}{|N|}$ equally spaced quantiles of $F_{\varepsilon_{i}}$.\footnote{We have experimented with other discretization techniques (e.g. Halton
sets, random draws) and have found negligible impact on our results
as long as $\mathcal{E}^{r}$ includes at least some relatively extreme
(both postive and negative) payoff types. This is because the incentive
compatibility constraint of BCE is more likely to be binding for these
values.} We also define $f^{r}\left(\cdot;\theta_{\varepsilon}\right)$ as
the probability mass function over $\mathcal{E}^{r},$ where the mass
of each $\varepsilon\in\mathcal{E}^{r}$ is generated by a Normal
copula with correlation parameter $\rho=\theta_{\varepsilon}.$ The
program $(P0)$ is then approximated by the feasible program 
\begin{eqnarray*}
\max_{b\in\mathbb{R}^{|Y|}}\min_{q\in\mathbb{R}^{|Y|},\nu\in\mathbb{R}^{|Y|\times r}} & b^\top\left(\hat P_{n,x}-q\right) & (P1)\\
s.t. & b^\top W_{n,x}b-1 & \leq0\\
\forall y\in Y & q\left(y\right)-\sum_{\varepsilon}\nu\left(y,\varepsilon\right) & =0\\
\forall\varepsilon\in\mathcal{E}^{r} & \sum_{y}\nu\left(y,\varepsilon\right)-f^{r}\left(\varepsilon;\theta_{\varepsilon}\right) & =0\\
 & \sum_{y,\varepsilon}\nu\left(y,\varepsilon\right)-1 & =0\\
\forall i,y_{i},y_{i}',\varepsilon_{i} & \sum_{y_{-i}}\sum_{\varepsilon_{-i}}\nu\left(y,\varepsilon_{i},\varepsilon_{-i}\right)\left(\pi_{i}\left(y_{i}^{\prime},y_{-i},\varepsilon_{i};x,\theta\right)-\pi_{i}\left(y,\varepsilon_{i};x,\theta\right)\right) & \leq0.
\end{eqnarray*}
Although in $(P0)$ the minimum is taken over $q\in Q_{\theta}^{BCE}\left(x\right)$
only, here we minimize over both a vector of predictions $q\in\mathcal{P}_{Y}$
and a distribution $\nu\in\mathcal{P}_{Y\times\mathcal{E}^{r}}$ whose
marginal on $Y$ corresponds to $q.$ The restriction that $q$ must
be a BCE prediction is now incorporated by imposing that $\nu$ must
satisfy the constraints that characterize BCE distributions, as specified
in Definition 2.

\textbf{Step 2 - Vectorization:} The discretized $\nu$ is a matrix
with dimensions $|Y|\times r$; we define $v=\mbox{vec}\left(\nu\right),$
the vectorized $\nu$ that stacks the columns of $\nu$ in a vector
with $d_{\nu}=|Y|\cdot r$ rows. We further specify how this vector
is constructed; the vector $v$ is a column vector formed by $r$
columns 
\[
\nu_{y}\left(\varepsilon\right)=\left[\begin{array}{c}
\nu\left(y^{1},\varepsilon\right)\\
\nu\left(y^{2},\varepsilon\right)\\
..\\
..\\
\nu\left(y^{|Y|},\varepsilon\right)
\end{array}\right]
\]
where vectors of actions are orderer in an order $y^{1}...y^{|Y|},$
and for a specific value of $\varepsilon$. The full $v$ is then:
\[
v=\left[\begin{array}{c}
\nu_{y}\left(\varepsilon^{1}\right)\\
\nu_{y}\left(\varepsilon^{2}\right)\\
..\\
..\\
\nu_{y}\left(\varepsilon^{r}\right)
\end{array}\right].
\]
To define orderings for both $\varepsilon$ and $y,$ we start from
complete orderings of both $\mathcal{E}_{i}$ and $Y_{i}$, summarized
by the respective sets of indices. Then, we order the vectors as follows:
\begin{align*}
\varepsilon^{1}=\left(\varepsilon_{1}^{1},...,\varepsilon_{|N|}^{1}\right); & y^{1}=\left(y_{1}^{1},...,y_{|N|}^{1}\right)\\
\varepsilon^{2}=\left(\varepsilon_{1}^{1},...,\varepsilon_{|N|-1}^{1},\varepsilon_{|N|}^{2}\right); & \varepsilon^{2}=\left(y_{1}^{1},...,y_{|N|-1}^{1},y_{|N|}^{2}\right)\\
.. & ..\\
\varepsilon^{r_{|N|}}=\left(\varepsilon_{1}^{1},...,\varepsilon_{|N|-1}^{1},\varepsilon_{|N|}^{r_{|N|}}\right); & y^{r_{|N|}}=\left(y_{1}^{1},...,y_{|N|-1}^{1},y_{|N|}^{r_{|N|}}\right)\\
\varepsilon^{r_{|N|}+1}=\left(\varepsilon_{1}^{1},...,\varepsilon_{|N|-1}^{2},\varepsilon_{|N|}^{1}\right); & y^{r_{|N|}+1}=\left(y_{1}^{1},...,y_{|N|-1}^{2},y_{|N|}^{1}\right)\\
\varepsilon^{r_{|N|}+2}=\left(\varepsilon_{1}^{1},...,\varepsilon_{|N|-1}^{2},\varepsilon_{|N|}^{2}\right); & y^{r_{|N|}+2}=\left(y_{1}^{1},...,y_{|N|-1}^{2},y_{|N|}^{2}\right)\\
..
\end{align*}
We then transform $(P1)$ by defining new variables $\tilde{p}=\hat P_{n,x}-q$
and 
\[
z=\left[\begin{array}{c}
z_{1}\\
z_{2}
\end{array}\right]=\left[\begin{array}{c}
\tilde{p}\\
v
\end{array}\right].
\]
As the set of predictions is a subset of the $(|Y|-1)$-dimensional
simplex, we modify the objective of the program to $(\tilde{b},0)^\top\left(\hat P_{n,x}-q\right)$,
where $\tilde{b}$ is a vector in the $\left(|Y|-1\right)-$dimensional
closed ball $\balln$. This modified objective yields
a value of zero if and only if the original program has a value of
zero. The transformed program is:
\begin{eqnarray*}
\max_{\tilde{b}\in\mathbb{R}^{|Y|-1}}\min_{z_{1}\in\mathbb{R}^{|Y|},z_{2}\in\mathbb{R}_{+}^{d_{\nu}}} & \left[\begin{array}{c}
\tilde{b}\\
0_{d_{\nu}+1}
\end{array}\right]^\top z, & (P2)\\
s.t. & \tilde{b}^\top W_{n,x}\tilde{b} & \leq1\\
 & A_{eq}z & =a\\
 & A_{ineq}z & \leq0_{d_{ineq}},
\end{eqnarray*}
where $A_{eq},A_{ineq}$ and are matrices that stack, respectively,
linear equality constraints and linear inequalities, and whose numbers
of rows are $d_{eq}=|Y|+r+1$ and $d_{ineq}=\sum_{i\in N}(|Y_{i}|\cdot|Y_{i}-1|\cdot r_{i})$
. The object $a$ is a vector of constants, and we use $0_{d}$, $1_{d}$
and $I_{d}$ to denote the $d-$vector of zeros and ones, and the
$d\times d$ identity matrix. To construct the matrix $A_{eq},$ notice
that the equality constraints in $(P1)$ can be written as 
\begin{align*}
I_{|Y|}z_{1}+ & A_{eq}^{1}z_{2}=\hat P_{n,x}\\
A_{eq}^{2}z_{2} & =f^{r}\left(\theta_{\varepsilon}\right)\\
1_{d_{v}}^\top z_{2} & =1,
\end{align*}
where $A_{eq}^{1}$ is a matrix with $|Y|$ rows whose $k-$th row
is made of $r$ copies of a $|Y|-$vector with components all zeros
except for the $k-$th component that is equal to one, and $A_{eq}^{2}$
is a block-diagonal matrix with $r$ rows and $1_{|Y|}^\top$ on the
diagonal. The $d_{eq}\times d_{z}$ matrix $A_{eq}$ is then 
\[
A_{eq}=\text{\ensuremath{\left[\begin{array}{cc}
 I_{|Y|}  &  A_{eq}^{1}\\
 0_{\left(r\cdot|Y|\right)}  &  A_{eq}^{2}\\
 0_{|Y|}^\top  &  1_{d_{v}}^\top 
\end{array}\right]}}
\]
with $d_{z}=|Y|\cdot(r+1);$ $a$ is a $d_{eq}-$vector defined as
\[
a=\left[\begin{array}{c}
\hat P_{n,x}\\
f^{r}\left(\theta_{\varepsilon}\right)\\
1
\end{array}\right].
\]
The incentive compatibility inequality constraints in $(P1)$ are
also linear, so that the matrix $A_{ineq}$ can be constructed in
a similar way.

\textbf{Step 3 - Duality and Maximization Program: }Although $(P2)$
is in the form of a maxmin problem, it can be transformed into a maximization
problem by considering the dual of the inner minimization: 
\begin{eqnarray*}
\max_{\tilde{b}\in\mathbb{R}^{|Y|-1},\lambda_{eq}\in\mathbb{R}^{d_{eq}},\lambda_{ineq}\in\mathbb{R}_{+}^{d_{ineq}}} & -\left[\begin{array}{c}
a\\
0_{d_{ineq}}
\end{array}\right]^\top\left[\begin{array}{c}
\lambda_{eq}\\
\lambda_{ineq}
\end{array}\right] & (P3)\\
s.t. & \tilde{b}^\top W_{n,x}\tilde{b} & \leq1\\
 & \left(A^\top\right)_{1:|Y|}\left[\begin{array}{c}
\lambda_{eq}\\
\lambda_{ineq}
\end{array}\right] & =-\left[\begin{array}{c}
\tilde{b}\\
0
\end{array}\right]\\
 & \left(A^\top\right)_{|Y|+1:d_{z}}\left[\begin{array}{c}
\lambda_{eq}\\
\lambda_{ineq}
\end{array}\right] & \geq0_{d_{\nu}},
\end{eqnarray*}
where $A=\left[\begin{array}{c}
A_{eq}\\
A_{ineq}
\end{array}\right],$ the vectors $\lambda_{eq}$ and $\lambda_{ineq}$ are the dual variables
associated to the constraints of $\left(P2\right),$ and $\left(A^\top\right)_{1:|Y|}$
and $\left(A^\top\right)_{|Y|+1:d_{z}}$ denote the first $|Y|$ and
the last rows of the matrix $A^\top$ and $d_{A}=d_{eq}+d_{ineq}$
is the number of rows of $A$. By strong duality, as well as by the existence
of BCE, $(P3)$ has the same value than $(P2)$. 

Finally, let $w=(\tilde b^\top,\lambda_{eq}^\top,\lambda_{ineq}^\top)^\top$, $\gamma=(0_{|Y|-1}^\top,a^\top,0^\top_{d_{ineq}})^\top$, and
\begin{align*}
    \Gamma_1=\begin{bmatrix}
        W_{n,x}&0_{|Y|-1\times d_{A}}\\
        0_{d_Z\times |Y|-1}&0_{d_A\times d_A}
    \end{bmatrix},~
    \Gamma_2=\begin{bmatrix}
        I_{|Y|-1}  &    \multirow{2}{*}{$\left(A^\top\right)_{1:|Y|}$}\\
        1_{|Y|}^\top &
    \end{bmatrix},~
    \Gamma_3=\begin{bmatrix}
        0_{d_\nu\times |Y|-1} & \left(A^\top\right)_{|Y|+1:d_{z}}
    \end{bmatrix}.
\end{align*}
Then one may represent $(P3)$ as in \eqref{eq:def_Vnx}.

\subsection{Computing $\sup_\theta p_n(\theta)$}
 Below, we drop the subscript $n$ from $p_n$ as it does not play a role.
The following algorithm yields a sequence of tentative optimal values $p(\theta^*_L),L=k+1,k+2,\dots$, which tends to $\sup_{\theta\in\Theta}p(\theta)$.
\begin{description}
\item[Step 1]: Draw randomly (uniformly) over $\Theta$ a set $(\theta^{(1)},\dots,\theta^{(k})$ of initial evaluation points. Evaluate $p(\theta^{(\ell)}),\ell= 1,\dots,k-1$. Initialize $L = k$.
\item[Step 2]: Record the tentative optimal value
\begin{align}
    p(\theta^*_L)=\max\{p(\theta^{(\ell)}),\ell \in \{1,\dots,L\}\}.
\end{align}
If $p(\theta^*_L)>\alpha$, halt the algorithm. Set $\phi=0$ (i.e. do not reject $H_0$). Otherwise, proceed to Step 3.
\item[Step 3]: Approximate $\theta\mapsto p(\theta)$ by a flexible auxiliary model. A commonly used choice is a Gaussian-process regression, which models each observed value $Y^{(\ell)}=p(\theta^{(\ell)})$ as 
\begin{align}
    Y^{(\ell)}=\mu+\zeta(\theta^{(\ell)}),~\ell=1,\dots, L,
\end{align}
where $\zeta$ is a mean-zero Gaussian process indexed by $\theta$ with constant variance and a correlation 
\begin{align}
    \text{Corr}(\zeta(\theta),\zeta(\theta'))=K_\beta(\theta-\theta')
\end{align}
for some kernel function $K_\beta$ (e.g. $K_\beta=\exp(-\sum_{h=1}^d(\theta_h-\theta_h')^2/\beta_h)$). The unknown parameters can be estimated by running a feasible-GLS regression of $\mathbf Y=(Y^{(1)},\dots,Y^{(L)})$ on a constant with the given correlation matrix.
 The (best linear) predictor of $p(\theta)$ then has a closed form\footnote{Its derivative also has a closed form. See \cite{jones1998efficient}.}
\begin{align}
    p_L(\theta)=\hat\mu+\mathbf{r}_L(\theta)^{\top}\mathbf{R}_L^{-1}(\mathbf Y-\hat\mu \mathbf 1).
\end{align}
This predictor coincides with $p$ at the evaluation points (i.e., $p_L(\theta^{(\ell)})=p(\theta^{((\ell))}),\ell=1,\dots,L$) providing an analytical interpolation. Also, the uncertainty left in $p$ is quantified by the following variance:
\begin{align}
\hat\varsigma^2s^2_L(\theta)= \hat\varsigma^2\left(1-\mathbf r_L(\theta)^{\top}\mathbf R_L^{-1}\mathbf r_L(\theta)+\frac{(1-\mathbf 1^{\top}\mathbf R_L^{-1}\mathbf r_L(\theta))^2}{\mathbf 1^{\top}\mathbf R_L^{-1}\mathbf 1}\right).
\end{align}
\item[Step 4:] With probability $1-\epsilon$, obtain the next evaluation point $\theta^{(L+1)}$ as
	\begin{align}
		\theta^{(L+1)}\in\argmax_{\theta\in\Theta} (p_L(\theta)-p(\theta^*_{L}))\Phi\Big(\frac{p_L(\theta)-p(\theta^*_{L})}{\hat\varsigma s_L(\theta)}\Big)+\hat\varsigma s_L(\theta)\phi\Big(\frac{p_L(\theta)-p(\theta^*_{L})}{\hat\varsigma s_L(\theta)}\Big),
	\end{align}
	where the objective function is called the \emph{expected improvement function}.
\end{description}
Set $L\leftarrow L+1$ and return to Step 1. Repeat the steps until convergence. The algorithm can be interpreted as follows.
This algorithm first evaluates $p$ on a coarse grid (Step 1). It then approximates $p$ by a tractable Gaussian process regression model, which can be used to guide the determination of the next evaluation point to draw. For this, we need to take into account where the maximum is likely to lie (exploitation) and where the current approximation is rough and benefit from additional evaluation points (exploration). The expected improvement function in Step 4 provides a criterion for how to trade them off optimally. Repeating this process generates a sequence of that tends to the global maximum of $p$. This can be implemented by open softwares such as DACE (in Matlab) and MOE (in Python).\footnote{These softwares are available from \url{http://www.omicron.dk/dace.html} (DACE) and \url{https://github.com/wujian16/Cornell-MOE} (MOE) respectively.}

\section{Proofs}

\subsection{Definitions and Notation}\label{ssec:notation}
Below, we introduce objects that will be used in the proof of auxiliary lemmas. In Lemma \ref{lem:size_control} below, we consider a sequence $\eta_n$ such that $\tau_n^{-1}\eta_n\to\pi \in\mathbb R^{K\times X}_{-,\infty}$.
 For $\xi=\tau_n^{-1}\eta_n(b,x)$, define
\begin{align}
\varphi(\xi) & \equiv\begin{cases}
0 & \text{if }\xi\ge-1\\
-\infty & \text{if }\xi<-1~,
\end{cases}
\qquad \varphi^{*}(\xi)  \equiv\begin{cases}
\varphi(\xi) & \text{if }\pi=0\\
-\infty & \text{if }\pi<0~.
\end{cases}\label{eq:phi^star}
\end{align}
Recall that $W_x$ is the probability limit of $W_{n,x}$ (see Assumption \ref{as:Wn}). 
Let $K\subset \mathbb R^{|Y|}$ be a compact set such that $\ball\subset K^{-\epsilon}$ uniformly across DGPs for some uniform constant $\epsilon>0$, where $K^{-\epsilon}=\{x\in K:\inf_{y\in K^c}\|x-y\|\ge \epsilon\}$. We then define an empirical process $\mathbb G_n$ (indexed by $( b,x)\in K\times X$) and bootstrapped empirical process $\mathbb G^*_n$ as follows:
\begin{align}
	\mathbb{G}_{n}( b,x)\equiv\sqrt{n} b^{\top}(\hat{P}_{n,x}-P_{y|x})~,\qquad \mathbb{G}^*_{n}( b,x)\equiv\sqrt{n} b^{\top}(\hat{P}^*_{n,x}-\hat P_{n,x}).
\end{align}
Under our assumptions $\mathbb G_n$ converges weakly (in the sense of Hoffmann-J{\o}rgensen) to a tight Gaussian process $\mathbb G$ \citep{van_der_Vaart_1996}.
For any $\pi(b,x)\in[-\infty,0]$, let 
\begin{align}
\pi^{*}(b,x)=\begin{cases}
0 & \text{if }\pi(b,x)=0\\
-\infty & \text{if }\pi(b,x)<0.
\end{cases}\label{eq:pi_star}
\end{align}
Define $c_{\pi^{*}}(1-\alpha)$ be the $1-\alpha$ quantile of
\begin{align}
\sup_{(b,x)\in\Psi_{\infty}}\mathbb{G}(b,x),
\end{align}
where 
\begin{align}
\Psi_{\infty}=\{(b,x)\in K\times X:\pi^{*}(b,x)=0,~b\in \ball\}.
\end{align}
This is the set of constraints that will be selected by the GMS asymptotically.

Let $\mathcal F\equiv\{(\theta,P):P\in\mathcal P_\theta,~\theta\in\Theta\}$ be the set of $(\theta,P)$ pairs that are compatible with our assumptions.
Following \cite{Andrews:2010aa}, we introduce a one-to-one mapping between $(\theta,P)\in \mathcal F$ and a new parameter $\gamma=(\gamma_1,\gamma_2,\gamma_3)$ with corresponding parameter space $\Gamma.$ For each $(b,x)\in K\times X$, $\gamma_1\in\mathbb R_{-}^{K\times X}$ is defined by the relation
\begin{align}
	b^{\top}P_{y|x}-h(b,Q_{\theta,S^{r}}^{BCE}(x))-\gamma_1(b,x)=0.\label{eq:gamma1}
\end{align}
For each $x\in X$, let $W_x=\text{AsyVar}_P(\sqrt n \hat P_{n,x})$, and we let $\gamma_2=(\theta,vech(W_{x_1}),\dots,vech(W_{x_{|X|}}))^{\top}.$ Finally, we let $\gamma_3=P.$
We then define $\Gamma$ as 
\begin{multline}
	\Gamma\equiv \big\{\gamma=(\gamma_1,\gamma_2,\gamma_3):\text{ for some }(\theta,P)\in \mathcal F, \gamma_1 \text{ satisfies \eqref{eq:gamma1}, }\\ \gamma_2=(\theta,vech(W_{x_1}),\dots,vech(W_{x_{|X|}}))^{\top},~\gamma_3=P\big\}.
\end{multline}
Note that, since we impose $(\theta,P)\in \mathcal F$, all parameters in $\Gamma$ must respect Assumptions \ref{as:Model}-\ref{as:px}.
In what follows, for any sequence $\{\gamma_{n}\}$, we let
\begin{align}
	\eta_n(b,x)\equiv \sqrt n\gamma_{1,n}(b,x).
\end{align}
Other notation and definitions used throughout are collected in the following table.
\begin{table}[h]
    \centering
    \begin{tabular}{rl}
        $a\lesssim b$   & $a\le Mb$ for some constant $M$.  \\
        $\|\cdot\|_{op}$ & the operator norm for linear mappings. \\
        $\|\cdot\|_{\mathcal F}$ & the supremum norm over $\mathcal F$.\\
        $\|\cdot\|_{\psi_{2}}$ & the sub-Gaussian norm: $\|X\|_{\psi_2}=\inf\{t>0:E[\exp(X^{2}/t^{2})]\le2\}$.\\
        $N(\epsilon,\mathcal F,\|\cdot\|)$ & covering number of size $\epsilon$ for $\mathcal F$ under norm $\|\cdot\|$. \\
        $N_{[]}(\epsilon,\mathcal F,\|\cdot\|)$ & bracketing number of size $\epsilon$ for $\mathcal F$ under norm $\|\cdot\|$.\\
        $X_n\stackrel{P^n}{\leadsto}X$ & $X_n$ weakly converges to $X$ under $\{P^n\}$. 
    \end{tabular}
\end{table}

\subsection{On Studentization}
\begin{lemma}\label{lem:studentization}
Let $\balln=\{b\in\mathbb R^{|Y|}:\wnormn{b}\le 1\}.$ Then \eqref{eq:studentization} holds.
\end{lemma}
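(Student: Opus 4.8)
The plan is to prove \eqref{eq:studentization} as a deterministic, sample‑by‑sample identity by exploiting two structural features of the objects involved: positive homogeneity of degree one (of both the numerator and the norm) and concavity of the numerator. Write $g(b)\equiv b^{\top}\hat P_{n,x}-h(b,\bce{S^r})$. Since $b\mapsto b^{\top}\hat P_{n,x}$ is linear and $b\mapsto h(b,\bce{S^r})$ is a support function (hence convex and positively homogeneous of degree one), $g$ is concave and positively homogeneous of degree one, with $g(0)=0$ because the support function of the nonempty set $\bce{S^r}$ vanishes at the origin. Because $W_{n,x}$ is assumed positive definite, $\wnormn{b}=(b^{\top}W_{n,x}b)^{1/2}$ is a genuine norm, so it is positively homogeneous of degree one and strictly positive for $b\neq 0$; consequently the ratio $g(b)/\wnormn{b}$ is homogeneous of degree zero, i.e.\ invariant under positive rescaling of $b$.

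First I would reduce the unconstrained ratio problem to one over the unit sphere. For any $b\neq 0$, homogeneity gives $g(b)/\wnormn{b}=g\!\left(b/\wnormn{b}\right)$, and $b/\wnormn{b}$ traverses the whole sphere $\{b:\wnormn{b}=1\}$ as $b$ ranges over $\mathbb R^{|Y|}\setminus\{0\}$. Since $t\mapsto\{t\}_+$ is nondecreasing and continuous, it commutes with the supremum, so
\[
\sup_{b\in\mathbb R^{|Y|}\setminus\{0\}}\Bigl\{\tfrac{g(b)}{\wnormn{b}}\Bigr\}_+=\sup_{\wnormn{b}=1}\{g(b)\}_+=\Bigl\{\sup_{\wnormn{b}=1}g(b)\Bigr\}_+ .
\]

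Next I would show that the supremum of $g$ over the closed ball $\balln$ equals $\{\sup_{\wnormn{b}=1}g(b)\}_+$. Set $B\equiv\sup_{\wnormn{b}=1}g(b)$. For any $b$ with $0<\wnormn{b}\le 1$, write $b=\wnormn{b}\,\hat b$ with $\wnormn{\hat b}=1$; then $g(b)=\wnormn{b}\,g(\hat b)$ with $g(\hat b)\le B$, and distinguishing the cases $g(\hat b)\ge 0$ and $g(\hat b)<0$ yields $g(b)\le\max\{B,0\}$. Together with $g(0)=0$ this gives $\sup_{\balln}g\le\{B\}_+$. The reverse inequality is immediate: approach the sphere when $B\ge 0$, and evaluate at $b=0\in\balln$ when $B<0$. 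Hence $\sup_{b\in\balln}g(b)=\{B\}_+$, which combined with the previous display gives $\sup_{b\in\balln}g(b)=\sup_{b\neq 0}\{g(b)/\wnormn{b}\}_+$. Multiplying both sides by $\sqrt n$ and moving the positive scalar inside the positive part then produces the right‑hand side of \eqref{eq:studentization}.

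The only delicate point is the bookkeeping around the positive part: one must use both $g(0)=0$ and $0\in\balln$ to argue that when every boundary value of $g$ is negative the ball‑supremum is attained at the origin with value $0$, so that the ratio formulation—through its positive part—reproduces the ball‑constrained value rather than a spurious negative quantity. Homogeneity and concavity of $g$ do the remaining work, and no probabilistic argument is required; the claim holds for every realized sample with $W_{n,x}$ positive definite.
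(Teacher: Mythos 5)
Your proof is correct and takes essentially the same route as the paper: both arguments rest on positive homogeneity of $b\mapsto b^{\top}\hat P_{n,x}-h(b,\bce{S^r})$, reduce the ball supremum to a scaled sphere supremum (the paper writes this as $\sup_{\gamma\in[0,1]}\gamma\sup_{\bar b\in\mathbb{S}_n}(\cdot)$ and observes the optimal $\gamma$ is $0$ or $1$, which is exactly your case analysis on the sign of $B$), and then reparametrize sphere points as $b/\wnormn{b}$ for nonzero $b$ to obtain the studentized ratio with its positive part. Your invocation of concavity is superfluous---homogeneity of degree one together with $g(0)=0$ carries the whole argument, as in the paper---but this does not affect correctness.
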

\begin{proof}
Let $\mathbb{S}_n=\{b:\wnormn{b}=1\}$ be the set of unit vectors with respect to $\wnormn{\cdot}$.
Note that the map $b\mapsto b^{\top}\hat{P}_{n,x}-h(b,\bce{S^r})$ is positively homogeneous  because the support function is positively homogeneous \citep[][Appendix F]{Molchanov:2005aa}. One may then write 
\begin{multline}
	\sup_{ b\in \balln}\sqrt{n}\{ b^{\top}\hat{P}_{n,x}-h( b,\bce{S^r})\}=\sup_{\gamma\in [0,1]}\gamma\sup_{\bar b\in \mathbb{S}_n}\sqrt n\{ \bar b^{\top}\hat{P}_{n,x}-h(\bar b,\bce{S^r})\}\\
	=\begin{cases}
	\sup_{\bar b\in \mathbb{S}_n}\sqrt n\{ \bar b^{\top}\hat{P}_{n,x}-h(\bar b,\bce{S^r})\}&\text{if } \sup_{\bar b\in \mathbb{S}_n}\sqrt n\{ \bar b^{\top}\hat{P}_{n,x}-h(\bar b,\bce{S^r})\}>0\\
		0&\text{if } \sup_{\bar b\in \mathbb{S}_n}\sqrt n\{ \bar b^{\top}\hat{P}_{n,x}-h(\bar b,\bce{S^r})\}\le 0,
	\end{cases}\label{eq:truncation}
\end{multline}
where the second equality holds because it is optimal to set $\gamma=0$ whenever the optimal value of the inner maximization problem is non-positive, and it is optimal to set $\gamma=1$ otherwise.
Finally, note that one may represent $\bar b$ as $\bar b=b/(b^{\top}W_{n,x}b)^{1/2}$ for  a nonzero vector $b$.  Therefore, again using the positive homogeneity, we may write
\begin{align}
 \sup_{b\in \balln}\sqrt n\{b^{\top}\hat P_{n,x}-h(b,\bce{S^r})\}=   \sup_{b\in \mathbb R^{|Y|}\setminus\{0\}}	\sqrt n\Big\{ \frac{b^{\top}\hat{P}_{n,x}-h(b,\bce{S^r})}{\sqrt{nb^{\top}W_{n,x}b}}\Big\}_+.\
\end{align}
\end{proof}

\subsection{Results and Auxiliary Lemmas on Hypothesis Testing}

\begin{proof}[Proof of Theorem \ref{thm:size_control}]	
We first observe that rejection based on the $p$-value can be restated as follows:
\begin{align}
p_{n}(\theta)<\alpha,~\forall\theta\in\Theta,~~~\Leftrightarrow~~~T_{n}(\theta)>\hat{c}_{n}(\theta,1-\alpha),~\forall\theta\in\Theta,
\end{align}
where $\hat{c}_{n}(\theta,1-\alpha)=\inf\{c:P^{*}( T_{n}^{*}(\theta)\le c|y^{n},x^{n})\ge 1-\alpha\}.$
Therefore, for any fixed $\theta\in\Theta$, 
\begin{align}
P\big(\sup_{\tilde\theta\in\Theta}p_{n}(\tilde\theta)<\alpha\big)\le P( T_{n}(\theta)>\hat{c}_{n}(\theta,1-\alpha)).
\end{align}
Hence,
\begin{align}
	\sup_{P\in\mathcal P}E_P[\phi]\le \sup_{(\theta,P)\in\mathcal F}P( T_{n}(\theta)>\hat{c}_{n}(\theta,1-\alpha)).
\end{align}
 Let  $\{(P_n,\theta_n)\in\mathcal P\times \Theta\}_{n=1}^\infty$ be a sequence such that
\begin{align}
	\limsup_{n\to\infty}P_n( T_{n}(\theta_n)>\hat{c}_{n}(\theta_n,1-\alpha))=\limsup_{n\to\infty}\sup_{(\theta,P)\in\mathcal F}P( T_{n}(\theta)>\hat{c}_{n}(\theta,1-\alpha)).
\end{align}
Let $\text{RP}_n(\gamma_n)=P_n( T_{n}(\theta_n)>\hat{c}_{n}(\theta_n,1-\alpha)),$ where $\gamma_n$ is a sequence of parameters defined as in Section \ref{ssec:notation}.
Let $\{u_n\}$ be a subsequence of $\{n\}$ such that $\lim_{n\to\infty}\text{RP}_{u_n}(\gamma_{u_n})$ exists and
\begin{align}
	\lim_{n\to\infty}\text{RP}_{u_n}(\gamma_{u_n})=\limsup_{n\to\infty}\text{RP}_{n}(\gamma_{n}).
\end{align}
Such a sequence exsits without loss of generality. Following (S1.1)-(S1.4) in \cite{Andrews:2010aa} (supplementary material), it is straightforward to constrcut a further subsequence $\{w_n\}$ of $\{u_n\}$ such that (i) $\eta_{w_n}=\sqrt{w_n}\gamma_{1,w_n}\to \eta\in \mathbb R^{K\times X}_{-,\infty}$, (ii) $\tau^{-1}_{w_n}\eta_{w_n}\to\pi \in\mathbb R^{K\times X}_{-,\infty}$, and $\gamma_{2,w_n}\to\gamma_2\in \text{cl}(\Gamma_2)$.
 By Lemma \ref{lem:size_control}, it then follows that $\lim_{n\to\infty}\text{RP}_{w_n}(\gamma_{w_n})\le\alpha$. Hence, the conclusion of the theorem holds.
\end{proof}

\begin{lemma}\label{lem:size_control} Suppose Assumptions \ref{as:primitives}-\ref{as:px}
hold. Let $\{\gamma_{n}:n\ge1\}$ be a sequence in $\Gamma$
where the components $\eta_{n}$ and $(W_{n,x},x\in X)$ of $\gamma_{n,\eta}$
satisfy (i) $\eta_{n}\to\eta\in\mathbb{R}_{-,\infty}^{K \times X}$
(pointwise in $(b,x)$); (ii) $\tau_{n}^{-1}\eta_{n}\to\pi\in\mathbb{R}_{-,\infty}^{K\times X}$;
(iii) $\gamma_{2,n}\to\gamma\in \mathbb R^{d_{\gamma_2}}_{\pm \infty}$, where $d_{\gamma_2}$ is the dimension of $\gamma_2$. Then, (a)-(c) hold.

(a) $\hat{c}_{n}(\theta_{n},1-\alpha)\ge c_{n}^{*},a.s.$ for all
$n$ for a sequence of random variables $\{c_{n}^{*}\}$ such that
$c_{n}^{*}\stackrel{p}{\to}c_{\pi^{*}}(1-\alpha)$.

(b) $\limsup_{n\to\infty}P_{\eta_{n}}(\tilde T_{1,n}(\theta_{n})>\hat{c}_{n}(\theta_{n},1-\alpha))\le\alpha.$

(c) Parts (a) and (b) hold with all subsequences $\{w_{n}\}$ in place
of $\{n\}$ provided that (i)--(iii) hold with $w_{n}$ in place
of $n$. \end{lemma}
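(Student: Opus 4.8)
The plan is to follow the generalized--moment--selection argument of \cite{Andrews:2010aa}, adapted to the support--function statistic and the studentized ball $\balln$. The starting point is the decomposition $\tilde T_{1,n}(\theta_n)=\sup_{x\in X}\sup_{b\in\balln}\{\mathbb G_n(b,x)+\eta_n(b,x)\}$, with $\eta_n(b,x)=\sqrt n\,\gamma_{1,n}(b,x)\le 0$, and the bootstrap counterpart $T^*_n(\theta_n)=\sup_{(b,x)\in\Psi_{n,\theta_n}}\mathbb G^*_n(b,x)$. Two limiting objects drive everything: the selected set stabilizes to $\Psi_\infty=\{(b,x):\pi(b,x)=0,\ b\in\ball\}$, and both $\mathbb G_n$ and $\mathbb G^*_n$ converge to the tight Gaussian process $\mathbb G$. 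I would first record that $\balln$ may be replaced by $\ball$ with asymptotically negligible error, since Assumption \ref{as:Wn} sandwiches $\balln$ between $(1\pm\epsilon)\ball$ uniformly. Second, I would establish conditional bootstrap consistency $\mathbb G^*_n\leadsto\mathbb G$ in probability: with $X$ finite and $b\mapsto\mathbb G^*_n(b,x)$ linear over the bounded set $K$, this reduces to a multivariate CLT for the (bootstrap) cell frequencies $(\hat P^*_{n,x})_{x\in X}$, and Assumption \ref{as:px} bounds the cell probabilities away from zero so the normalization by $n_x$ is uniformly regular and the CLT holds uniformly over $\mathcal P$.

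For part (a), I would obtain the dominating bound by under--selecting with a data--dependent subset. Set $A_n\equiv\Psi_{n,\theta_n}\cap\Psi_\infty$. Because $A_n\subseteq\Psi_{n,\theta_n}$ by construction, $\sup_{A_n}\mathbb G^*_n\le T^*_n(\theta_n)$ pathwise in the bootstrap draw, so the conditional $(1-\alpha)$--quantiles obey $c^*_n\le\hat c_n(\theta_n,1-\alpha)$ almost surely for every $n$, where $c^*_n$ is the conditional $(1-\alpha)$--quantile of $\sup_{A_n}\mathbb G^*_n$; this delivers the required a.s.\ domination for free. It then remains to show $c^*_n\xrightarrow{p}c_{\pi^*}(1-\alpha)$. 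Here the rate enters: for any $(b,x)$ with $\pi(b,x)=0$, $\tau_n^{-1}\hat\eta_{n,\theta_n}(b,x)=\tau_n^{-1}\mathbb G_n(b,x)+\tau_n^{-1}\eta_n(b,x)\to 0>-1$, so such $(b,x)$ are selected with probability approaching one; combined with $A_n\subseteq\Psi_\infty$ this gives $A_n\to\Psi_\infty$. Weak convergence of $\mathbb G^*_n$ together with stability of the supremum functional under this set convergence yields $\sup_{A_n}\mathbb G^*_n\leadsto\sup_{\Psi_\infty}\mathbb G$ conditionally; the eigenvalue bounds of Assumption \ref{as:Wn} make $\mathbb G$ nondegenerate, so $\sup_{\Psi_\infty}\mathbb G$ has a continuous distribution at its $(1-\alpha)$--quantile whenever $c_{\pi^*}(1-\alpha)>0$ (using $\alpha<1/2$), and quantile convergence follows. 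The degenerate case $\Psi_\infty=\{(0,x):x\in X\}$, where $\sup_{\Psi_\infty}\mathbb G=0$ and $c_{\pi^*}(1-\alpha)=0$, is harmless because the relevant exceedance probability is then zero.

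For part (b), I would show the statistic converges to a law dominated by $\sup_{\Psi_\infty}\mathbb G$. Since $\eta_n(b,x)\to\eta(b,x)\in[-\infty,0]$ and the origin always satisfies $\eta_n(0,x)=0$, coordinates with $\eta(b,x)=-\infty$ contribute $-\infty$ and drop out, giving $\tilde T_{1,n}(\theta_n)\leadsto\sup_{(b,x):\,\eta(b,x)\ \text{finite}}\{\mathbb G(b,x)+\eta(b,x)\}=:T_\infty$. Because $\{\eta\ \text{finite}\}\subseteq\Psi_\infty$ with $\eta\le 0$ there, $T_\infty$ is stochastically dominated by $\sup_{\Psi_\infty}\mathbb G$, so $P(T_\infty>c_{\pi^*}(1-\alpha))\le\alpha$. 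Combining with part (a) through Slutsky ($c^*_n\xrightarrow{p}c_{\pi^*}(1-\alpha)$, a constant, and $\tilde T_{1,n}(\theta_n)\leadsto T_\infty$) and the portmanteau inequality gives $\limsup_n P_{\eta_n}(\tilde T_{1,n}(\theta_n)>\hat c_n(\theta_n,1-\alpha))\le\limsup_n P_{\eta_n}(\tilde T_{1,n}(\theta_n)>c^*_n)\le\alpha$. Part (c) is immediate: the whole argument uses only the convergences (i)--(iii), which are assumed along $\{w_n\}$, so every step goes through verbatim with $w_n$ in place of $n$.

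The hard part will be the second step of part (a): establishing conditional weak convergence of $\mathbb G^*_n$ and the set convergence $A_n\to\Psi_\infty$ jointly and uniformly along the drifting--parameter subsequence, while the index ball $\balln$ is itself random and DGP--dependent through $W_{n,x}$. The delicate points are (i) the asymptotic equicontinuity ensuring that the supremum over the shrinking symmetric difference $\Psi_{n,\theta_n}\triangle\Psi_\infty$ is negligible—a bare continuous--mapping argument does not suffice because the sets are random—and (ii) making the replacement of $\balln$ by $\ball$, the bootstrap CLT, and the GMS selection hold with enough uniformity over $\mathcal P$ to cover the drifting sequence $\{P_n\}$ extracted in the proof of Theorem \ref{thm:size_control}. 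Everything else is a routine assembly of Slutsky, portmanteau, and quantile--continuity arguments.
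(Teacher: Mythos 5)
Your skeleton coincides with the paper's proof. Your under-selection device---restricting the bootstrap supremum to $A_n\equiv\Psi_{n,\theta_n}\cap\Psi_\infty$ so that conditional quantiles are ordered pathwise---is exactly the paper's comparison $\varphi^{*}(\hat\xi_n)\le\varphi(\hat\xi_n)$ from \eqref{eq:phi^star}, with $c_n^{*}$ defined as the conditional quantile of the $\pi$-selected supremum; your part (b) via the pathwise domination $\sup\{\mathbb G+\eta\}\le\sup_{\Psi_\infty}\mathbb G$ is the paper's step $c_{\pi^{*}}(1-\alpha)\ge c_{\eta}(1-\alpha)$, followed by the same Slutsky/portmanteau assembly; part (c) is verbatim in both. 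Your one genuine departure is the finite-dimensional shortcut for bootstrap consistency: with $X$ finite and $b\mapsto\mathbb G_n^{*}(b,x)$ linear on a bounded set, a multinomial CLT for $\sqrt n(\hat P^{*}_{n,x}-\hat P_{n,x})$ suffices, whereas the paper runs uniform-entropy Donsker machinery (Lemmas \ref{lem:uniform_entropy}--\ref{lem:boot_se}). Worth noting: if you pushed your own linearity observation one step further it would dissolve the difficulties you flag as ``the hard part.'' By Cauchy--Schwarz, $\sup_{b\in\balln}|\mathbb G_n(b,x)|\le\sup_{b\in\balln}\|b\|\cdot\|\sqrt n(\hat P_{n,x}-P_{y|x})\|$, which is $O_p(1)$ uniformly over $\mathcal P$ by Assumptions \ref{as:Wn} and \ref{as:px}; hence $\tau_n^{-1}\sup|\mathbb G_n|=o_p(1)$ gives the uniform selection event, and the same bound makes $\mathbb G^{*}_n$ Lipschitz in $b$ with a tight random constant, giving the equicontinuity needed for the random-set supremum. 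The paper instead achieves these with sub-Gaussian chaining (Lemmas \ref{lem:maximal}--\ref{lem:subGaussian}), Hausdorff consistency of the random ball (Lemma \ref{lem:Hausdorff_consistency}), and bootstrap stochastic equicontinuity (Lemma \ref{lem:boot_se}, assembled in Lemma \ref{lem:weak_convergence})---heavier machinery, but portable beyond the finite-$X$, linear-index setting.

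As written, however, your proposal has a genuine gap at precisely the steps you defer. The pointwise convergence $\tau_n^{-1}\hat\eta_n(b,x)\to\pi(b,x)$ does \emph{not} imply that all of $\Psi_\infty$ is selected with probability approaching one: the index set is a continuum, and the required event is the uniform one $\inf_{(b,x):\pi(b,x)=0}\hat\xi_n(b,x)\ge-1$, which is exactly why the paper introduces the events $A_{1n},A_{2n}$ in \eqref{eq:def_A1n}--\eqref{eq:def_A2n} and proves a maximal inequality. Likewise, ``stability of the supremum under set convergence'' is not a continuous-mapping statement when the index sets are random and data-dependent; one must bound the supremum over the symmetric difference $\Psi_{n,\theta_n}\triangle\Psi_\infty$ via equicontinuity, and your proposal names this obligation without discharging it (the linearity argument above would do so, but you did not make it). Two smaller points: quantile convergence from conditional-CDF convergence at continuity points is not automatic and the paper invokes Lemma 5 of \cite{andrews2010asymptotic} for it; and in the case $c_{\pi^{*}}(1-\alpha)=0$ the paper sidesteps quantile convergence entirely by setting $c_n^{*}=0$, which is valid because $0\in\balln$ forces $T_n^{*}\ge0$---a cleaner route than your degeneracy argument, though yours is defensible given $\alpha<1/2$ (any point of $\Psi_\infty$ with positive variance would force $c_{\pi^{*}}(1-\alpha)>0$). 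In sum: right architecture, a legitimately simpler route for the bootstrap CLT, but the uniformity and random-set steps---roughly half the paper's proof---are flagged rather than proved.
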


\begin{proof}
Rewrite $T_n$ and $T_n^{*}$ as 
\begin{align*}
T_n(\theta) & =\sup_{x\in X}\sup_{b\in \balln}\{\mathbb{G}_{n}(b,x)+\eta_{n}(b,x)\},\\
T_n^{*}(\theta) & =\sup_{x\in X}\sup_{b\in \balln}\{\mathbb{G}_{n}^{*}(b,x)+\varphi(\hat{\xi}_{n}(b,x;\theta))\},
\end{align*}
where $\hat{\xi}_{n}(b,x;\theta) \equiv\tau_{n}^{-1}\hat{\eta}_{n}(b,x;\theta).$
Note that $c_{\pi^{*}}(1-\alpha)$ is the $1-\alpha$ quantile of
\begin{align*}
 & \sup_{x\in X}\sup_{b\in \ball}\{\mathbb{G}(b,x)+\pi^{*}(b,x)\}.
\end{align*}
To prove part (a), first, when $c_{\pi^{*}}(1-\alpha)=0$, define
$c_{n}^{*}=0$ so that 
\begin{align*}
\hat{c}_{n}(\theta_{n},1-\alpha) & \ge c_{n}^{*}\overset{p}{\rightarrow}c_{\pi^{*}}(1-\alpha)
\end{align*}
trivially holds since $T_n^{*}(\theta)\ge0$ for all $\theta\in\Theta$.

Next, suppose $c_{\pi^{*}}(1-\alpha)>0$. 
By construction, 
\begin{align}
\varphi^{*}(\hat{\xi}_{n}(b,x;\theta_{n})) & \le\varphi(\hat{\xi}_{n}(b,x;\theta_{n}))\text{ a.s. }\forall n\text{ and }(b,x).\label{eq:compare_phis}
\end{align}
Let $c_{n}^{*}$ be the $1-\alpha$ quantile of 
\begin{align*}
 & \sup_{x\in X}\sup_{b\in\balln}\{\mathbb{G}_{n}^{*}(b,x)+\varphi^{*}(\hat{\xi}_{n}(b,x;\theta))\}.
\end{align*}
Then $\hat{c}_{n}(\theta_{n},1-\alpha)\ge c_{n}^{*}$ a.s. $\forall n$
by \eqref{eq:compare_phis}.

We now show $c_{n}^{*}\overset{p}{\rightarrow}c_{\pi^{*}}(1-\alpha)>0$. Let $P_{M_n}$ denote the probability measure governing the (multinomial) bootstrap weights, and let $E_{M_n}$ be the associated expectation operator. Let $BL_1$ denote the class of bounded Lipschitz functions on $K\times X$ with Lipschitz constant equal to one.
By Lemma \ref{lem:uniform_entropy} and arguing as in the proof of Lemma \ref{lem:boot_se}, $\mathbb G^*_n$ is a bootstrapped empirical process defined on a uniform Donsker class.
Therefore, applying Theorem 3.6.2 in \cite{van_der_Vaart_1996} under any
$\{P_n\}\subset\mathcal P$, we have 
\begin{align}
\sup_{h\in BL_{1}}|E_{M_{n}}[h(\mathbb{G}_{n}^{*})|(y^{n},x^{n})]-E[h(\mathbb{G})]|\stackrel{P_{n}}{\to}0.\label{eq:gaussian}
\end{align}
For each $(b,x)$, 
\begin{align}
\hat{\xi}_{n}(b,x;\theta_{n}) & =\tau_{n}^{-1}\sqrt{n}(b^{\top}\hat{P}_{n,x}-h(b,Q_{\theta_{n},S^{r}}^{BCE}(x)))\nonumber \\
 & =\tau_{n}^{-1}\sqrt{n}(b^{\top}P_{n,y|x}-h(b,Q_{\theta_{n},S^{r}}^{BCE}(x)))+\tau_{n}^{-1}\sqrt{n}b^{\top}(\hat{P}_{n,x}-P_{n,y|x})\nonumber \\
 & =\tau_{n}^{-1}\eta_{n}(b,x)+o_{p^{*}}(1)\nonumber \\
 & \overset{p^{*}}{\rightarrow}\pi(b,x),\label{eq:pw_conv_0}
\end{align}
where the convergence is by (ii) and the second equality is by Assumption
3 that guarantees $\hat{P}_{nx}-P_{n,y|x}=O_{p}(n^{-1/2})$ for each
$x$, which in turn implies $\tau_{n}^{-1}\sqrt{n}(b^{\top}\hat{P}_{n,x}-P_{n,y|x})=o_{p^{*}}(1)$,
and for r.v.'s $z_{n}$ and $z$, the convergence $z_{n}\overset{p^{*}}{\rightarrow}z$
is defined as $P^{*}(\left|z_{n}-z\right|>\varepsilon|y^{n},x^{n})\overset{p}{\rightarrow}0$.

We now show that $\varphi^{*}(\xi(b,x))\rightarrow\varphi^{*}(\pi(b,x))$
for any sequence $\xi(b,x)$ for which $\xi(b,x)\rightarrow\pi(b,x)$.
If $\pi(b,x)=0$, then 
\begin{align*}
\varphi^{*}(\xi(b,x)) & =\varphi(\xi(b,x))\rightarrow\varphi(\pi(b,x))=\varphi^{*}(\pi(b,x))=0
\end{align*}
as $\xi(b,x)\rightarrow\pi(b,x)$, by the definitions in \eqref{eq:phi^star}. If $\pi(b,x)<0$, then 
\begin{align*}
\varphi^{*}(\xi(b,x)) & =-\infty=\varphi^{*}(\pi(b,x))
\end{align*}
by \eqref{eq:phi^star}. Therefore, as $\xi(b,x)\rightarrow\pi(b,x)$,
we have $\varphi^{*}(\xi(b,x))\rightarrow\varphi^{*}(\pi(b,x))=\pi^{*}(b,x)$
where the last equality is by the definitions in  \eqref{eq:phi^star} and \eqref{eq:pi_star}.

Let $\delta>0.$ Define the following sets of sample paths: 
\begin{align}
A_{1n} & \equiv\Big\{(y^{n},x^{n}):\inf_{(b,x):\pi(b,x)=0}\hat{\xi}_{n}(b,x)<-1\Big\}\label{eq:def_A1n}\\
A_{2n} & \equiv\Big\{(y^{n},x^{n}):\sup_{x\in X} d_H(\balln,b_x)>\delta\Big\}.\label{eq:def_A2n}
\end{align}
Lemmas \ref{lem:maximal} and \ref{lem:Hausdorff_consistency} show that $P_{n}(A_{jn})\to0,$ for $j=1,2$.

Now note that, for any $a\in\mathbb{R}$, 
\begin{multline}
\sup_{x\in X}\sup_{b\in\balln}\{\mathbb{G}_{n}^{*}(b,x)+\varphi^{*}(\hat{\xi}_{n}(b,x;\theta_{n}))\}\le a\\
\Leftrightarrow\sup_{(b,x)\in \Psi_{n}}\{\mathbb{G}_{n}^{*}(b,x)+\varphi^{*}(\hat{\xi}_{n}(b,x;\theta_{n}))\}\le a\text{ and }\sup_{(b,x)\in\Psi_n^c}\{\mathbb{G}_{n}^{*}(b,x)+\varphi^{*}(\hat{\xi}_{n}(b,x;\theta_{n}))\}\le a,\label{eq:sup_emp+slack}
\end{multline}
where $\Psi_{n}=\{(b,x)\in K\times X:\pi^{*}(b,x)=0,~ b\in\balln\}$ and $\Psi_n^c=\{(b,x)\in K\times X:\pi^{*}(b,x)\ne 0,~ b\in\balln\}$.

For $(b,x)\in\Psi_n^c$, we have previously shown that $\varphi^{*}(\hat{\xi}_{n}(b,x;\theta_{n}))=-\infty$,
and thus $\sup_{(b,x)\in\Psi_n^c}\{\mathbb{G}_{n}^{*}(b,x)+\varphi^{*}(\hat{\xi}_{n}(b,x;\theta_{n}))\}=0$. Therefore, by \eqref{eq:sup_emp+slack}, one may write
\begin{align}
P_{M_{n}}&\Big(\sup_{x\in X}\sup_{b\in\balln}\{\mathbb{G}_{n}^{*}(b,x)+\varphi^{*}(\hat{\xi}_{n}(b,x;\theta_{n}))\}\le a\big|y^{n},x^{n}\Big)\notag\\
= & P_{M_{n}}\Big(\sup_{(b,x)\in\Psi_n}\{\mathbb{G}_{n}^{*}(b,x)+\varphi^{*}(\hat{\xi}_{n}(b,x;\theta_{n}))\}\le a,0\le a\big|y^{n},x^{n}\Big)\notag\\
= & \begin{cases}
P_{M_{n}}\Big(\sup_{(b,x)\in\Psi_n}\{\mathbb{G}_{n}^{*}(b,x)+\varphi^{*}(\hat{\xi}_{n}(b,x;\theta_{n}))\}\le a\big|y^{n},x^{n}\Big) & \text{if }a\ge0,\\
0 & \text{if }a<0,
\end{cases}\label{eq:cdf_Gstar}.
\end{align}
Similarly, for any $a\in \mathbb R$,
\begin{multline}
	\sup_{x\in X}\sup_{b\in\ball}\{\mathbb{G}(b,x)+\pi^{*}(b,x)\}\le a\\
	\Leftrightarrow \sup_{(b,x)\in\Psi_\infty}\{\mathbb{G}(b,x)+\pi^{*}(b,x)\}\le a\text{ and }\sup_{(b,x)\in\Psi^c_\infty}\{\mathbb{G}(b,x)+\pi^{*}(b,x)\}\le a.
\end{multline}
Observe that $\pi^{*}(b,x)-\infty$ for any $(b,x)\in \Psi_\infty^c$ implying $\sup_{(b,x)\in\Psi_\infty^c}\{\mathbb{G}(b,x)+\pi^{*}(b,x)\}=0$. Therefore, by mimicking the argument in \eqref{eq:cdf_Gstar},
\begin{align}
 P\Big(\sup_{x\in X}\sup_{b\in\ball}\{\mathbb{G}(b,x)+\pi^{*}(b,x)\}\le a\Big)
= & \begin{cases}
P\Big(\sup_{(b,x)\in\Psi_\infty}\{\mathbb{G}(b,x)+\pi^{*}(b,x)\}\le a\Big) & \text{if }a\ge0,\\
0 & \text{if }a<0
\end{cases}\notag\\
= & \begin{cases}
P\Big(\sup_{(b,x)\in\Psi_\infty}\{\mathbb{G}(b,x)\}\le a\Big) & \text{if }a\ge0,\\
0 & \text{if }a<0,
\end{cases}
\label{eq:cdf_G}
\end{align}
where the last equality follows from $\pi^{*}(b,x)=0$ on $\Psi_\infty.$

Now, for each $a\in\mathbb R$, let 
\begin{multline}
r_{n}(a)\equiv P_{M_{n}}\Big(\sup_{x\in X}\sup_{b\in\balln}\{\mathbb{G}_{n}^{*}(b,x)+\varphi^{*}(\hat{\xi}_{n}(b,x;\theta_{n}))\}\le a|(y^{n},x^{n})\Big)\\
-P\Big(\sup_{x\in X}\sup_{b\in\ball}\{\mathbb{G}(b,x)+\pi^{*}(b,x)\}\le a\Big).
\end{multline}

 Suppose $(y^{n},x^{n})\in A_{1n}^{c}\cap A_{2n}^c$. On this event,
$\varphi^{*}(\hat{\xi}_{n}(b,x;\theta_{n}))=0$ uniformly on $\Psi_n$, and thus $\sup_{\Psi_n}\{\mathbb{G}_{n}^{*}(b,x)+\varphi^{*}(\hat{\xi}_{n}(b,x;\theta_{n}))\}=\sup_{\Psi_n}\{\mathbb{G}_{n}^{*}(b,x)\}$. This, together with \eqref{eq:cdf_Gstar} and \eqref{eq:cdf_G} imply that, for any given $a>0$
 and any $\epsilon>0$, 
\begin{multline}
	P_{n}(\{|r_{n}(a)|>\epsilon\}\cap A_{1n}^{c}\cap A_{2n}^c)\\
	=P_{n}\Big(\Big\{\big|P_{M_n}\big(\sup_{(b,x)\in\Psi_n}\{\mathbb{G}_{n}^{*}(b,x)\}\le a\big)-P\big(\sup_{(b,x)\in\Psi_\infty}\{\mathbb{G}(b,x)\}\le a\big)\big|>\epsilon\Big\} \cap A_{1n}^{c}\cap A_{2n}^c\Big)\to 0,
\end{multline}
where the convergence to 0 follows from Lemma \ref{lem:weak_convergence}. Finally, 
By Lemma \ref{lem:maximal}, we then have 
\begin{align}
P_{n}(|r_{n}(a)|>\epsilon)\le P_{n}(\{|r_{n}(a)|>\epsilon\}\cap  A_{1n}^{c}\cap A_{2n}^c)+P_{n}(A_{1n}\cup A_{2n})\to0.
\end{align}

Since this holds for all $a$ in the neighborhood of $c_{\pi^{*}}(1-\alpha)>0$,
by Lemma 5 of \cite{andrews2010asymptotic}, we have 
\begin{align*}
c_{n}^{*} & \overset{p}{\rightarrow}c_{\pi^{*}}(1-\alpha).
\end{align*}

We now prove part (b). First, if $\pi(b,x)<0$, then $\eta(b,x)=\pi^{*}(b,x)=-\infty$
by conditions (i) and (ii), and if $\pi(b,x)=0$, then $\eta(b,x)\le0$
and $\pi^{*}(b,x)=0$. That is, for each $(b,x)$, we have $\pi^{*}(b,x)\ge\eta(b,x)$,
and therefore 
\begin{align*}
\sup_{x\in X}\sup_{b\in\ball}\{\mathbb{G}(b,x)+\pi^{*}(b,x)\}_{+}\ge & \sup_{x\in X}\sup_{b\in\ball}\{\mathbb{G}(b,x)+\eta(b,x)\}_{+}
\end{align*}
and thus 
\begin{align}
c_{\pi^{*}}(1-\alpha) & \ge c_{\eta}(1-\alpha),\label{eq:c_eta}
\end{align}
where $c_{\eta}(1-\alpha)$ denotes the $1-\alpha$ quantile of $\sup_{x\in X}\sup_{b\in\ball}\{\mathbb{G}(b,x)+\eta(b,x)\}$.

Next, under $\{\gamma_{n,\eta}:n\ge1\}$, $\mathbb{G}_{n}\rightsquigarrow\mathbb{G}$
and $\eta_{n}\rightsquigarrow\eta$ by a parallel argument to that
for the bootstrap weak convergence above. Then, by the continuous
mapping theorem, we have 
\begin{align}
T_{n}(\theta_{n}) & \overset{d}{\rightarrow}J_{\eta},\label{eq:J}
\end{align}
where $J_{\eta}$ is the distribution of $\sup_{(b,x)\in B\times X}\{\mathbb{G}(b,x)+\eta(b,x)\}_{+}$.
We then have 
\begin{align*}
\limsup_{n\rightarrow\infty}P_{\gamma_{n,\eta}}(T_{n}(\theta_{n})>\hat{c}_{n}(\theta_{n},1-\alpha)) & \le\limsup_{n\rightarrow\infty}P_{\gamma_{n,\eta}}(T_{n}(\theta_{n})>c_{n}^{*})\le1-J_{\eta}(c_{\pi^{*}}(1-\alpha)-),
\end{align*}
where the first inequality is by part (a) that $\hat{c}_{n}(\theta_{n},1-\alpha)\ge c_{n}^{*}$
and the second inequality is by part (a) that $c_{n}^{*}\overset{p}{\rightarrow}c_{\pi^{*}}(1-\alpha)$
and \eqref{eq:J} with $J_{\eta}(a-)$ being the limit from the left
of $J_{\eta}(\cdot)$ at $a$.

Suppose $c_{\pi^{*}}(1-\alpha)>0$. Then 
\begin{align*}
J_{\eta}(c_{\pi^{*}}(1-\alpha)-) & =J_{\eta}(c_{\pi^{*}}(1-\alpha))\ge1-\alpha,
\end{align*}
where the equality holds by $J_{\eta}(a)$ being continuous $\forall a>0$
and the inequality is by \eqref{eq:c_eta}. This proves part (b) for
this case.

Now suppose $c_{\pi^{*}}(1-\alpha)=0$. This implies that $c_{\eta}(1-\alpha)=0$
by \eqref{eq:c_eta}. Under $\{\gamma_{n,\eta}:n\ge1\}$, we have
\begin{align*}
\limsup_{n\rightarrow\infty}P_{\gamma_{n,\eta}}(T_{n}(\theta_{n})>0) & =1-J_{\eta}(0)=1-J_{\eta}(c_{\eta}(1-\alpha))\le\alpha
\end{align*}
where the first equality is by \eqref{eq:J}. This proves
part (b) for this case.

The proof of part (c) is analogous to that for parts (a) and (b) with
$w_{n}$ in place of $n$.

\end{proof}

Let $I_{x}=(1\{y_{\ell}=y_{1},x_{\ell}=x\},\dots,1\{y_{\ell}=y_{|Y|},x_{\ell}=x\})\in\{0,1\}^{|Y|}$.
Suppose for the moment $p_{x}$ is known. We may then write $\mathbb{G}_{n}(b,x)=-\frac{1}{\sqrt{n}}\sum_{\ell=1}^{n}(Z_{\ell}(b)-E[Z_{\ell}(b)])/p_{x},$
where $Z_{\ell}(b)=-I_{x}'b$.

\begin{lemma}\label{lem:maximal} Suppose Assumptions \ref{as:primitives}-\ref{as:sampling} and \ref{as:px} hold.  Let $\tau_{n}\to\infty$. Then,
for any sequence $\{P_{n}\}\subset\mathcal{P}$, $P_{n}(A_{n})\to0.$
\end{lemma}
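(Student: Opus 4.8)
The plan is to reduce the statement to a uniform (maximal) control of the centered empirical process $\mathbb G_n$, after isolating the deterministic part of $\hat\xi_n$. Writing
$\hat\xi_n(b,x)=\tau_n^{-1}\hat\eta_n(b,x)=\tau_n^{-1}\bigl(\eta_n(b,x)+\mathbb G_n(b,x)\bigr)$,
where $\eta_n(b,x)=\sqrt n\bigl(b^{\top}P_{y|x}-h(b,\bce{S^r})\bigr)$ is evaluated at $\theta_n$, I would first record that under any $P_n\in\mathcal P$ the null holds, so $P_{y|x}\in\bce{S^r}$ and hence $\eta_n(b,x)\le 0$ for every $(b,x)$. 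With $A_n\equiv A_{1n}$ as in \eqref{eq:def_A1n}, on the contact set $\{(b,x):\pi(b,x)=0\}$ I would use the bound
\[
\inf_{(b,x):\pi(b,x)=0}\hat\xi_n(b,x)
\;\ge\;
\inf_{(b,x):\pi(b,x)=0}\tau_n^{-1}\eta_n(b,x)
\;-\;
\tau_n^{-1}\sup_{(b,x)\in K\times X}\bigl|\mathbb G_n(b,x)\bigr|,
\]
so that it suffices to show each term on the right exceeds $-1/2$ with probability tending to one, whence $\inf\hat\xi_n>-1$ w.p.a.1 and $P_n(A_n)\to 0$.

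For the stochastic term I would prove the maximal inequality $\sup_{(b,x)\in K\times X}|\mathbb G_n(b,x)|=O_p(1)$ uniformly over $\mathcal P$; since $\tau_n\to\infty$, this yields $\tau_n^{-1}\sup|\mathbb G_n|=o_p(1)$. Using the representation $\mathbb G_n(b,x)=-n^{-1/2}\sum_{\ell=1}^n\bigl(Z_\ell(b)-E[Z_\ell(b)]\bigr)/p_x$ with $Z_\ell(b)=-I_x'b$, the index class $\{(y,x')\mapsto b^{\top}I_x:\,b\in K,\ x\in X\}$ is a finite-dimensional class of uniformly bounded functions, hence a uniform Donsker class with polynomial uniform entropy; this is where I would invoke Lemma \ref{lem:uniform_entropy}. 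A sub-Gaussian maximal inequality based on $\|\cdot\|_{\psi_2}$, together with the lower bound $p_x\ge\underline\zeta$ from Assumption \ref{as:px} controlling the $1/p_x$ factor, delivers the $O_p(1)$ bound with constants free of $P_n$. The hint treats $p_x$ as known; in general I would linearize $\hat P_{n,x}$ around $p_x$, using $|n_x/n-p_x|=O_p(n^{-1/2})$ uniformly over the finite set $X$ to show that replacing $np_x$ by $n_x$ is asymptotically negligible for the $O_p(1)$ conclusion.

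For the deterministic term I would show $\inf_{(b,x):\pi(b,x)=0}\tau_n^{-1}\eta_n(b,x)\to 0$. By condition (ii) of Lemma \ref{lem:size_control}, $\tau_n^{-1}\eta_n(\cdot,x)\to\pi(\cdot,x)$ pointwise, and on the contact set the limit is $0$; combined with $\eta_n\le 0$ this gives pointwise convergence to $0$ from below. To upgrade pointwise to uniform convergence over the contact set, I would exploit that $b\mapsto\eta_n(b,x)$ is concave and positively homogeneous (the support function $h(b,\cdot)$ is convex), that the limit $\pi(\cdot,x)$ is concave with $\{\pi(\cdot,x)=0\}$ a convex cone on which $\pi$ attains its maximum, and that $X$ is finite with $b$ ranging over the compact ball: pointwise convergence of concave functions is locally uniform, which yields $\sup_{b:\pi(b,x)=0}|\tau_n^{-1}\eta_n(b,x)|\to 0$. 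Combining the two terms gives $\inf_{(b,x):\pi(b,x)=0}\hat\xi_n\ge-1/2-o_p(1)>-1$ w.p.a.1.

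The main obstacle is the joint uniformity requirement. The maximal inequality must hold uniformly over all of $\mathcal P$ rather than merely along $\{P_n\}$, which is why the entropy bounds are taken to be $P$-free and the $1/p_x$ factor is bounded via Assumption \ref{as:px}. The subtler point, however, is the deterministic term: upgrading the pointwise convergence of $\tau_n^{-1}\eta_n$ to convergence that is uniform over the contact set, where concavity and positive homogeneity of $b\mapsto b^{\top}P_{y|x}-h(b,\bce{S^r})$ carry the argument but the relative-boundary behavior of the contact cone must be handled with care. These two uniformities are the crux; once both are secured, the combination is immediate.
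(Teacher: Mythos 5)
Your proposal is correct and takes essentially the same route as the paper's proof: the same decomposition $\hat\xi_n=\tau_n^{-1}\eta_n+\tau_n^{-1}\mathbb{G}_n$, the same sub-Gaussian chaining control of the empirical process with $P$-free constants (the paper's Lemma \ref{lem:subGaussian} plus a Euclidean covering-number bound), and the same use of Assumption \ref{as:px} to handle the $1/p_x$ factor and of condition (ii) to handle the drift; the paper merely absorbs $\inf_{b\in\Pi(x)}\eta_n(b,x)$ into a shifted threshold $\tilde\tau_n\to\infty$ and applies an exponential tail bound there, whereas you cap $\sup_{(b,x)}|\mathbb{G}_n(b,x)|$ at $O_p(1)$ and let division by $\tau_n$ finish the job. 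Your explicit treatment of the random denominator $n_x$ versus $np_x$, and of upgrading the pointwise convergence $\tau_n^{-1}\eta_n\to 0$ to uniform control on the contact set (which the paper asserts without proof when claiming $\tilde\tau_n\to\infty$), is if anything more careful than the paper's own argument.
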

\begin{proof} For each $x\in X$, let $\Pi(x)\equiv\{b\in \ball:\pi(b,x)=0\}.$
Fix $b\in\Pi(x)$. Observe that 
\begin{align*}
\hat{\xi}_{n}(b,x)<-1 & ~\Leftrightarrow\tau_{n}^{-1}\hat{\eta}_{n}(b,x)<-1\\
 & ~\Leftrightarrow\tau_{n}^{-1}\eta_{n}(b,x)+\tau_{n}^{-1}\mathbb{G}_{n}(b,x)<-1\\
 & ~\Leftrightarrow\mathbb{G}_{n}(b,x)<-\tau_{n}-\eta_{n}(b,x)\\
 & ~\Leftrightarrow\frac{1}{\sqrt{n}}\sum_{\ell=1}^{n}Z_{\ell}(b)-E[Z_{\ell}(b)]>p_{x}(\tau_{n}+\eta_{n}(b,x))\\
 & ~\Rightarrow\frac{1}{\sqrt{n}}\sum_{\ell=1}^{n}Z_{\ell}(b)-E[Z_{\ell}(b)]>\underline\zeta(\tau_{n}+\eta_{n}(b,x)),
\end{align*}
where the last implication is due to $p_{x}\ge\underline\zeta$ for all $x$.

Let $\tilde{\tau}_{n}\equiv\tau_{n}+\inf_{b\in\Pi(x)}\eta_{n}(b,x)$
and note that $\tilde{\tau}_{n}\to\infty.$\footnote{While $\tilde{\tau}_{n}$ depends on $x$, it does not create an issue.
For notational simplicity, we will be implicit about $\tilde{\tau}_{n}$'s
dependence on $x$ below.} Then, for any $x\in X$, 
\begin{align}
P\Big(\inf_{b\in\Pi(x)}\hat{\xi}_{n}(b,x)<-1\Big) & \le P\Big(\sup_{b\in\Pi(x)}\frac{1}{\sqrt{n}}\sum_{\ell=1}^{n}Z_{\ell}(b)-E[Z_{\ell}(b)]>\underline\zeta\tilde{\tau}_{n}\Big)\label{eq:maximal0}\\
 & =P\Big(\sup_{b\in\Pi(x)}\mathcal{Z}_{n}(b)>\underline\zeta\tilde{\tau}_{n}\Big)\\
 & =P\Big(\sup_{b\in\Pi(x)}\{\mathcal{Z}_{n}(b)-\mathcal{Z}_{n}(0)\}>\underline\zeta\tilde{\tau}_{n}\Big),\label{eq:maximal1}
\end{align}
where $\mathcal{Z}_{n}(0)=0$ by definition. By Lemma \ref{lem:subGaussian},
$\mathcal{Z}_{n}$ is a sub-Gaussian process. The chaining tail inequality
\citep[Theorem 5.29]{Handel:2018aa} then implies, for any $u\ge0$,
\begin{align}
P\Big(\sup_{b\in\Pi(x)}\{\mathcal{Z}_{n}(b)-\mathcal{Z}_{n}(0)\}>C\int_{0}^{\infty}\sqrt{\ln N(B,\|\cdot\|,\epsilon)}d\epsilon+u\Big)\le C\exp(-\frac{u^{2}}{CD^{2}}),\label{eq:maximal2}
\end{align}
where $C$ is a universal constant, $D=\text{diam}(B)$, and $N(B,\|\cdot\|,\epsilon)$
is the covering number of $B$. Note that 
\begin{align}
M\equiv\int_{0}^{\infty}\sqrt{\ln N(B,\|\cdot\|,\epsilon)}d\epsilon\le\int_{0}^{D}\sqrt{|Y|\ln(D/\epsilon)}d\epsilon<\infty,\label{eq:maximal3}
\end{align}
where we used $N(B,\|\cdot\|,\epsilon)\le(\text{diam}(B)/\epsilon)^{|Y|}$,
and $\ln N(B,\|\cdot\|,\epsilon)=0$ for all $\epsilon>\text{diam}(D)$.

Let $u_{n}=\underline\zeta\tilde{\tau}_{n}-M$ and observe that $u_{n}>0$
for all $n\ge \bar{N}$ sufficiently large. By \eqref{eq:maximal1}-\eqref{eq:maximal3},
for any $n\ge \bar{N},$ 
\begin{align}
P\Big(\sup_{b\in\Pi(x)}\{\mathcal{Z}_{n}(b)-\mathcal{Z}_{n}(0)\}>\underline\zeta\tilde{\tau}_{n}\Big)\le C\exp(-\frac{u_{n}^{2}}{CD^{2}})
\end{align}
By \eqref{eq:maximal0}-\eqref{eq:maximal1} and the union bound,
we obtain 
\begin{align}
P(A_{n})\le\sum_{x\in X}P\Big(\inf_{b\in\Pi(x)}\hat{\xi}_{n}(b,x)<-1\Big)\le\sum_{x\in X}C\exp(-\frac{u_{n}^{2}}{CD^{2}}).
\end{align}
The claim of the lemma then follows due to $u_{n}\to\infty$ and $C,D$
being independent of $P$. \end{proof}

\begin{lemma}\label{lem:subGaussian} For each $b\in \ball$, let $Z_{\ell}(b)=-I_{x}'b$
and let $\mathcal{Z}_{n}(b)=\frac{1}{\sqrt{n}}\sum_{\ell=1}^{n}Z_{\ell}(b)-E[Z_{\ell}(b)]$.
Then, there exists $K\ge0$ such that 
\begin{align}
\|\mathcal{Z}_{n}(b)-\mathcal{Z}_{n}(\tilde{b})\|_{\psi_{2}}\le K\|b-\tilde{b}\|,~\text{ for all }b,\tilde{b}\in \ball.
\end{align}
\end{lemma}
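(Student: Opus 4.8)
The plan is to exploit the linearity of the map $b\mapsto Z_\ell(b)=-I_x'b$ to rewrite the increment $\mathcal Z_n(b)-\mathcal Z_n(\tilde b)$ as a normalized sum of i.i.d.\ centered and \emph{bounded} summands, and then to invoke a standard sub-Gaussian concentration bound for sums of independent random variables. Concretely, set $d\equiv b-\tilde b$. By linearity, $Z_\ell(b)-Z_\ell(\tilde b)=-I_x'd$, so that
\[
\mathcal Z_n(b)-\mathcal Z_n(\tilde b)=-\frac{1}{\sqrt n}\sum_{\ell=1}^n\bigl(I_x'd-E[I_x'd]\bigr)=-\frac{1}{\sqrt n}\sum_{\ell=1}^n U_\ell,\qquad U_\ell\equiv I_x'd-E[I_x'd].
\]
The $U_\ell$ are i.i.d.\ and mean-zero under Assumption~\ref{as:sampling}.

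The key structural observation is that each $I_x$ is an indicator vector with at most one nonzero coordinate, equal to one, so that $|I_x'd|\le\|d\|_\infty\le\|d\|$. Consequently the centered summand satisfies $|U_\ell|\le 2\|d\|$ almost surely. Since a bounded random variable is sub-Gaussian, there is a universal constant $c$ (for instance one may take $\|U_\ell\|_{\psi_2}\le \|U_\ell\|_\infty/\sqrt{\ln 2}$) with $\|U_\ell\|_{\psi_2}\le c\|d\|$.

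Next I would apply the general Hoeffding bound for sums of independent mean-zero sub-Gaussian variables \citep[e.g.][Prop.~2.6.1]{Handel:2018aa}, which gives $\bigl\|\sum_\ell a_\ell U_\ell\bigr\|_{\psi_2}^2\le C\sum_\ell a_\ell^2\|U_\ell\|_{\psi_2}^2$ for a universal constant $C$. Taking $a_\ell=n^{-1/2}$ yields
\[
\|\mathcal Z_n(b)-\mathcal Z_n(\tilde b)\|_{\psi_2}^2\;\le\;C\cdot\frac1n\sum_{\ell=1}^n\bigl(c\|d\|\bigr)^2\;=\;Cc^2\|d\|^2,
\]
so the claim holds with $K=\sqrt{C}\,c$.

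The computation is essentially routine, so there is no serious obstacle; the only point requiring a word of care is that $K$ must be uniform both in $n$ and over the DGP class $\mathcal P$. Uniformity in $n$ is immediate from the display above, since the factor $n^{-1}\sum_{\ell=1}^n 1$ cancels the sample size. Uniformity over $P\in\mathcal P$ follows because the pointwise bound $|U_\ell|\le 2\|d\|_\infty$ is distribution-free: it rests only on the cardinality-one structure of $I_x$ and not on the law governing $(y_\ell,x_\ell)$. Finally, $\|d\|_\infty\le\|d\|$ connects the sup-norm bound to the Euclidean norm appearing in the statement, delivering the asserted Lipschitz property of the increment process in $\|\cdot\|$.
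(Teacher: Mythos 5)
Your proof is correct and follows essentially the same route as the paper's: bound the increment $Z_\ell(b)-Z_\ell(\tilde b)$ linearly in $\|b-\tilde b\|$ using boundedness of $I_x$, conclude the centered summand is sub-Gaussian with $\psi_2$-norm of order $\|b-\tilde b\|$, and apply the general Hoeffding inequality for sums of independent mean-zero sub-Gaussian variables (Vershynin's Proposition 2.6.1, which the paper cites and you restate, though you attribute it to van Handel) so that the $n^{-1/2}$ scaling cancels. The only difference is cosmetic: you exploit the one-hot structure of $I_x$ to get the dimension-free bound $|I_x^{\top}d|\le\|d\|_\infty\le\|d\|$ and center before bounding, while the paper uses Cauchy--Schwarz with the cruder bound $\|I_x\|\le\sqrt{|Y|}$ and handles the centering via the triangle inequality in $\psi_2$; both yield a constant $K$ uniform in $n$ and over $\mathcal P$.
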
 \begin{proof} Note that 
\begin{align}
Z_{\ell}(b)-Z_{\ell}(\tilde{b}) & \le|Z_{\ell}(b)-Z_{\ell}(\tilde{b})|\le\|I_{x}\|\|b-\tilde{b}\|,\label{eq:subGaussian1}
\end{align}
and $W=\|I_{x}\|$ is a bounded random variable, which implies $\|W\|_{\psi_{2}}\le\frac{1}{\sqrt{\ln2}}\|W\|_{\infty}$
\citep[p.25]{Vershynin2018}. The inequality in \eqref{eq:subGaussian1}
also implies $E[Z_{\ell}(b)]-E[Z_{\ell}(\tilde{b})]\le E[|W|]\|b-\tilde{b}\|\le\|W\|_{\infty}\|b-\tilde{b}\|.$
Since $\|W\|_{\infty}\le\sqrt{|Y|}$, the triangle inequality implies
\begin{align}
\big\|(Z_{\ell}(b)-E[Z_{\ell}(b)])-(Z_{\ell}(\tilde{b})-E[Z_{\ell}(\tilde{b})])\big\|_{\psi_{2}}\le\sqrt{|Y|}(1+\frac{1}{\sqrt{\ln2}})\|b-\tilde{b}\|.
\end{align}
Note that $\mathcal{Z}_{n}(b)-\mathcal{Z}_{n}(\tilde{b})$ is the
sum of independent mean-zero sub-Gaussian random variables $(Z_{\ell}(b)-E[Z_{\ell}(b)])-(Z_{\ell}(\tilde{b})-E[Z_{\ell}(\tilde{b})])$
and hence by Proposition 2.6.1 in \citet{Vershynin2018}, 
\begin{align}
\|\mathcal{Z}_{n}(b)-\mathcal{Z}_{n}(\tilde{b})\|_{\psi_{2}}^{2} & =\frac{1}{n}\Big\|\sum_{\ell=1}^{n}(Z_{\ell}(b)-E[Z_{\ell}(b)])-(Z_{\ell}(\tilde{b})-E[Z_{\ell}(\tilde{b})])\Big\|_{\psi_{2}}^{2}\\
 & \le\frac{1}{n}C\sum_{\ell=1}^{n}\big\|(Z_{\ell}(b)-E[Z_{\ell}(b)])-(Z_{\ell}(\tilde{b})-E[Z_{\ell}(\tilde{b})])\big\|_{\psi_{2}}^{2}\\
 & \le C\big(\sqrt{|Y|}(1+\frac{1}{\sqrt{\ln2}})\|b-\tilde{b}\|\big)^{2},
\end{align}
where $C>0$ is a universal constant. The claim then follows with
$K=\sqrt{C}(1+\frac{1}{\sqrt{\ln2}}).$ \end{proof}

\begin{lemma}\label{lem:Hausdorff_consistency}
Suppose Assumptions \ref{as:primitives}-\ref{as:Wn}	hold. 
Then, for any $\epsilon>0$ and $\eta>0$, there exists $\bar N_{\epsilon,\eta}\in\mathbb N$ such that
\begin{align}
\sup_{P\in\mathcal P}P\big(\sup_{x\in X}d_H(\balln,\ball)>\epsilon\big)<\eta.	
\end{align}
for all $n\ge \bar N_{\epsilon,\eta}$.
\end{lemma}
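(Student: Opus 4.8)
The plan is to exploit the isometry between compact convex bodies under the Hausdorff metric and their support functions under the supremum norm, thereby reducing the geometric claim to a uniform bound on a matrix functional. The sets $\ball$ and $\balln$ are ellipsoids, and their support functions have the closed forms $h(u,\ball)=(u^{\top}W_x^{-1}u)^{1/2}$ and $h(u,\balln)=(u^{\top}W_{n,x}^{-1}u)^{1/2}$ (obtained by the change of variables $c=W_x^{1/2}b$). By the support-function representation of the Hausdorff distance between compact convex sets \citep[][Appendix F]{Molchanov:2005aa},
\[
d_H(\balln,\ball)=\sup_{\|u\|=1}\bigl|(u^{\top}W_{n,x}^{-1}u)^{1/2}-(u^{\top}W_x^{-1}u)^{1/2}\bigr|,
\]
so the task becomes controlling this quantity uniformly in $x$ (and in $P$) by $\|W_{n,x}-W_x\|_{op}$.

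First I would establish a deterministic Lipschitz-type bound on a favorable event. Using the elementary inequality $|\sqrt{s}-\sqrt{t}|\le |s-t|/(2\sqrt{\min(s,t)})$ together with $|u^{\top}(W_{n,x}^{-1}-W_x^{-1})u|\le\|W_{n,x}^{-1}-W_x^{-1}\|_{op}$ for $\|u\|=1$, it suffices to bound $\|W_{n,x}^{-1}-W_x^{-1}\|_{op}$ and to bound the Rayleigh quotients $u^{\top}W_x^{-1}u$ and $u^{\top}W_{n,x}^{-1}u$ away from zero. Both follow from Assumption \ref{as:Wn}(ii): on the event $\{\sup_{x}\|W_{n,x}-W_x\|_{op}\le\delta\}$ with $\delta<\underline{\kappa}/2$, Weyl's inequality gives $\underline{\lambda}(W_{n,x})\ge\underline{\kappa}-\delta\ge\underline{\kappa}/2$ and $\overline{\lambda}(W_{n,x})\le\overline{\kappa}+\delta$, uniformly in $x$, so both Rayleigh quotients exceed a fixed constant $m>0$ depending only on $\underline{\kappa},\overline{\kappa}$. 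For the inverse, the resolvent identity $W_{n,x}^{-1}-W_x^{-1}=W_{n,x}^{-1}(W_x-W_{n,x})W_x^{-1}$ combined with $\|W_x^{-1}\|_{op}\le 1/\underline{\kappa}$ and $\|W_{n,x}^{-1}\|_{op}\le 2/\underline{\kappa}$ yields $\|W_{n,x}^{-1}-W_x^{-1}\|_{op}\le (2/\underline{\kappa}^{2})\,\delta$. Combining these estimates produces a constant $C=C(\underline{\kappa},\overline{\kappa})$, independent of $x$ and $P$, such that on this event $\sup_{x\in X}d_H(\balln,\ball)\le C\delta$.

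It then remains to convert this into the stated probabilistic bound. Given $\epsilon$ and $\eta$, I would choose $\delta=\min\{\epsilon/C,\underline{\kappa}/4\}$, so that the deterministic bound forces $\sup_{x}d_H(\balln,\ball)\le\epsilon$ whenever $\sup_{x}\|W_{n,x}-W_x\|_{op}\le\delta$. Contrapositively, $\{\sup_{x}d_H(\balln,\ball)>\epsilon\}\subseteq\{\sup_{x}\|W_{n,x}-W_x\|_{op}>\delta\}$, hence $P(\sup_{x}d_H>\epsilon)\le P(\sup_{x}\|W_{n,x}-W_x\|_{op}>\delta)$ for every $P\in\mathcal P$. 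Applying Assumption \ref{as:Wn}(i) with tolerance $\epsilon'=\min\{\delta,\eta\}$ bounds the right-hand side by $\epsilon'\le\eta$ for all $n\ge N_{\epsilon'}$, and I would set $\bar N_{\epsilon,\eta}=N_{\epsilon'}$.

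The main obstacle is ensuring that every constant is genuinely uniform over $x\in X$ and over the DGPs $P\in\mathcal P$. Uniformity in $x$ is secured by the two-sided eigenvalue bounds of Assumption \ref{as:Wn}(ii), which prevent the ellipsoids from degenerating and keep $\|W_{n,x}^{-1}\|_{op}$ controlled on the favorable event; uniformity in $P$ hinges on reading Assumption \ref{as:Wn}(i) as supplying a threshold $N_{\epsilon'}$ that does not depend on $P$, which is the sense required by the uniform-inference framework. Particular care is needed at the step where $W_{n,x}$ must remain nonsingular with a uniformly bounded inverse, since both the closed-form support function and the resolvent identity break down without this control.
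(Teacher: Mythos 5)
Your proof is correct and follows essentially the same route as the paper's: the H\"ormander isometry reducing $d_H(\balln,\ball)$ to a sup-norm comparison of the closed-form ellipsoid support functions $(u^{\top}W_{n,x}^{-1}u)^{1/2}$ and $(u^{\top}W_x^{-1}u)^{1/2}$, the resolvent identity $W_{n,x}^{-1}-W_x^{-1}=W_{n,x}^{-1}(W_x-W_{n,x})W_x^{-1}$, uniform two-sided eigenvalue control to keep the Rayleigh quotients bounded away from zero, and Assumption \ref{as:Wn}(i) to convert the deterministic bound into a uniform-in-$P$ probability statement. The only cosmetic differences are that you invoke Weyl's inequality where the paper cites the Lipschitz continuity of eigenvalues from Bhatia, and you use $|\sqrt{s}-\sqrt{t}|\le |s-t|/(2\sqrt{\min(s,t)})$ where the paper factors $A^2-B^2=(A-B)(A+B)$; these are interchangeable.
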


\begin{proof}
We show convergence using the following isometry \citep[][Theorem 1.1.12]{Li:2002aa}:
\begin{align}
d_H(\balln,\ball)=	\sup_{p\in \mathbb S^{|Y|-1}}|h(p,\balln)-h(p,\ball)|.\label{eq:Hormander}
\end{align}
For this, recall that the support function of $\ball$ is given by
\begin{align}
h(p,\ball)=\sup_{b\in\ball}p^{\top}b=\sup&~ p^{\top}b.\notag\\
s.t.&~b^{\top}W_xb\le 1
\end{align}	
Solving the quadratic program above yields
\begin{align}
h(p,\ball)=(p^{\top}W_x^{-1}p)^{1/2},
\end{align}
where $W_x^{-1}$ is well defined by $\underline{\lambda}(W_x)>0$.
A similar argument can show
\begin{align}
h(p,\balln)=(p^{\top}W^{-1}_{n,x}p)^{1/2}.		
\end{align}
By the Cauchy-Schwarz inequality and $p$ being a unit vector, it follows that
\begin{align}
|h(p,\balln)^2-	h(p,\ball)^2|&=|p^{\top}(W^{-1}_{n,x}-W_x^{-1})p|\notag\\
&=|p^{\top}W_{n,x}^{-1}(W_x-W_{n,x})W_x^{-1}p|\notag\\
&\le \|p^{\top}W_{n,x}^{-1}\|\|(W_x-W_{n,x})W_x^{-1}p\|\notag\\
&\le \|W_{n,x}^{-1}\|_{op}\|(W_x-W_{n,x})\|_{op}\|W_x^{-1}p\|\notag\\
&\le \|W_{n,x}^{-1}\|_{op}\|W_x^{-1}\|_{op}\|W_x-W_{n,x}\|_{op}.\label{eq:Hausdorff1}
\end{align}

Note that $\|W_x^{-1}\|_{op}=\underline{\lambda}(W_x)^{-1}\le \underline{\kappa}^{-1}$. Let $\varepsilon= \underline{\kappa}^{-1}-\underline{\lambda}(W_x)^{-1}.$ By the Lipschitz continuity of $\underline{\lambda}$ for Hermitian matrices \citep[][Corollary III.2.6]{Bhatia_1997}, there is $\delta>0$ such that $$\|W_x-W_{n,x}\|_{op}\le \delta~\Rightarrow~\underline{\lambda}(W_{n,x})^{-1}\le \underline{\lambda}(W_x)^{-1}+\varepsilon=\underline{\kappa}^{-1}.$$ Let $\epsilon'=\min\{\delta,\frac{2\underline{\underline\kappa}^2\epsilon}{\overline\kappa}\}$. Then, there is $N_{\epsilon',\eta}$ such that $\|W_x-W_{n,x}\|_{op}\le\epsilon'$ so that
\begin{align}
	|h(p,\balln)^2-	h(p,\ball)^2|\le \underline{\kappa}^{-2}\|W_x-W_{n,x}\|_{op}\le \frac{2\epsilon}{\overline\kappa},\label{eq:Hausdorff2}
\end{align}
with probability at least $1-\eta$ uniformly across $P$ for all $n\ge N_{\epsilon',\eta}$, where the first inequality follows from \eqref{eq:Hausdorff1}. 

Note also that 
\begin{align}
h(p,\ball)^2=p^{\top}W_x^{-1}p\ge \underline \lambda(W_x^{-1})=\overline \lambda(W_x)^{-1}\ge \overline{\kappa}^{-1}.\label{eq:Hausdorff3}
\end{align}
Similarly, again by the Lipschitz continuity of $\overline \lambda$ for real symmetric matrices and letting $\varepsilon'=\lambda(W)^{-1}-\overline{\kappa}^{-1}$, there exists $\delta'>0$ such that
\begin{align}
	\|W_x-W_{n,x}\|_{op}\le \delta'~\Rightarrow~\overline \lambda(W_{n,x})^{-1}\ge \overline \lambda(W_x)^{-1}-\varepsilon'=  \overline{\kappa}^{-1}
\end{align}
implying 
\begin{align}
h(p,\balln)^2	=p^{\top}W_{n,x}^{-1}p\ge \underline \lambda(W_{n,x}^{-1})=\overline \lambda(W_{n,x})^{-1}\ge \overline{\kappa}^{-1}.\label{eq:Hausdorff4}
\end{align} 
Let $\epsilon''=\min\{\delta',\epsilon'\}$. Noting that $A^2-B^2=(A-B)(A+B)$ and by \eqref{eq:Hausdorff2}, \eqref{eq:Hausdorff3}, and \eqref{eq:Hausdorff4}, there is $N_{\epsilon'',\eta}$ such that
\begin{align}
	|h(p,\balln)-	h(p,\ball)|=\frac{|h(p,\balln)^2-	h(p,\ball)^2|}{h(p,\balln)+	h(p,\ball)}\le \frac{\overline{\kappa}}{2} \times \frac{2\epsilon}{\overline\kappa}=\epsilon,
\end{align}
with probability at least $1-\eta$ uniformly across $P$ for all $n\ge N_{\epsilon'',\eta}$. Note that the bound in the above expression does not depend on $p$ nor $x$, and hence it is uniform across $p\in\mathbb S^{|Y|-1}$ and $x\in X$. The conclusion of the lemma then follows from the isometry in \eqref{eq:Hormander} and letting $\bar N_{\epsilon,\eta}=N_{\epsilon'',\eta}$.
\end{proof}

Let $Z=Y\times X$. Let $I_x:Z\times \{0,1\}^{|Y|}$ be defined by
 $I_{x}(w)=(1\{y_{\ell}=y_{1},x_{\ell}=x\},\dots,1\{y_{\ell}=y_{|Y|},x_{i}=x\})'$.  Define
\begin{align}
	\mathcal M_{P,x}\equiv\Big\{f_{\tilde b,x}:f_{\tilde b,x}(z)=\frac{\tilde b^{\top}I_x(z)}{p_x},\tilde b\in\tilde B_x\Big\},
\end{align}
and let $\mathcal M_P=\bigcup_{x\in X}\mathcal M_{P,x}$, where note that $X$ is finite. The following lemma characterizes $\mathcal M_P$'s uniform entropy.

\begin{lemma}\label{lem:uniform_entropy}
Suppose Assumptions \ref{as:Wn} (ii) and \ref{as:px} hold.
Then, there exist constants $K,v>0$ that do not depend on $P$ such that
\begin{align}
	\sup_{Q}N(\epsilon\|F\|_{L^2_Q},\mathcal M_P,L^2_Q)\le K\epsilon^{-v},~0<\epsilon<1,
\end{align}
where the supremum is taken over all discrete distributions, and $F$ is the envelope function for $\mathcal M_P.$
\end{lemma}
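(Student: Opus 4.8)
The plan is to show that $\mathcal{M}_P$ is a VC-subgraph class whose VC index is pinned down by $|Y|$ alone, and then invoke the standard uniform-entropy bound for such classes. Since the covering number of a finite union is at most the sum of the covering numbers of its pieces, I would first fix $x\in X$, analyze $\mathcal{M}_{P,x}$, and then aggregate over the finitely many $x\in X$.

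First, for fixed $x$ each $f_{\tilde b,x}$ is linear in $\tilde b$: it is a linear combination of the $|Y|$ coordinate functions $z\mapsto (I_x(z))_j/p_x$. Hence $\mathcal{M}_{P,x}$ is contained in the vector space $\mathcal{G}_x$ spanned by these functions, whose dimension is at most $|Y|$ regardless of $P$. By Lemma 2.6.15 in \cite{van_der_Vaart_1996}, a finite-dimensional vector space of measurable functions of dimension $d$ is VC-subgraph with index at most $d+2$; therefore $\mathcal{G}_x$, and a fortiori its subset $\mathcal{M}_{P,x}$, is VC-subgraph with index $V\le |Y|+2$. Crucially, this index depends only on $|Y|$ and not on $P$.

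Second, I would produce a constant envelope that is uniform over $P$. Since $I_x(z)$ is either the zero vector or a standard basis vector, $|\tilde b^{\top}I_x(z)|\le \|\tilde b\|$. Assumption \ref{as:Wn}(ii) gives $\underline{\lambda}(W_x)\ge \underline{\kappa}$, so any $\tilde b$ with $\|\tilde b\|_{W_x}\le 1$ satisfies $\|\tilde b\|\le \underline{\kappa}^{-1/2}$; combined with $p_x\ge \underline{\zeta}$ from Assumption \ref{as:px}, this yields $|f_{\tilde b,x}(z)|\le \underline{\kappa}^{-1/2}/\underline{\zeta}\equiv C_0$, a bound free of both $P$ and $x$. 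Taking $F\equiv C_0$ as the envelope for $\mathcal{M}_P$ gives $\|F\|_{L^2_Q}=C_0$ for every discrete $Q$. Applying Theorem 2.6.7 of \cite{van_der_Vaart_1996} then gives, for a universal constant $K_0$ and all $0<\epsilon<1$, the bound $\sup_Q N(\epsilon\|F\|_{L^2_Q},\mathcal{M}_{P,x},L^2_Q)\le K_0 V(16e)^V \epsilon^{-2(V-1)}$. Finally, covering the union piece by piece yields $N(\epsilon\|F\|_{L^2_Q},\mathcal{M}_P,L^2_Q)\le \sum_{x\in X} N(\epsilon\|F\|_{L^2_Q},\mathcal{M}_{P,x},L^2_Q)$, so absorbing the factor $|X|$ together with $K_0 V(16e)^V$ into a single constant $K$ and setting $v=2(V-1)\le 2(|Y|+1)$ delivers the claim with $K,v$ independent of $P$.

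The main obstacle is the uniformity in $P$: both $p_x$ and the index set $\tilde B_x$ (through $W_x$) depend on $P$, so I must verify that neither the VC index nor the envelope degrades as $P$ varies. The index is immune because it is controlled by the ambient dimension $|Y|$, which is fixed; the envelope is controlled precisely by the uniform lower bounds $\underline{\zeta}$ and $\underline{\kappa}$ supplied by Assumptions \ref{as:px} and \ref{as:Wn}(ii). The remaining steps are routine invocations of empirical process theory.
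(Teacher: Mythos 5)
Your proof is correct, but it takes a genuinely different route from the paper's. The paper argues via the Lipschitz-in-parameter (``type II'' class) machinery of \cite{Andrews:1994aa}: for fixed $x$ it bounds $|f_{b,x}(z)-f_{b',x}(z)|\le C\underline{\zeta}^{-1}\|I_x(z)\|\,\|b-b'\|_{W_x}$, invokes the argument of Theorem 2 there to obtain the polynomial uniform entropy bound with exponent $v=|Y|$, and then aggregates over the finite union $\bigcup_{x\in X}\mathcal M_{P,x}$ as in Theorem 3 (Eq.\ (A.4)) of \cite{Andrews:1994aa}. You instead exploit the linearity of $f_{\tilde b,x}$ in $\tilde b$ directly: $\mathcal M_{P,x}$ sits inside a vector space of dimension at most $|Y|$, so Lemma 2.6.15 of \cite{van_der_Vaart_1996} makes it VC-subgraph with index $V\le |Y|+2$, and Theorem 2.6.7 delivers the uniform entropy bound with $v=2(V-1)$; the union step is the same. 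Both arguments yield constants free of $P$, which is all that matters downstream (only polynomial growth of the entropy enters the chaining integral in the bootstrap equicontinuity lemma). Your VC route makes the uniformity in $P$ arguably more transparent, since the VC index is pinned down by the ambient dimension alone and no Lipschitz constant (which in the paper implicitly involves the norm-equivalence between $\|\cdot\|$ and $\|\cdot\|_{W_x}$, uniform by the eigenvalue bounds) needs tracking; the paper's route gives the slightly smaller exponent $v=|Y|$ versus your $v\le 2(|Y|+1)$, which is immaterial here. Two minor remarks: your envelope $C_0=\underline{\kappa}^{-1/2}\underline{\zeta}^{-1}$ is in fact sharper than the paper's $\underline{\kappa}^{-1/2}\underline{\zeta}^{-1}\sqrt{|Y|}$, thanks to your observation that $I_x(z)$ is either zero or a standard basis vector; and, like the paper, you tacitly use the finiteness of $X$ (Assumption 1(i)) in the union step, which is part of the ambient setup rather than the lemma's stated hypotheses.
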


\begin{proof}
We first construct an envelope function. Observe that
\begin{align}
|f_{b,x}(z)|\le \underline{\zeta}^{-1}|b^{\top}I_x(z)|\le  \underline{\zeta}^{-1}\|b\|\|I_x(z)\|\le  \underline{\zeta}^{-1}\sqrt{|Y|}\|b\|,	
\end{align}
where the second inequality is due to the Cauchy-Schwarz inequality. Note that
\begin{align}
\sup_{b\in\ball}\|b\|^2 =&\sup_{\bar b^{\top}\bar b\le 1} ~\bar b^{\top} W^{-1}\bar b\le \underline{\lambda}(W)^{-1}\le \kappa^{-1},
\end{align}
where $\bar b=W^{1/2}b.$ Therefore, one can take $F(z)=\kappa^{-1/2}\underline{\zeta}^{-1}\sqrt{|Y|}$ as the envelope for $\mathcal M_P$.

Let $x\in X$ be fixed. Then,
	\begin{align}
		|f_{b,x}(z)-f_{b^{\top},x}(z)|\le \underline\zeta^{-1}\|I_x(z)\|\|b-b^{\top}\|\le C\underline{\zeta}^{-1}\|I_x(z)\|\|b-b^{\top}\|_{W_x},
	\end{align}
for some $C>0$, where the second inequality follows from the equivalence of norms in a Euclidean space. Note also that $\|I_x(z)\|\le |Y|^{1/2}$. Following the argument in the proof of Theorem 2 in \cite{Andrews:1994aa}, it follows that
\begin{align}
	\sup_{Q}N(\epsilon\|F\|_{L^2_Q},\mathcal M_{P,x},L^2_Q)\le K\epsilon^{-v},0<\epsilon<1,
\end{align}
with $v=|Y|.$ Note that $\mathcal M_P$ is a finite union of $\mathcal M_{P,x},x\in X$. The conclusion of the lemma then follows by arguing as in the proof of Theorem 3 in \cite{Andrews:1994aa} (see Eq. (A.4)).
\end{proof}

We define the variance semimetric $\rho_P\big((b_1,x_1),(b_2,x_2)\big)$ pointwise by
\begin{align}
	\rho_P\big((b_1,x_1),(b_2,x_2)\big)=\text{Var}\big(f_{b_1,x_1}(z)-f_{b_2,x_2}(z)\big)^{1/2}.
\end{align}

\begin{lemma}\label{lem:variance_semimetric}
	For any $(b_1,x_1), (b_2,x_2)\in K\times X$, let $\rho\big((b_1,x_1),(b_2,x_2)\big)=\|b_1-b_2\|+d(x_1,x_2),$ where $d$ is the discrete metric. Let $0<\delta<1$. Suppose that $\rho\big((b_1,x_1),(b_2,x_2)\big)\le \delta$. Then, there exists $C>0$ such that,  for any $P\in\mathcal P$,
\begin{align}
	\rho_P\big((b_1,x_1),(b_2,x_2)\big)\le C\delta.
\end{align}
\end{lemma}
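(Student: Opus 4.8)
The plan is to exploit two features of the problem: the discrete structure of the metric $d$ on $X$, and the indicator structure of $I_x$. First I would observe that because $d$ is the discrete metric, $d(x_1,x_2)\in\{0,1\}$, so the hypothesis $\rho\big((b_1,x_1),(b_2,x_2)\big)=\|b_1-b_2\|+d(x_1,x_2)\le\delta<1$ forces $d(x_1,x_2)=0$, i.e.\ $x_1=x_2=x$, say. This collapses the problem to bounding $\rho_P$ between two elements of the same class $\mathcal M_{P,x}$, and in this case $\rho\big((b_1,x),(b_2,x)\big)=\|b_1-b_2\|\le\delta$. This is precisely the role of the restriction $\delta<1$ in the statement.

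Next, writing $b\equiv b_1-b_2$, I would use linearity to get $f_{b_1,x}(z)-f_{b_2,x}(z)=b^{\top}I_x(z)/p_x$, so that
\begin{align*}
\rho_P\big((b_1,x),(b_2,x)\big)^2=\text{Var}\big(b^{\top}I_x(z)/p_x\big)\le \frac{1}{p_x^2}\,E\big[(b^{\top}I_x(z))^2\big].
\end{align*}
The key computation is then to control $E[(b^{\top}I_x(z))^2]$. Since $I_x(z)$ is a $0/1$ vector with at most one nonzero coordinate (the event $\{y_\ell=y_j,x_\ell=x\}$ can hold for at most one index $j$), one has $\|I_x(z)\|\le 1$ pointwise and $\|I_x(z)\|^2=\mathbb 1\{x_\ell=x\}$, whence $E[\|I_x(z)\|^2]=p_x$. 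Applying the Cauchy--Schwarz inequality pointwise gives $(b^{\top}I_x(z))^2\le\|b\|^2\|I_x(z)\|^2$, and taking expectations yields $E[(b^{\top}I_x(z))^2]\le\|b\|^2 p_x$.

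Combining these bounds gives $\rho_P\big((b_1,x),(b_2,x)\big)^2\le\|b\|^2/p_x$. Assumption \ref{as:px} supplies the uniform lower bound $p_x\ge\underline\zeta>0$, so $\rho_P\le\underline\zeta^{-1/2}\|b_1-b_2\|\le\underline\zeta^{-1/2}\delta$, and the claim holds with $C=\underline\zeta^{-1/2}$. Crucially, this constant depends only on $\underline\zeta$, which by the definition of $\mathcal P$ is a single uniform constant shared by all $P\in\mathcal P$; this is what delivers the uniformity in $P$ asserted in the lemma. I do not anticipate a substantive obstacle here: the only points requiring care are the reduction to $x_1=x_2$ via the discrete metric and confirming that $C$ does not depend on $P$, both of which are immediate from the stated assumptions.
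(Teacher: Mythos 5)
Your proof is correct and follows the same skeleton as the paper's: reduce to $x_1=x_2$ via the discrete metric and the restriction $\delta<1$, bound the variance by a second moment, apply Cauchy--Schwarz, and invoke Assumption \ref{as:px} for uniformity over $\mathcal P$. Where you genuinely depart is in the moment computation, and your version is both sharper and more careful. The paper bounds $\mathrm{Var}(X)\le E[X^2]+E[X]^2$ and then crudely estimates $\|I_{\bar x}(z)\|^2\le |Y|^2$, arriving at $C=\sqrt2\,|Y|$; moreover, its displays silently drop the $1/p_x$ normalization in the definition of $f_{b,x}(z)=b^{\top}I_x(z)/p_x$, so strictly speaking the paper's constant should carry an extra factor $\underline\zeta^{-1}$ (harmless for the lemma, since $\underline\zeta$ is uniform over $\mathcal P$, but a genuine slip in the write-up). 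You instead exploit the exclusivity of the outcome indicators---at most one coordinate of $I_x(z)$ is nonzero, so $\|I_x(z)\|^2=1\{x_\ell=x\}$ and $E[\|I_x(z)\|^2]=p_x$---which lets the factor $p_x$ in the numerator cancel one power of $p_x^{-2}$, yielding $\rho_P^2\le\|b_1-b_2\|^2/p_x$ and the dimension-free constant $C=\underline\zeta^{-1/2}$. What your route buys is a tighter, $|Y|$-independent constant and a correct treatment of the $p_x$ factor; what the paper's route buys is nothing beyond brevity, since your bound $\mathrm{Var}(X)\le E[X^2]$ already subsumes their looser $E[X^2]+E[X]^2$ step. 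Both arguments deliver the uniformity in $P$ for the same reason: the only $P$-dependent quantity, $p_x$, is bounded below by the single constant $\underline\zeta$ built into the definition of $\mathcal P$.
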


\begin{proof}
Let $0<\delta<1.$	Suppose that $\rho\big((b_1,x_1),(b_2,x_2)\big)\le \delta$. Since $d$ is the discrete metric, it must be the case that $\|b_1-b_2\|\le \delta$ and $x_1=x_2=\bar x$ for some $\bar x\in X$. By elementary calculation, 
\begin{align}
	\text{Var}\big(f_{b_1,x_1}(z)-f_{b_2,x_2}(z)\big)\le E[|f_{b_1,x_1}(z)-f_{b_2,x_2}(z)|^2]+E[f_{b_1,x_1}(z)-f_{b_2,x_2}(z)]^2\label{eq:var_semi1}
\end{align}
Observe that
\begin{align}
E[|f_{b_1,x_1}(z)-f_{b_2,x_2}(z)|^2]= E\big[\big((b_1-b_2)'I_{\bar x}(z)\big)^2\big]	\le E[\|b_1-b_2\|^2 \|I_{\bar x}(z)\|^2]\le |Y|^2\|b_1-b_2\|^2.
\end{align}
Note also that
\begin{align}
|E[f_{b_1,x_1}(z)-f_{b_2,x_2}(z)]|&=|	(b_1-b_2)'E[I_{\bar x}(z)]|\\
&\le \|b_1-b_2\|\|E[I_{\bar x}(z)]\|\le |Y|\|b_1-b_2\| ,
\end{align}
where the second inequality is by the Cauchy-Schwarz inequality. Hence, the second term in \eqref{eq:var_semi1} is bounded by $|Y|^2\|b_1-b_2\|^2$. These results imply 
\begin{align}
\text{Var}\big(f_{b_1,x_1}(z)-f_{b_2,x_2}(z)\big)^{1/2}\le \sqrt 2|Y|\|b_1-b_2\|\le C\delta,	
\end{align}
where $C=\sqrt 2|Y|.$ The conclusion of the lemma then follows.
\end{proof}

Below, for any $f\in \ell^\infty(K\times X)$, we write $Pf$ to denote its expectation with respect to $P$. Note that we may write the bootstrapped empirical process $\mathbb G^*_n$ as
\begin{align}
    \mathbb G^*_n(b,x)=\frac{1}{\sqrt n}\sum_{\ell=1}^n(M_{n\ell}-1)\frac{b^{\top}I_x(z_\ell)}{p_x},
\end{align}
where, for each $\ell$, $M_{n\ell}$ is the number of times that $z_\ell=(y_\ell,x_\ell)$ is redrawn from the original sample. We let $P_{M_n}(\cdot|\{y_\ell,x_\ell\}_{\ell=1}^\infty)$ be the conditional law of $M_n=(M_{n1},\dots, M_{nn})'$ conditional on the sample path $\{y_\ell,x_\ell\}_{\ell=1}^\infty$ (see \cite{van_der_Vaart_1996} Ch. 3.6).

\begin{lemma}\label{lem:boot_se}
Suppose Assumptions \ref{as:primitives}-\ref{as:sampling}, and \ref{as:px} hold.
Then, for any $\epsilon>0$ and $\eta>0$, there exists $\delta>0$ and $N_{\epsilon,\eta}$ such that
\begin{align}
P_{M_n}\Big( \sup_{\rho_P((b_1,x_1),(b_2,x_2))\le \delta}\big|\mathbb{G}_{n}^{*}(b_1,x_1)- \mathbb{G}_{n}^{*}(b_2,x_2)\Big|\ge \eta\big|\{y_\ell,x_\ell\}_{\ell=1}^\infty\Big)	\le \epsilon
\end{align}
for all $n\ge N_{\epsilon,\eta}$ and for almost all sample paths $\{y_\ell,x_\ell\}_{\ell=1}^\infty$.
\end{lemma}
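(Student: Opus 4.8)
The plan is to establish asymptotic stochastic equicontinuity of the multinomial-bootstrap empirical process $\mathbb G^*_n$ in the variance semimetric $\rho_P$, which is the remaining ingredient needed (together with conditional finite-dimensional convergence) to conclude that $\mathcal M_P$ is a uniform Donsker class and that the bootstrap invoked in the proof of Lemma \ref{lem:size_control} is valid. The structural facts already in hand are: (i) by Lemma \ref{lem:uniform_entropy}, $\mathcal M_P$ is a VC-type class with polynomial uniform entropy $\sup_Q N(\epsilon\|F\|_{L^2_Q},\mathcal M_P,L^2_Q)\le K\epsilon^{-v}$ and a \emph{constant} (hence square-integrable) envelope $F$, with $K,v$ not depending on $P$; and (ii) by Lemma \ref{lem:variance_semimetric}, the population variance semimetric $\rho_P$ is dominated by the elementary metric $\rho$ (i.e., $\rho\le\delta\Rightarrow\rho_P\le C\delta$), so that $\rho_P$-balls are controlled by Euclidean balls of comparable radius.

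First I would write the increment as a weighted empirical sum, $\mathbb G^*_n(b_1,x_1)-\mathbb G^*_n(b_2,x_2)=n^{-1/2}\sum_{\ell=1}^n(M_{n\ell}-1)\bigl(f_{b_1,x_1}(z_\ell)-f_{b_2,x_2}(z_\ell)\bigr)$, so that the supremum in the statement is the modulus of continuity of a multinomially weighted process indexed by $\mathcal M_P$. Conditionally on the sample, the centered weights $(M_{n1}-1,\dots,M_{nn}-1)$ are exchangeable and mean-zero, so a Poissonization and symmetrization argument replaces them, up to a universal multiplicative constant, by Rademacher-weighted increments; this step introduces no dependence on $P$. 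I would then apply Dudley's chaining/maximal inequality \citep{van_der_Vaart_1996} to the symmetrized conditional process in the empirical seminorm $L^2(P_n)$: the conditional expected supremum over an $L^2(P_n)$-ball of radius $\delta'$ is bounded by $\int_0^{\delta'}\sqrt{v\log(K/\epsilon)}\,d\epsilon$, which tends to $0$ as $\delta'\to0$ uniformly across discrete distributions by (i).

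The bridge between the population semimetric $\rho_P$ appearing in the statement and the random empirical seminorm $L^2(P_n)$ in which the chaining operates would be supplied by the uniform Glivenko--Cantelli property of the symmetric-difference class, which again follows from the entropy bound in Lemma \ref{lem:uniform_entropy}: for almost all sample paths and all $n$ large, $\|f_{b_1,x_1}-f_{b_2,x_2}\|_{L^2(P_n)}$ is within a fixed factor of $\rho_P((b_1,x_1),(b_2,x_2))$, so a $\rho_P$-ball of radius $\delta$ is eventually contained in an $L^2(P_n)$-ball of radius $2C\delta$. Combining the two bounds, I would choose $\delta$ small enough that the entropy integral up to $2C\delta$ falls below $\eta\epsilon$, and $N_{\epsilon,\eta}$ large enough that the Glivenko--Cantelli approximation holds; Markov's inequality applied to the conditional expectation then yields the claimed conditional probability bound $\le\epsilon$.

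The main obstacle is precisely this passage between the deterministic population semimetric $\rho_P$ and the data-dependent empirical seminorm, carried out \emph{uniformly in} $P$ and for almost every sample path: the maximal inequality naturally controls the modulus of continuity in $L^2(P_n)$, whereas the statement is phrased in $\rho_P$. Making this rigorous requires that the constants in both the symmetrization step and the chaining bound be universal, which is guaranteed only because Lemma \ref{lem:uniform_entropy} delivers an entropy bound and envelope that are free of $P$; keeping track of this $P$-uniformity, rather than any single calculation, is the delicate part.
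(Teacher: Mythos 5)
Your proposal is correct and follows essentially the same route as the paper's proof: the paper likewise reduces the modulus of continuity of $\mathbb G^*_n$ to a Rademacher-weighted process (by Markov's inequality and the symmetrization Lemma 2.3.1 of \cite{van_der_Vaart_1996}, treating the bootstrap sample as conditionally i.i.d.\ draws from $\hat P_n$ rather than via your Poissonization step), chains in the empirical seminorm $L^2_{\hat P_n}$ using the $P$-free polynomial uniform entropy and constant envelope from Lemma \ref{lem:uniform_entropy}, bridges $\rho_P$ and the empirical seminorm via Lemma \ref{lem:variance_semimetric} and the argument of Theorem 2.5.2 in \cite{van_der_Vaart_1996}, and then chooses $\delta$ so that the entropy integral bound $\tilde\delta(1-\ln(\tilde\delta/\chi))$ falls below $\eta\epsilon/2$ uniformly in $P$. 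The only differences are cosmetic: your treatment of the multinomial weights and your more explicit articulation of the $\rho_P$-to-$L^2(\hat P_n)$ passage, which the paper delegates to the cited van der Vaart--Wellner argument.
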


\begin{proof}
Observe that
\begin{align}
\mathbb{G}_{n}^{*}(b,x)=\sqrt n\Big(\frac{1}{n}\sum_{\ell=1}^n	\frac{b^{\top}I_x(z^*_\ell)}{p_x}-E[\frac{b^{\top}I_x(z_\ell)}{p_x}]\Big).
\end{align}	
Below, we mimic the argument in \cite{van_der_Vaart_1996} (Ch.2.5) to show the  stochastic equicontinuity of empirical processes. For any $\delta>0$, define
\begin{align}
\mathcal M_{P,\delta}\equiv	\Big\{ f_{b_1,x_1}(z)-f_{b_2,x_2}(z)~\big|~\rho_P((b_1,x_1),(b_2,x_2))\le \tilde\delta,b_j\in \mathbb B_{x_j}, x_j\in X, j=1,2\Big\}.
\end{align}
Let $Z^*_n(\delta)=\sup_{f\in\mathcal M_{P,\delta}}|\sqrt n(\hat P^*_n-\hat P_n)f|.$
Note that by Lemma \ref{lem:variance_semimetric},
\begin{align}
\rho\big((b_1,x_1),(b_2,x_2)\big)\le \delta~\Rightarrow~\rho_P\big((b_1,x_1),(b_2,x_2)\big)\le \tilde\delta,
\end{align}
for $\tilde\delta=C\delta$ with a uniform constant $C>0$. It then follows that
\begin{align}
P_{M_n}\Big( \sup_{\rho((b_1,x_1),(b_2,x_2))\le \delta}\big|\mathbb{G}_{n}^{*}(b_1,x_1)- \mathbb{G}_{n}^{*}(b_2,x_2)\big|\ge \eta\big|\{y_\ell,x_\ell\}_{\ell=1}^\infty\Big)\le P_{M_n}\Big(Z^*_n(\tilde\delta)>\eta\big|\{y_\ell,x_\ell\}_{\ell=1}^\infty\Big).
\end{align}
By Markov's inequality and Lemma 2.3.1 in \cite{van_der_Vaart_1996}, one has
\begin{align}
P_{M_n}\big(Z^*_n(\tilde\delta_n)>\eta\big|\{y_\ell,x_\ell\}_{\ell=1}^\infty\big)&\le \frac{2}{\eta}E_{P_{M_n}\times P^e}\left[\sup_{f\in\mathcal M_{P,\tilde\delta}}\Big|\frac{1}{\sqrt n}\sum_{\ell=1}^ne_\ell f(z^b_\ell)\Big| \Bigg|\{y_\ell,x_\ell\}_{\ell=1}^\infty\right]\\
&=\frac{2}{\eta}E_{P_{M_n}}\left[E_{P^e}\left[\sup_{f\in\mathcal M_{P,\tilde\delta}}\Big| \frac{1}{\sqrt n}\sum_{\ell=1}^ne_\ell f(z^b_\ell)\Big| \Bigg|\{z^b_i=\ell\},\{y_\ell,x_\ell\}_{\ell=1}^\infty\right]\Bigg|\{y_\ell,x_\ell\}_{\ell=1}^\infty\right],\label{eq:sr10}
\end{align}
where $\{e_\ell\}_{\ell=1}^n$ are i.i.d. Rademacher random variables independent of $\{y_\ell,x_\ell\}_{\ell=1}^\infty$ and $\{M_{n}\}$. By Hoeffding's inequality, the stochastic process $f\mapsto\{n^{-1/2}\sum_{\ell=1}^ne_if(z_\ell)\}$ is sub-Gaussian for the $L^2_{\hat P_n}$ seminorm $\|f\|_{L^2_{\hat P_n}}=(n^{-1}\sum_{\ell=1}^n f(z_\ell)^2)^{1/2}.$ By the maximal inequality (Corollary 2.2.8) and arguing as in the proof of Theorem 2.5.2 in in \cite{van_der_Vaart_1996}, one then has
\begin{align}
	E_{P^e}\left[\sup_{f\in\mathcal M_{\tilde\delta}}\Bigl\vert\frac{1}{\sqrt n}\sum_{\ell=1}^ne_\ell f(z_\ell^b)\Bigr\vert \right]&\le K\int_0^{\tilde\delta}\sqrt{\ln N(\epsilon,\mathcal M_{P,\tilde\delta},L^2_{\hat P_n})}d\epsilon\notag\\
	&\le K\|F\|_{n}\int_0^{\tilde\delta/\|F\|_{n}}\sup_Q\sqrt{\ln N(\epsilon \|F\|_{L^2_Q},\mathcal M_P,L^2_Q)}d\epsilon\notag\\
	&\le K'\|F\|_{n}\int_0^{\tilde\delta/\|F\|_{n}}\sqrt{-v\ln\epsilon}d\epsilon,\label{eq:sr6}
\end{align}
for some $K'>0$, 
where $\|F\|_n=\sqrt{\frac{1}{n}\sum_{\ell=1}F^2(z^b_\ell)}$, and the last inequality follows from Lemma \ref{lem:uniform_entropy}. Note that $\sqrt{-\ln \epsilon}\ge -\ln \epsilon$ for $0<\epsilon <\tilde\delta/\|F\|_{n}$ with $\tilde\delta$ small enough and $\int_0^{\tilde\delta/\|F\|_{n}}-\ln \epsilon d\epsilon=\tilde\delta/\|F\|_{n}(1-\ln (\tilde\delta/\|F\|_{n})).$ Furthermore, by taking $F(w)=\kappa^{-1/2}\underline{\zeta}^{-1}\sqrt{|Y|}$ (see Proof of Lemma \ref{lem:uniform_entropy}), $\|F\|_{n}=\kappa^{-1/2}\underline{\zeta}^{-1}\sqrt{|Y|}=: \chi>0$ uniformly across $P$.  In sum, by choosing $\delta$ (and hence $\tilde\delta$) small enough, one has
\begin{align}
P_{M_n}\big(Z^*_n(\tilde\delta_n)>\eta\big|\{y_\ell,x_\ell\}_{\ell=1}^\infty\big)&\le\frac{2\tilde\delta}{\eta}(1-\ln (\tilde \delta/\chi))\le \epsilon,
\end{align}
for all $n$ sufficiently large.
This establishes the claim of the lemma.
\end{proof}

\begin{lemma}\label{lem:weak_convergence}
Suppose Assumptions \ref{as:primitives}-\ref{as:px}	hold. 
For each $n$, let $A_{1n}$ and $A_{2n}$ be defined as in \eqref{eq:def_A1n} and \eqref{eq:def_A2n} respectively. Then, for any continuity point of the distribution function of $\sup_{(b,x)\in\Psi_\infty}\mathbb{G}(b,x)$,
\begin{align}
P_{n}\bigg(\Big\{\big|P_{M_n}\big(\sup_{(b,x)\in\Psi_n}\mathbb{G}_{n}^{*}(b,x)\le a\big|y^n,x^n\big)-P\big(\sup_{(b,x)\in\Psi_\infty}\mathbb{G}(b,x)\le a\big)\big|>\epsilon\Big\} \cap A_{1n}^{c}\cap A_{2n}^c\bigg)\to 0.	
\end{align}
as $n\to\infty.$
\end{lemma}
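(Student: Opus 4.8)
The plan is to split the claim into three pieces: (i) conditional weak convergence of the full bootstrap process $\mathbb{G}_n^*$ to the tight Gaussian limit $\mathbb{G}$; (ii) a continuous-mapping step for the supremum over the \emph{fixed} limit index set $\Psi_\infty$; and (iii) a control of the gap between the suprema over the random set $\Psi_n$ and the fixed set $\Psi_\infty$, which is precisely where the event $A_{2n}^c$ enters. The conditioning on $A_{1n}^c$ plays no role here; it is inherited only from the parent Lemma~\ref{lem:size_control}, where it serves to force $\varphi^*\equiv 0$ on $\Psi_n$.

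First I would establish that, conditionally on the sample, $\mathbb{G}_n^*\rightsquigarrow\mathbb{G}$ in $\ell^\infty(K\times X)$ in outer probability under any $\{P_n\}\subset\mathcal P$. Lemma~\ref{lem:uniform_entropy} supplies a polynomial uniform-entropy bound for $\mathcal M_P$ together with a bounded (hence square-integrable) envelope $F$, so $\mathcal M_P$ is a uniform Donsker class; the exchangeable-bootstrap functional CLT (Theorem~3.6.2 in \cite{van_der_Vaart_1996}) then gives $\sup_{h\in BL_1}|E_{M_n}[h(\mathbb{G}_n^*)\mid y^n,x^n]-E[h(\mathbb{G})]|\stackrel{P_n}{\to}0$. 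Since $f\mapsto\sup_{(b,x)\in\Psi_\infty}f(b,x)$ is $1$-Lipschitz for the sup norm and $\Psi_\infty$ is fixed, the bootstrap continuous-mapping theorem yields that the conditional law of $\sup_{\Psi_\infty}\mathbb{G}_n^*$ converges to that of $\sup_{\Psi_\infty}\mathbb{G}$; at the continuity point $a$ this reads $P_{M_n}(\sup_{\Psi_\infty}\mathbb{G}_n^*\le a\mid y^n,x^n)\stackrel{P_n}{\to}P(\sup_{\Psi_\infty}\mathbb{G}\le a)$.

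Second (the main obstacle), I would bound $\Delta_n\equiv|\sup_{\Psi_n}\mathbb{G}_n^*-\sup_{\Psi_\infty}\mathbb{G}_n^*|$ on $A_{2n}^c$. Both sets equal $\{(b,x):b\in C_x\}$ intersected with $\balln$, resp.\ $\ball$, where $C_x=\{b:\pi(b,x)=0\}$ is a convex cone containing the origin: each $\tau_n^{-1}\sqrt n(b^\top P_{n,y\mid x}-h(b,Q^{BCE}_{\theta_n,S^r}(x)))$ is concave and positively homogeneous in $b$, so its pointwise limit $\pi(\cdot,x)\le 0$ is too, whence $\{\pi(\cdot,x)=0\}=\{\pi(\cdot,x)\ge 0\}$ is a convex cone. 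On $A_{2n}^c$ one has $\sup_{x}d_H(\balln,\ball)\le\delta$; since both balls are ellipsoids centred at the origin with inradius bounded below (Assumption~\ref{as:Wn}(ii)) and $C_x$ passes through their interiors, the intersection is non-tangential and $d_H(C_x\cap\balln,C_x\cap\ball)\le C'\delta$ for a uniform $C'$. Each $(b,x)\in\Psi_n$ can then be paired with $(b',x)\in\Psi_\infty$ having $\|b-b'\|\le C'\delta$, so $\rho((b,x),(b',x))\le C'\delta$, and Lemma~\ref{lem:variance_semimetric} upgrades this to $\rho_P((b,x),(b',x))\le C''\delta$. The stochastic equicontinuity in Lemma~\ref{lem:boot_se} then bounds $|\mathbb{G}_n^*(b,x)-\mathbb{G}_n^*(b',x)|\le\eta$ uniformly over such pairs with conditional probability at least $1-\epsilon$ for all large $n$; taking suprema in both directions gives $P_{M_n}(\Delta_n>\eta\mid y^n,x^n)\le\epsilon$ on $A_{2n}^c$. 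The delicate point is exactly keeping $b'$ inside the fixed cone $C_x$, i.e.\ ensuring intersection with $C_x$ does not inflate the Hausdorff distance; this is what the origin-centred, uniformly non-degenerate ellipsoid structure guarantees. (Alternatively, since $\mathbb{G}_n^*(\cdot,x)$ is \emph{linear} in $b$, each supremum reduces to $\bigl(\sup_{u\in C_x\cap\mathbb{S}^{|Y|-1}}\mathbb{G}_n^*(u,x)/\|u\|_{W_{n,x}}\bigr)_+$, and $\Delta_n$ is the conditionally bounded factor $\sup_u|\mathbb{G}_n^*(u,x)|$ times the vanishing perturbation $\sup_u|\,\|u\|_{W_{n,x}}^{-1}-\|u\|_{W_x}^{-1}|$, bypassing the intersection subtlety altogether.)

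Finally I would assemble the pieces. Writing $F_n(a)=P_{M_n}(\sup_{\Psi_n}\mathbb{G}_n^*\le a\mid y^n,x^n)$ and $G_n(a)=P_{M_n}(\sup_{\Psi_\infty}\mathbb{G}_n^*\le a\mid y^n,x^n)$, the bound on $\Delta_n$ gives, for any $\epsilon'>0$ and on $A_{2n}^c$,
\[
G_n(a-\epsilon')-P_{M_n}(\Delta_n>\epsilon')\;\le\;F_n(a)\;\le\;G_n(a+\epsilon')+P_{M_n}(\Delta_n>\epsilon').
\]
Letting $n\to\infty$ using Step~(i)--(ii) at the continuity points $a\pm\epsilon'$, then $\epsilon'\downarrow 0$ using continuity of the limit CDF at $a$, I obtain $F_n(a)\stackrel{P_n}{\to}P(\sup_{\Psi_\infty}\mathbb{G}\le a)$ on $A_{2n}^c$, which—since $P_n(A_{2n})\to 0$ by Lemma~\ref{lem:Hausdorff_consistency}—is exactly the stated convergence.
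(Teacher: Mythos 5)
Your proof is correct and follows essentially the same route as the paper's: bootstrap uniform Donskerness via Lemma~\ref{lem:uniform_entropy} and Theorem~3.6.2 of \cite{van_der_Vaart_1996} (with the portmanteau theorem) for the supremum over the fixed set $\Psi_\infty$, a Hausdorff-distance bound $d_H(\Psi_n,\Psi_\infty)\lesssim\delta$ on $A_{2n}^c$ combined with the stochastic equicontinuity of Lemma~\ref{lem:boot_se} to control $\bigl|\sup_{\Psi_n}\mathbb{G}_n^*-\sup_{\Psi_\infty}\mathbb{G}_n^*\bigr|$, and continuity of the limit CDF at $a$ to close the sandwich; your observation that $A_{1n}^c$ is inert here and matters only in the parent Lemma~\ref{lem:size_control} is also accurate.

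Where you go beyond the paper is the one step the paper asserts without argument: it claims $d_H(\Psi_n,\Psi_\infty)\le d_H(D_n,D)$ directly from $\Psi_n=D_n\cap\{\pi^*=0\}$ and $\Psi_\infty=D\cap\{\pi^*=0\}$, but intersecting with a fixed set does not in general preserve Hausdorff bounds. You correctly flag this and supply the missing structure: $\pi(\cdot,x)$ is a pointwise limit of concave, positively homogeneous functions bounded above by zero, so $\{b:\pi(b,x)=0\}$ is a convex cone through the origin, and since $\balln$ and $\ball$ are origin-centred ellipsoids with eigenvalues uniformly bounded by Assumption~\ref{as:Wn}(ii), a radial rescaling $b\mapsto tb$ with $t=\min\{1,1/\wnormn{b}\}$ stays in the cone and yields the uniform bound $d_H(C_x\cap\balln,\,C_x\cap\ball)\le C'\delta$. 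Your parenthetical alternative---exploiting linearity of $\mathbb{G}_n^*(\cdot,x)$ in $b$ so that each supremum reduces to a positive part of a normalized supremum over $C_x\cap\mathbb{S}^{|Y|-1}$, with the $W_{n,x}$-versus-$W_x$ normalization difference vanishing uniformly---bypasses the intersection issue entirely and is arguably the cleanest repair. One cosmetic point: the shifted thresholds $a\pm\epsilon'$ need not themselves be continuity points, but since the limit CDF has at most countably many discontinuities you can send $\epsilon'\downarrow 0$ along a sequence avoiding them (or appeal to monotonicity and continuity at $a$), exactly as the paper implicitly does with its $\eta$.
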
 
\begin{proof}
Define a metric $\rho$ on $K\times X$ by $\rho((b,x),(b^{\top},x'))=\|b-b^{\top}\|+d(x,x'),$ where $d$ is the discrete metric on $X$. We then let the Hausdorff distance on subsets of $K\times X$ be 
$$d_H(A,B)=\max\big\{\sup_{a\in A}\inf_{b\in B}\rho(a,b),\sup_{b\in B}\inf_{a\in A}\rho(a,b)\big\}.$$
For notational simplicity, we use $d_H$, the same notation as the Hausdorff distance for subsets of $K$ here.
Let 
\begin{align}
	D_n\equiv\{(b,x):b\in \balln,x\in X\},~~\text{ and }~~ D\equiv\{(b,x):b\in \ball,x\in X\}.
\end{align}
Suppose that, for any $x\in X$, $d_H(b_{n,x},b_x)<\delta$. Then, by the construction above and $d_H(X,X)=0$ due to $X$ being finite, we have $d_H(D_n,D)=\sup_{x\in X}d_H(b_{n,x},b)+d_H(X,X)<\delta$. Note that $\Psi_n$ and $\Psi_\infty$ can be expressed as subsets of $D_n$ and $D$ as follows
\begin{align}
	\Psi_n=D_n\cap\{(b,x)\in K \times X:\pi^*(b,x)=0\},~\text{ and }~\Psi_\infty=D\cap\{(b,x)\in K\times X:\pi^*(b,x)=0\}.
\end{align}
This therefore implies  $d_H(\Psi_n,\Psi_\infty)\le d_H(D_n,D)<\delta$. 

Let $(b^*_n,x^*_n)\in \text{argmax}_{(b,x)\in\Psi_n}\mathbb{G}_{n}^{*}(b,x).$ Let $\Pi_{\Psi_\infty}(b^*_n,x^*_n)$ be the projection of $(b^*_n,x^*_n)$ on $\Psi_\infty$ and note that $\|(b^*_n,x^*_n)-\Pi_{\Psi_\infty}(b^*_n,x^*_n)\|\le d_H(\Psi_n,\Psi_\infty)<\delta$. This implies
\begin{align}
\sup_{(b,x)\in\Psi_n}\mathbb{G}_{n}^{*}(b,x)-\sup_{(b,x)\in\Psi_\infty}\mathbb{G}_{n}^{*}(b,x)&\le \mathbb{G}_{n}^{*}(b^*_n,x^*_n)- \mathbb{G}_{n}^{*}(\Pi_{\Psi_\infty}(b^*_n,x^*_n))\notag\\
&\le \sup_{\rho((b,x),(b^{\top},x'))\le \delta}\big|\mathbb{G}_{n}^{*}(b,x)- \mathbb{G}_{n}^{*}(b^{\top},x')\big|
\end{align}
A similar argument gives
\begin{align}
\sup_{(b,x)\in\Psi_\infty}\mathbb{G}_{n}^{*}(b,x)- \sup_{(b,x)\in\Psi_n}\mathbb{G}_{n}^{*}(b,x)&\le	 \sup_{\rho((b,x),(b^{\top},x'))\le \delta}\big|\mathbb{G}_{n}^{*}(b,x)- \mathbb{G}_{n}^{*}(b^{\top},x')\big|.
\end{align}
Hence, for any $\eta>0$,
\begin{align}
\big|	\sup_{(b,x)\in\Psi_n}\mathbb{G}_{n}^{*}(b,x)-\sup_{(b,x)\in\Psi_\infty}\mathbb{G}_{n}^{*}(b,x)\big|\ge \eta~\Rightarrow  \sup_{\rho((b,x),(b^{\top},x'))\le \delta}\big|\mathbb{G}_{n}^{*}(b,x)- \mathbb{G}_{n}^{*}(b^{\top},x')\big|\ge \eta.\label{eq:se1}
\end{align}
Now suppose $(y^n,x^n)\in A_{1n}^{c}\cap A_{2n}^c$, where $A_{1n}$ and $A_{2n}$ are defined as in \eqref{eq:def_A1n} and \eqref{eq:def_A2n} respectively. Then, for any $\eta>0$, there is $\delta>0$ such that
\begin{align}
P_{M_n}&\big(\sup_{(b,x)\in\Psi_n}\mathbb{G}_{n}^{*}(b,x)\le a\big|y^n,x^n\big)\notag\\
&\le P\Big(\big|	\sup_{(b,x)\in\Psi_n}\mathbb{G}_{n}^{*}(b,x)-\sup_{(b,x)\in\Psi_\infty}\mathbb{G}_{n}^{*}(b,x)\big|\ge \eta\big|y^n,x^n\Big)
+ P_{M_n}\big(\sup_{(b,x)\in\Psi_\infty}\mathbb{G}_{n}^{*}(b,x)\le a+\eta\big|y^n,x^n\big)\notag\\
&\le P_{M_n}\big( \sup_{\rho((b,x),(b^{\top},x'))\le \delta}\big|\mathbb{G}_{n}^{*}(b,x)- \mathbb{G}_{n}^{*}(b^{\top},x')\big|\ge \eta\big|y^n,x^n\big)+P_{M_n}\big(\sup_{(b,x)\in\Psi_\infty}\mathbb{G}_{n}^{*}(b,x)\le a+\eta\big|y^n,x^n\big)\notag\\
&\le \frac{\epsilon}{3}+P_{M_n}\big(\sup_{(b,x)\in\Psi_\infty}\mathbb{G}_{n}^{*}(b,x)\le a+\eta\big|y^n,x^n\big),\label{eq:bound1}
\end{align}
for all $n$ sufficiently large, where the last inequality follows from \eqref{eq:se1} and Lemma \ref{lem:boot_se}. Therefore, by \eqref{eq:bound1} and the triangle inequality,
\begin{multline}
\Big|P_{M_n}\big(\sup_{(b,x)\in\Psi_n}\mathbb{G}_{n}^{*}(b,x)\le a\big|y^n,x^n\big)-P\big(\sup_{(b,x)\in\Psi_\infty}\mathbb{G}(b,x)\le a\big)\Big|\\
\le \frac{\epsilon}{3}+\Big|P_{M_n}\big(\sup_{(b,x)\in\Psi_\infty}\mathbb{G}_{n}^{*}(b,x)\le a+\eta\big|y^n,x^n\big)-P\big(\sup_{(b,x)\in\Psi_\infty}\mathbb{G}(b,x)\le a+\eta\big)\Big|\\
+ \Big|P\big(\sup_{(b,x)\in\Psi_\infty}\mathbb{G}(b,x)\le a+\eta\big)-P\big(\sup_{(b,x)\in\Psi_\infty}\mathbb{G}(b,x)\le a\big)\Big|\le \epsilon,\label{eq:bound2}
\end{multline}
for any $\eta>0$ and for all $n$ sufficiently large, where the last inequality follows from 
\begin{align}
	\Big|P_{M_n}\big(\sup_{(b,x)\in\Psi_\infty}\mathbb{G}_{n}^{*}(b,x)\le a+\eta\big|y^n,x^n\big)-P\big(\sup_{(b,x)\in\Psi_\infty}\mathbb{G}(b,x)\le a+\eta\big)\Big|\le \frac{\epsilon}{3},
\end{align}
for all $n$ sufficiently large by Theorem 3.6.2 in \cite{van_der_Vaart_1996} and the portmanteau theorem. Finally, since $a$ is a continuity point, one can choose $\eta$ sufficinetly small sothat
\begin{align}
	\Big|P\big(\sup_{(b,x)\in\Psi_\infty}\mathbb{G}(b,x)\le a+\eta\big)-P\big(\sup_{(b,x)\in\Psi_\infty}\mathbb{G}(b,x)\le a\big)\Big|\le \frac{\epsilon}{3}.
\end{align}
Hence,  \eqref{eq:bound2} establishes the claim of the lemma.
\end{proof}

\section{Details on the Monte Carlo Experiments}\label{sec:Appendix_MC}
\subsection{BNE Threshold}
Recall that $\sv(\tau)=P(\varepsilon_i\ge \tau)$.  
The threshold in the equilibrium strategy in \eqref{eq:BNE_threshold} solves
\begin{align}
\tau_i(\nu_i,t_i)
=
-\Bigl(
x^{\prime}\beta
+
\Delta
E_{\nu_{-i},t_{-i}}
\bigl[
\sv(\tau_{-i}(\nu_{-i},t_{-i}))
\mid
\nu_i,t_i
\bigr]
+
\nu_i
\Bigr),
\end{align}
where
\begin{multline}
E_{\nu_{-i},t_{-i}}
\bigl[
\sv(\tau_{-i}(\nu_{-i},t_{-i}))
\mid
t_i,\nu_i
\bigr]
\\=
\xi
E_{\nu_{-i}}
\bigl[
\sv(\tau_{-i}(\nu_{-i},\nu_i))
\mid
t_i
\bigr]
+
(1-\xi)
E_{\nu_{-i}}
\bigl[
\sv(\tau_{-i}(\nu_{-i},-\nu_i))
\mid
t_i
\bigr].
\end{multline}
Moreover,
\begin{align*}
E_{\nu_{-i}}
\bigl[
\sv(\tau_{-i}(\nu_{-i},\nu_i))
\mid
t_i=\eta
\bigr]
&=
\rho_{\eta}(\mu,\xi)\sv(\tau_{-i}(\eta,\nu_i))
+
\bigl(1-\rho_{\eta}(\mu,\xi)\bigr)\sv(\tau_{-i}(-\eta,\nu_i)),
\\
E_{\nu_{-i}}
\bigl[
\sv(\tau_{-i}(\nu_{-i},\nu_i))
\mid
t_i=-\eta
\bigr]
&=
\rho_{-\eta}(\mu,\xi)\sv(\tau_{-i}(-\eta,\nu_i))
+
\bigl(1-\rho_{-\eta}(\mu,\xi)\bigr)\sv(\tau_{-i}(\eta,\nu_i)).
\end{align*}
Hence, we obtain a system of equations that can be solved simultaneously for all thresholds $\tau_i(\nu_i,t_i)$. In particular,
\begin{align*}
\tau_i(\nu_i,\eta)
&=
-\Bigl(
x^{\prime}\beta
+
\Delta \xi
\bigl[
\rho_{\eta}(\mu,\xi)\sv(\tau_{-i}(\eta,\nu_i))
+
(1-\rho_{\eta}(\mu,\xi))\sv(\tau_{-i}(-\eta,\nu_i))
\bigr]
\\[-0.2em]
&\qquad
+
\Delta(1-\xi)
\bigl[
\rho_{\eta}(\mu,\xi)\sv(\tau_{-i}(\eta,-\nu_i))
+
(1-\rho_{\eta}(\mu,\xi))\sv(\tau_{-i}(-\eta,-\nu_i))
\bigr]
+
\nu_i
\Bigr),
\\[0.5em]
\tau_i(\nu_i,-\eta)
&=
-\Bigl(
x^{\prime}\beta
+
\Delta \xi
\bigl[
\rho_{-\eta}(\mu,\xi)\sv(\tau_{-i}(-\eta,\nu_i))
+
(1-\rho_{-\eta}(\mu,\xi))\sv(\tau_{-i}(\eta,\nu_i))
\bigr]
\\[-0.2em]
&\qquad
+
\Delta(1-\xi)
\bigl[
\rho_{-\eta}(\mu,\xi)\sv(\tau_{-i}(-\eta,-\nu_i))
+
(1-\rho_{-\eta}(\mu,\xi))\sv(\tau_{-i}(\eta,-\nu_i))
\bigr]
+
\nu_i
\Bigr).
\end{align*}

For a symmetric equilibrium, where thresholds are identical across firms, the system reduces to
\begin{align*}
\tau(\eta,\eta)
&=
-\Bigl(
x^{\prime}\beta
+
\Delta \xi
\bigl[
\rho_{\eta}(\mu,\xi)\sv(\tau(\eta,\eta))
+
(1-\rho_{\eta}(\mu,\xi))\sv(\tau(-\eta,\eta))
\bigr]
\\[-0.2em]
&\qquad
+
\Delta(1-\xi)
\bigl[
\rho_{\eta}(\mu,\xi)\sv(\tau(\eta,-\eta))
+
(1-\rho_{\eta}(\mu,\xi))\sv(\tau(-\eta,-\eta))
\bigr]
+
\eta
\Bigr),
\\[0.5em]
\tau(-\eta,\eta)
&=
-\Bigl(
x^{\prime}\beta
+
\Delta \xi
\bigl[
\rho_{\eta}(\mu,\xi)\sv(\tau(\eta,-\eta))
+
(1-\rho_{\eta}(\mu,\xi))\sv(\tau(-\eta,-\eta))
\bigr]
\\[-0.2em]
&\qquad
+
\Delta(1-\xi)
\bigl[
\rho_{\eta}(\mu,\xi)\sv(\tau(\eta,\eta))
+
(1-\rho_{\eta}(\mu,\xi))\sv(\tau(-\eta,\eta))
\bigr]
-
\eta
\Bigr),
\\[0.5em]
\tau(\eta,-\eta)
&=
-\Bigl(
x^{\prime}\beta
+
\Delta \xi
\bigl[
\rho_{-\eta}(\mu,\xi)\sv(\tau(-\eta,\eta))
+
(1-\rho_{-\eta}(\mu,\xi))\sv(\tau(\eta,\eta))
\bigr]
\\[-0.2em]
&\qquad
+
\Delta(1-\xi)
\bigl[
\rho_{-\eta}(\mu,\xi)\sv(\tau(-\eta,-\eta))
+
(1-\rho_{-\eta}(\mu,\xi))\sv(\tau(\eta,-\eta))
\bigr]
+
\eta
\Bigr),
\\[0.5em]
\tau(-\eta,-\eta)
&=
-\Bigl(
x^{\prime}\beta
+
\Delta \xi
\bigl[
\rho_{-\eta}(\mu,\xi)\sv(\tau(-\eta,-\eta))
+
(1-\rho_{-\eta}(\mu,\xi))\sv(\tau(\eta,-\eta))
\bigr]
\\[-0.2em]
&\qquad
+
\Delta(1-\xi)
\bigl[
\rho_{-\eta}(\mu,\xi)\sv(\tau(-\eta,\eta))
+
(1-\rho_{-\eta}(\mu,\xi))\sv(\tau(\eta,\eta))
\bigr]
-
\eta
\Bigr).
\end{align*}
\end{document}